\documentclass[compsoc, conference, a4paper, 10pt, times]{IEEEtran}

\usepackage[utf8]{inputenc}
\usepackage{color}
\usepackage[usenames,dvipsnames]{xcolor}

\usepackage{centernot}
\usepackage{hhline}
\usepackage{mdframed}

\usepackage{amsthm}
\usepackage{bm} 

\usepackage[scr=boondoxo]{mathalfa} 

\usepackage{siunitx}
\usepackage{textcomp}
\usepackage{caption}
\usepackage{pst-func}
\usepackage{amssymb}
\usepackage{graphicx}
\usepackage{paralist}
\usepackage{environ}
\usepackage{pgfplots}
\usepackage{tikz}
\usetikzlibrary{shadows}
\usepackage{mathtools}
\usepackage{array,multirow}
\usepackage{algorithm}
\usepackage[hyphens]{url}
\usepackage{cite}
\usepackage{pgfplotstable}
\usepackage{threeparttable}
\usepackage{extarrows}

\usepackage{etoolbox}
\usepackage{arydshln}

\usepackage{xspace} 
 
\usepackage{dirtytalk} 
\usepackage{stmaryrd} 

\usepackage[inline,shortlabels]{enumitem} 
\newlist{inlinelist}{enumerate*}{1}
\setlist*[inlinelist,1]{%
  label=(\roman*),
}

\usepackage{xifthen}  
\usepackage{ifthen}

\usepackage{nicefrac}

\usepackage[hidelinks]{hyperref}
\usepackage{cleveref}
\usepackage{bigints}

\usetikzlibrary{pgfplots.dateplot}
\usetikzlibrary{matrix,chains,positioning,decorations.pathreplacing}
\usetikzlibrary{patterns,calc,fit,arrows}
\usetikzlibrary{shapes.geometric}

\usepackage[final,nomargin,inline,index]{fixme} 
\fxusetheme{color}

\FXRegisterAuthor{bart}{anbart}{\color{magenta} {\underline{bart}}}
\FXRegisterAuthor{ric}{anric}{\color{red} {\underline{ric}}}
\FXRegisterAuthor{zun}{anzun}{\color{OliveGreen} {\underline{zun}}}


\hypersetup{
  breaklinks   = true,
  colorlinks   = true, 
  urlcolor     = blue, 
  linkcolor    = blue, 
  citecolor    = red   
}
\hypersetup{final} 

\usepackage{listings}
\definecolor{listingBG}{HTML}{FFFFCB}%
\definecolor{listingFrame}{HTML}{BBBB98}%
\definecolor{listingLineno}{rgb}{0.5,0.5,1.0}%

\usepackage{tcolorbox}


\definecolor{LightGrey}{rgb}{0.975,0.975,0.975}


\lstset{
	nolol=true,
	breaklines=true,
	xleftmargin=3pt,
	xrightmargin=3pt,
	framexleftmargin=5pt,
	framextopmargin=2pt,
	framexbottommargin=2pt, 
	showstringspaces=false,
	basicstyle=\fontseries{m}\scriptsize\ttfamily,
        numberstyle=\fontsize{4}{4}\color{NavyBlue}\ttfamily,
        stringstyle=\color{RoyalBlue}
}

\lstdefinelanguage{illum}{
	commentstyle=\color{Gray},
	morecomment=[l]{//},
	morecomment=[s]{/*}{*/},
	classoffset=0,
        escapechar=\$,
        tabsize=2,
	morekeywords={precond_wallet,after,precond_if,auth,process},
	keywordstyle=\color{Plum}\bfseries,
	classoffset=1,
	morekeywords={clause,call,send},
	keywordstyle=\color{NavyBlue}\bfseries,
	classoffset=2,
	morekeywords={int,string,bool,address,uint,mapping},
	keywordstyle=\color{MidnightBlue}\bfseries,
	basicstyle=\fontseries{m}\normalsize\ttfamily
	\lst@ifdisplaystyle\scriptsize\fi,
}

\lstdefinelanguage{hellum}{
	commentstyle=\color{Gray},
	morecomment=[l]{//},
	morecomment=[s]{/*}{*/},
	classoffset=0,
        escapechar=\$,
        tabsize=2,
        literate={\ \ }{{\ }}1,
	morekeywords={contract,constructor,function,if,else,require},
	keywordstyle=\color{red},
	classoffset=1,
	morekeywords={balance,after,input,next,auth,transfer,view},
	keywordstyle=\color{red},
	classoffset=2,
	morekeywords={int,string,bool,address,uint,mapping},
	keywordstyle=\color{blue},
 	classoffset=3,
  	keywordstyle=\color{Plum}\bfseries,
	basicstyle=\fontseries{m}\normalsize\ttfamily
        \lst@ifdisplaystyle\scriptsize\fi,
}


\newcommand{\ifempty}[3]{%
  \ifthenelse{\isempty{#1}}{#2}{#3}%
}

\newcommand{\ifdots}[3]{%
  \ifthenelse{\equal{#1}{...}}{#2}{#3}%
}

\newcommand{\hidden}[1]{}

\newcommand{\change}[2][]{#2}

\newcommand{\keyterm}[1]{\textbf{\emph{#1}}}%

\newcommand{\illum}{\textsc{Illum}\xspace}
\newcommand{\hellum}{\textsc{HeLLUM}\xspace}
\newcommand{\langname}{\illum}

\newcommand{\illinline}[1]{\mbox{\lstinline[language=illum]{#1}}}
\newcommand{\hllinline}[1]{\mbox{\lstinline[language=hellum]{#1}}}
\newcommand{\hllfunction}[1]{\mbox{\lstinline[language=hellum,keywordstyle=\color{Plum}\bfseries]{#1}}}


\renewcommand{\vec}[1]{\boldsymbol{#1}}

\newcommand{\Real}[1]{\mathrm{Real}}

\newcommand{\codefont}{\fontsize{10}{10}\selectfont}
\newcommand{\code}[1]{{\tt\codefont {#1}}}

\newcommand{\hlltxColor}{\color{Plum}}
\newcommand{\contract}[1]{{\tt\codefont{\hlltxColor{#1}}}}
\newcommand{\txcode}[1]{{\text{\tt\codefont{\hlltxColor{#1}}}}}



\newcommand{\eg}{e.g.\@\xspace}
\newcommand{\ie}{i.e.\@\xspace}

\def\negcaptionspace{\vspace{-10pt}}


\theoremstyle{plain}
\newtheorem{thm}{Theorem}
\newtheorem{lem}[thm]{Lemma}
\newtheorem{prop}[thm]{Proposition}

\theoremstyle{definition}
\newtheorem{defn}{Definition}

%
  {}



\newcommand{\sig}[3][]{\mathit{sig}^{#1}_{#2}\ifempty{#3}{}{({#3})}}

\newcommand{\key}[2]{\ensuremath{K_{#1}\ifempty{#2}{}{({#2})}}}
\newcommand{\pubkey}[2]{\ensuremath{K^{p}_{#1}\ifempty{#2}{}{({#2})}}}
\newcommand{\privkey}[2]{\ensuremath{K^{s}_{#1}\ifempty{#2}{}{({#2})}}}

\newcommand{\BTC}{\textup{%
  \leavevmode
  \vtop{\offinterlineskip 
    \setbox0=\hbox{B}%
    \setbox2=\hbox to\wd0{\hfil\hskip-.03em
    \vrule height .3ex width .15ex\hskip .08em
    \vrule height .3ex width .15ex\hfil}
    \vbox{\copy2\box0}\box2}}\xspace}


\def\pmvColor{\color{ForestGreen}}
\newcommand{\pmvFmt}[1]{{\pmvColor{\sf #1}}}

\newcommand{\Part}{\pmvFmt{Part}\xspace} 
\newcommand{\PartT}{\pmvFmt{{Hon}}\xspace} 

\newcommand{\pmv}[2][]{\pmvFmt{#2}_{\pmvColor{#1}}\xspace}
\newcommand{\pmvA}[1][]{\pmv[{#1}]{A}}
\newcommand{\pmvB}[1][]{\pmv[{#1}]{B}}
\newcommand{\pmvC}[1][]{\pmv[{#1}]{C}}

\newcommand{\Adv}{\pmv{Adv}} 






\def\txColor{\color{MidnightBlue}}
\def\fieldColor{\color{Plum}}
\newcommand{\txFmt}[1]{{\txColor{\sf #1}}}

\newcommand{\tx}[2][]{\txFmt{#2}_{\txColor{#1}}}
\newcommand{\txT}[1][]{\tx[#1]{T}}

\newcommand{\txTi}[1][]{\txFmt{T'_{\txColor{{\it #1}}}}}
\newcommand{\txTii}[1][]{\txFmt{T''_{\txColor{{\it #1}}}}}

\newcommand{\txTag}[3][]{{\fieldColor\sf #3}\ifempty{#1}{\ifempty{#2}{}{: {#2}}}{({#1})\ifempty{#2}{}{: {#2}}}}

\newcommand{\txIn}[2][]{\txTag[{#1}]{#2}{in}}
\newcommand{\txWit}[2][]{\txTag[{#1}]{#2}{wit}}
\newcommand{\txOut}[2][]{\txTag[{#1}]{#2}{out}}
\newcommand{\txAfterAbs}[2][]{\txTag[{#1}]{#2}{absLock}}
\newcommand{\txAfterRel}[2][]{\txTag[{#1}]{#2}{relLock}}
\newcommand{\txf}{\txTag{}{f}} 
\newcommand{\txarg}[1][]{\txTag{}{arg_{#1}}}
\newcommand{\txscript}{\txTag{}{scr}}
\newcommand{\txval}{\txTag{}{val}}


\DeclareMathAlphabet{\mathbfsf}{\encodingdefault}{\sfdefault}{bx}{n}

\newcommand{\bcB}[1][]{{\mathbfsf{\txColor{B}}}_{\txColor{#1}}}




\newcommand{\eqdef}{\triangleq}

\newcommand{\mmid}{\,\|\,}

\newcommand{\irule}[2]{\dfrac{#1}{#2}}

\newcommand{\bnfdef}{::=}
\newcommand{\bnfmid}{\;|\;}

\newcommand{\sem}[2][]{\mbox{\ensuremath{\llbracket{#2}\rrbracket_{#1}}}}

\newcommand{\ran}[1]{\operatorname{ran} {#1}}

\newcommand{\Nat}{\mathbb{N}}

\newcommand{\setenum}[1]{\{#1\}}

\newcommand{\seqat}[2]{{#1}.{#2}}




\definecolor{LightGrey}{rgb}{0.95,0.95,0.95}
\definecolor{keyword}{HTML}{7F0055}


\def\tokColor{\color{Orange}}
\newcommand{\tokFmt}[1]{{ \mathrm{\tokColor{#1}} }}
\newcommand{\tok}[2][]{\tokFmt{#2}_{\tokColor{#1}}\xspace}
\newcommand{\tokT}[1][]{\tok[{#1}]{T}}



\newlength\replength
\newcommand\repfrac{.1}

\setlength\replength{2.5pt}
\newcommand\rulewidth{.6pt}
\newcommand\tdashfill[1][\repfrac]{\cleaders\hbox to \replength{%
  \smash{\rule[\arraystretch\ht\strutbox]{\repfrac\replength}{\rulewidth}}}\hfill}

\newcommand\tdotfill[1][\repfrac]{\cleaders\hbox to \replength{%
  \smash{\raisebox{\arraystretch\dimexpr\ht\strutbox-.1ex\relax}{.}}}\hfill}


\newcommand{\var}[2][]{#2_{#1}} 

\newcommand{\varY}[1][]{\var[#1]{y}}





\newcommand{\const}[2][]{#2_{#1}} 

\newcommand{\constPK}[1][]{\const[{#1}]{pk}}


\newcommand{\val}[2][]{#2_{#1}} 
\newcommand{\valV}[1][]{\val[#1]{v}}
\newcommand{\valVi}[1][]{\val[#1]{v'}}



\newcommand{\versigName}{{\sf versig}}
\newcommand{\versig}[2]{\versigName({#1},{#2})}
\newcommand{\rtx}{{\sf rtx}}

\newcommand{\afterAbs}[2]{{\sf absAfter}~{#1}:{#2}}
\newcommand{\afterRel}[2]{{\sf relAfter}~{#1}:{#2}}

\newcommand{\notE}[1]{{\sf not}~{#1}}

\newcommand{\ifE}[3]{\mathsf{if}~{#1}~\mathsf{then}~{#2}~\mathsf{else}~{#3}}

\newcommand{\andE}{~{\sf and}~}
\newcommand{\orE}{~{\sf or}~}



\newcommand{\txo}{{\fieldColor{\sf o}}} 

\newcommand{\txof}[2]{{#1}.{#2}}
\newcommand{\ctxo}[1]{{\sf ctxo}\ifempty{#1}{}{.{#1}}}

\newcommand{\rtxo}[2]{{\sf rtxo}({#2})\ifempty{#1}{}{.{#1}}}

\newcommand{\inidx}{{\sf inidx}}

\newcommand{\inlen}[1]{{\sf inlen({#1})}}
\newcommand{\outlen}[1]{{\sf outlen({#1})}}

\newcommand{\verscript}[2]{{\sf verscr}\ifempty{#1}{}{({#1},{#2})}}
\newcommand{\verrec}[1]{{\sf verrec}\ifempty{#1}{}{({#1})}}

\newcommand{\utxo}[1]{\mathit{UTXO}({#1})}




\newcommand{\true}{\mathit{true}}
\newcommand{\false}{\mathit{false}}



\def\contrColor{\color{RubineRed}}
\newcommand{\contrFmt}[1]{{\contrColor{\it #1}}}
\newcommand{\contrC}[1][]{\mathord{\contrFmt{C}_{\contrColor{#1}}}}
\newcommand{\contrCi}[1][]{\mathord{\contrC[#1]\contrColor{'}}}

\newcommand{\contrD}[1][]{\mathord{\contrFmt{D}_{\contrColor{#1}}}}
\newcommand{\contrDi}[1][]{\mathord{\contrD[#1]\contrColor{'}}}
\newcommand{\contrDii}[1][]{\mathord{\contrD[#1]\contrColor{''}}}





\newcommand{\contrAdv}[2]{\setenum{#1}{#2}}  
\newcommand{\contrAdvC}[2]{\mathcal{C}} 






\newcommand{\wherename}{\textup{\texttt{if}}}
\newcommand{\elsename}{\textup{\texttt{else}}}


\newcommand{\splitB}[2]{{#1} \rightarrow {#2}}


\newcommand{\withdrawname}{\textup{\texttt{send}}}
\newcommand{\withdrawC}[1]{\withdrawname\ifempty{#1}{}{\; {#1}}}


\newcommand{\authC}[2]{{\pmv{#1}}\,\textup{\texttt{:}}\,{#2}}


\newcommand{\afterName}{\texttt{after}}
\newcommand{\afterRelName}{\texttt{afterRel}}
\newcommand{\afterC}[2]{\textup{\afterName}\,{#1}\,\textup{\texttt{:}}\,{#2}}
\newcommand{\afterRelC}[2]{\textup{\afterRelName}\,{#1}\,\textup{\texttt{:}}\,{#2}}



\newcommand{\cVar}[2][]{\texttt{\textup{#2}}_{#1}}
\newcommand{\cVarX}[1][]{\cVar[#1]{X}}
\newcommand{\cVarY}[1][]{\cVar[#1]{Y}}



\newcommand{\sexp}[1][]{\mathcal{E}_{#1}}




\newcommand{\ncadv}[1]{\textup{\texttt{call}}\ifempty{#1}{}{\;}{#1}}



\newcommand{\confG}[1][]{\Gamma_{#1}}
\newcommand{\confGi}[1][]{\Gamma'_{#1}}
\newcommand{\confD}[1][]{\Delta_{#1}}

\newcommand{\confContr}[3][]{\langle {#2}, {#3} \rangle_{#1}}
\newcommand{\confDep}[3][]{\langle {#2}, {#3} \rangle_{#1}}
\newcommand{\confAuth}[2]{{#1} [{#2}]}



\newcommand{\authBranch}[2]{{#1} \rhd {#2}}









\def\compileColor{\color{Plum}}
\newcommand{\compileFmt}[1]{{\compileColor{\mathbf{#1}}}}

\newcommand{\txMap}{\compileFmt{txout}}

\newcommand{\compilename}{\BTC}

\newcommand{\compile}[1]{\compileFmt{\compilename_{adv}}\ifempty{#1}{}{({#1})}}




\newcommand{\labS}[1][]{\alpha_{#1}}        
\newcommand{\labSi}[1][]{\alpha'_{#1}}      
\newcommand{\labAdv}{\lambda^{\! s}}        
\newcommand{\LabS}{\Lambda^{\! s}}          


\newcommand{\nonce}[1]{r_{#1}}   
\newcommand{\rndR}{r}            




\newcommand{\stratS}[1]{\Sigma_{#1}^{\it s}}
\newcommand{\stratC}[1]{\Sigma_{#1}^{\it c}}

\newcommand{\stratSSet}{\mathbf{\Sigma}^{\it s}} 
\newcommand{\stratCSet}{\mathbf{\Sigma}^{\it c}} 

\newcommand{\stratMap}{\aleph}

\newcommand{\coher}[5]{\mathit{coher}({#1},{#2},{#3},{#4},{#5})}
\newcommand{\coherRel}[3]{{#1} \sim_{#3} {#2}}
\newcommand{\txMapC}{\mathit{txout}}

\newcommand{\txMapCi}{\mathit{txout}'}

\newenvironment{nscenter}
 {\parskip=0pt\par\nopagebreak\centering}
 {\parskip=2pt\par\noindent} 

\newcommand{\commaCode}{\textup{\texttt{,}}}

\newcommand{\nexp}[1][]{{\pmvColor\mathcal{N}_{{#1}}}}

\newcommand{\stPar}[1][]{\mathit{In}_{#1}  }
\newcommand{\dinPar}[1][]{\mathit{Ex}_{#1} }
\newcommand{\stVal}[1][]{\mathrm{\overline{In}}_{#1}  } 
\newcommand{\dinVal}[1][]{\mathrm{\overline{Ex}}_{#1} }

\newcommand{\clauseDef}[3]{{#1}(\vec{{#2}} \textup{\texttt{;}}  \vec{{#3}} )}
\newcommand{\clauseDefX}[3][]{\clauseDef{\cVarX[{#1}]}{{#2}}{{#3}}}
\newcommand{\clauseDefY}[3][]{\clauseDef{\cVarY[{#1}]}{{#2}}{{#3}}}
\newcommand{\clauseDefNonVector}[3]{{#1}({#2} \textup{\texttt{;}}  {#3})}
\newcommand{\clauseDefParam}[2][]{\clauseDef{{#2}}{\stPar[{#1}]}{\dinPar[{#1}]} }
\newcommand{\clauseDefXParam}[1][]{\clauseDefParam[{#1}]{\cVarX[{#1}]} }

\newcommand{\clauseCall}[3]{{#1}\langle \vec{{#2}} \textup{\texttt{;}} \vec{{#3}} \rangle}
\newcommand{\clauseCallX}[3][]{\clauseCall{\cVarX[{#1}]}{{#2}}{{#3}}}
\newcommand{\clauseCallY}[3][]{\clauseCall{\cVarY[{#1}]}{{#2}}{{#3}}}
\newcommand{\clauseCallNonVector}[3]{{#1} \langle{#2} \textup{\texttt{;}}  {#3} \rangle}
\newcommand{\clauseCallParam}[2][]{\clauseCall{{#2}}{\stVal[{#1}]}{\dinVal[{#1}]} }
\newcommand{\clauseCallXParam}[1][]{\clauseCallParam[{#1}]{\cVarX[{#1}]} }

\newcommand{\clauseFunding}[2]{\setenum{ {#1} \ifempty{#2}{}{{\texttt{ if } {#2}} } }}
\newcommand{\clauseAdv}[3]{\ifempty{#2}{\clauseFunding{#1}{} \ {#3}}{\clauseFunding{#1}{#2} \ {#3}}}

\newcommand{\confContrTime}[4][]{\langle {#3}, {#4} \rangle_{#1}^{#2}}
\newcommand{\confTaint}[1]{ \mathcal{D}\ifempty{#1}{}{({#1})}}

\newcommand{\confF}[1][]{\Phi_{#1}}

\newcommand{\confAdv}[5]{\left[ {#1}\, \textup{\texttt{;}} \,{#2}  \ifempty{#3}{}{ \, \textup{\texttt{;}} \, {#3} } \, \textup{\texttt{;}} \, ({#4} \textup{\texttt{,}} {#5} ) \right]}
\newcommand{\confAdvInit}[3]{\left[ {#1} \, \textup{\texttt{;}} \, {#2} \ifempty{#3}{}{ \, \textup{\texttt{;}} \, {#3} } \right]}
\newcommand{\confAdvDes}[2]{\left[ {#1}  \ifempty{#2}{}{ \, \textup{\texttt{;}} \, {#2} } \right]}

\newcommand{\confAdvN}[6]{\left[ {#1}\, \textup{\texttt{;}} \,{#2}  \ifempty{#3}{}{ \, \textup{\texttt{;}} \, {#3} } \, \textup{\texttt{;}} \, ({#4} \textup{\texttt{,}} {#5} ) \right]_{{#6}} }
\newcommand{\confAdvInitN}[4]{\left[ {#1} \, \textup{\texttt{;}} \, {#2} \ifempty{#3}{}{ \, \textup{\texttt{;}} \, {#3} } \right]_{{#4}}}
\newcommand{\confAdvDesN}[3]{\left[ {#1}  \ifempty{#2}{}{ \, \textup{\texttt{;}} \, {#2} } \right]_{{#3}}}

\def\contrComplColor{\color{RoyalPurple}}
\newcommand{\complFmt}[1]{{\contrComplColor{\it #1}}}

\newcommand{\complD}{\complFmt{\bar{D}}}
\newcommand{\complDi}{\complFmt{\bar{D'}}}

\newcommand{\callname}{\textup{\texttt{call}}}

\newcommand{\authDonate}[2]{{#1} \rhd {#2}}
\newcommand{\decC}[2]{{#1}\,\textup{\texttt{:}}\,{#2}}
\newcommand{\authSet}[1]{\mathcal{A}_{#1}}  

\newcommand{\runnameS}{\mathit{R}}
\newcommand{\runnameC}{\mathit{R}}

\newcommand{\runS}[1][]{\runnameS^{s}_{#1}}
\newcommand{\runSi}[1][]{\dot{\runnameS}^{s}_{#1}}

\newcommand{\runC}[1][]{\runnameC^{c}_{#1}}
\newcommand{\runCi}[1][]{\dot{\runnameC}^{c}_{#1}}

\newcommand{\ctx}{{\sf ctx}}
\newcommand{\rtxw}{{\sf rtxw}}
\newcommand{\scr}{{\sf scr}}

\newcommand{\arglen}[1]{{\sf arglen({#1})}}

\newcommand{\labC}{\lambda^{\! c}}          
\newcommand{\LabC}{\Lambda^{\! c}}

\newcommand{\txbranch}{\txTag{}{branch}}
\newcommand{\txname}{\txTag{}{name}}
\newcommand{\txnonce}{\txTag{}{nonce}}
\newcommand{\txowner}{\txTag{}{owner}}
\newcommand{\txalpha}[1][]{\txTag{}{\alpha_{{#1}}}}
\newcommand{\txbeta}[1][]{\txTag{}{\beta_{{#1}}}}

\newcommand{\keyMap}{\compileFmt{key}}

\newcommand{\keyMapC}{\mathit{key}}
\newcommand{\advMapC}{\mathit{prevTx}}
\newcommand{\keyMapCi}{\mathit{key'}}
\newcommand{\advMapCi}{\mathit{prevTx'}}

\def\paramCmpColor{\color{teal}}
\newcommand{\paramCmpFmt}[1]{{\paramCmpColor{\mathbf{#1}}}}
\newcommand{\inCmp}[1][]{\paramCmpFmt{in_{{#1}}}}

\newcommand{\nonceCmp}[1][]{\paramCmpFmt{nonce_{{#1}}}}

\newcommand{\tRelCmp}[1][]{\paramCmpFmt{t_{{#1}}}}

\newcommand{\tAbsCmp}[1][]{\paramCmpColor{\mathrm{t_0}}}
\newcommand{\tAbsCmpi}[1][]{\paramCmpFmt{\mathrm{t'_0}}}

\newcommand{\thenname}{\textup{\texttt{then}}}

\newtoggle{anonymous}
\togglefalse{anonymous}

\newtoggle{arxiv}
\togglefalse{arxiv}

\usepackage[hyphens]{url}
\usepackage{hyperref}
\usepackage[hyphenbreaks]{breakurl}

\makeatletter
\renewcommand\paragraph{\@startsection{paragraph}{4}{\z@}%
  {2.25ex \@plus 1ex \@minus .2ex}%
  {-0.75em}%
  {\normalfont\normalsize\bfseries}}
\makeatother

\newcommand{\github}{\iftoggle{anonymous}{\emph{[Reference to github
repository omitted for double-bind review]\xspace}}{https://github.com/bitbart/illum-lang/}}

\begin{document}


\title{Secure compilation of rich smart contracts on poor UTXO blockchains}

\iftoggle{anonymous}{
\author{
  \IEEEauthorblockN{Given Name Surname}
  \IEEEauthorblockA{\textit{dept. name of organization (of Aff.)} \\
    City, Country}
}
}{
\author{
  \IEEEauthorblockN{Massimo Bartoletti}
  \IEEEauthorblockA{\textit{Universit\`a degli Studi di Cagliari}}
  Cagliari, Italy \\
  \textit{bart@unica.it}
  \and
  \IEEEauthorblockN{Riccardo Marchesin}
  \IEEEauthorblockA{\textit{Universit\`a degli Studi di Trento}}
  Trento, Italy \\
  \textit{riccardo.marchesin@unitn.it}
  \and
  \IEEEauthorblockN{Roberto Zunino}
  \IEEEauthorblockA{\textit{Universit\`a degli Studi di Trento}}
  Trento, Italy \\
  \textit{roberto.zunino@unitn.it}
}
}

\maketitle

\pagestyle{plain} 

\begin{abstract}
  Most blockchain platforms from Ethereum onwards
  render smart contracts as stateful reactive objects
  that update their state and transfer crypto-assets
  in response to transactions.
  A drawback of this design is that when users submit a transaction,
  they cannot predict in which state it will be executed.
  This exposes them to transaction-ordering attacks,
  a widespread class of attacks 
  where adversaries with the power to construct blocks of transactions
  can extract value from smart contracts
  (the so-called MEV attacks).
  The UTXO model is an alternative blockchain design that
  thwarts these attacks by requiring new transactions to spend past ones:
  since transactions have unique identifiers, 
  reordering attacks are ineffective.
  Currently, the blockchains following the UTXO model
  either provide contracts with limited expressiveness (Bitcoin),
  or require complex run-time environments (Cardano).
  We present \langname, an Intermediate-Level Language for the UTXO Model.
  \langname can express real-world smart contracts, \eg those found in Decentralized Finance.
  We define a compiler from \langname to a bare-bone UTXO blockchain with loop-free scripts.
  Our compilation target only requires minimal extensions to Bitcoin Script:
  in particular, we exploit covenants,
  a mechanism for preserving scripts along chains of transactions.
  We prove the security of our compiler: namely, any attack targeting the compiled contract is also observable at the \langname level. Hence, the compiler does not introduce new vulnerabilities that were not already present in the source \langname contract.
  \change{We evaluate the practicality of \langname as a compilation target for higher-level languages.
  To this purpose, we implement a compiler from a contract language inspired by Solidity to \langname, and we apply it to a benchmark or real-world smart contracts.}
\end{abstract}


\begin{IEEEkeywords}
Blockchain, smart contracts, UTXO model
\end{IEEEkeywords}

\section{Introduction}

Smart contracts are agreements between mutually untrusted parties
that are enforceable by a computer program, without the need of
a trusted intermediary.
Currently, most implementations of smart contracts are based on
permissionless blockchains, where the conjunction with crypto-assets
has given rise to new applications,
like decentralized finance (DeFi)~\cite{werner2021sok}
and decentralized autonomous organizations (DAOs)~\cite{Wang19tcss},
that overall control nearly 90 billion dollars worth of assets today~\cite{defillama}.

Two main smart contracts models have emerged so far.
In the \emph{account-based model},
contracts are reactive objects that live on the blockchain
and process user transactions
by updating their state and transferring crypto-assets among users~\cite{Sergey17wtsc}.
In the \emph{UTXO model}, instead,
contracts, their state, and the ownership of assets
are encoded within transactions:
when a new transaction is published in the blockchain,
it replaces (``spends'') an old transaction,
effectively updating the contract state and the assets ownership.
The UTXO model was first proposed by Bitcoin,
where the idea of blockchain-based contracts originated in 2012.
The account-based model was later introduced in 2015 by Ethereum,
where contracts were popularized.
Most blockchain platforms today follow the account-based model:
besides Ethereum, also other mainstream blockchains such as
Solana, Avalanche, Hedera, Algorand and Tezos are account-based
(albeit with differences, sometimes notable, from case to case).

\paragraph{Account-based vs.\ UTXO blockchains}
%
In the account-based model, contracts can be seen 
as objects with a state accessible and modifiable by methods,
as in object-oriented programming.
For instance, to withdraw 10 token units from a \contract{Bank}
contract, a user $\pmvA$ sends a transaction \txcode{withdraw(10)}
to \contract{Bank}, which will react by updating its state
and $\pmvA$'s wallet.
Programming contracts in the UTXO model requires instead a paradigm shift
from the common object-oriented style~\cite{Brunjes20isola}.
Indeed, a UTXO transaction does not directly represent a contract action:
rather, it encodes a transfer of crypto-assets from its \emph{inputs}
to its \emph{outputs}.
Transaction outputs specify the assets they control,
the contract state, 
and the conditions under which the assets can be transferred again.
Transaction inputs are references to unspent outputs of previous transactions,
and provide the values that make their spending conditions true.
The blockchain state is given by the set of Unspent Transaction Outputs (UTXO).
A transaction can spend one or more of outputs in the UTXO set,
specifying them as its inputs:
this effectively removes these outputs from the blockchain state,
and creates new ones. 
The new outputs update the state of the contracts,
and redistribute the assets according to their spending conditions.
These conditions are specified in a \emph{scripting language},
the expressiveness of which is reflected on that of contracts.
For instance, in the banking use case above,
the state of the \contract{Bank} contract
could be scattered among a set of outputs.
To withdraw, $\pmvA$ must send a transaction which spends one or more
of these outputs, and whose output has a spending condition that can be
satisfied only by $\pmvA$ (\eg, a signature verification against
$\pmvA$'s public key).
In addition, the \contract{Bank} state in the new output
must be a correct update of the old state
(\eg, in the new state $\pmvA$'s account must have 10 tokens less
than in the old state).
Programming contracts in this model is more complex than in the
account-based model, since the links to familiar programming abstractions
are weaker.

Despite this additional complexity,
the UTXO model has a series of advantages over the account-based model.
A first problem of account-based stateful platforms like Ethereum
is to undermine the concurrent execution of transactions.
Namely, a multi-core blockchain node cannot simply execute
transactions in parallel,
since they may perform conflicting accesses to shared parts of the state,
possibly leading to an inconsistent state~\cite{Dickerson17podc}.
In such platforms, there is no efficient way to detect when
transactions can be safely parallelized: in general,
determining the accessed parts of the state requires to fully execute them.
In the UTXO model, instead, it is easy to detect
when transactions are parallelizable:
just check if they spend disjoint outputs,
which can be done efficiently~\cite{BGM21lmcs}.

Another problem of the account-based model is that a user sending a transaction
to the blockchain network cannot accurately predict the state
in which it will be executed.
This has several negative consequences, such as the unpredictability
of transaction fees and the susceptibility to
maximal extractable value (MEV)
attacks~\cite{Eskandari19sok,Daian20flash,Qin21quantifying}.
Fees are a common incentive mechanism
for the blockchain network to execute transactions and
a defence against denial-of-service attacks.
To be accepted, a transaction must pay a fee
which is proportional to the computational resources needed to validate it.
The actual amount of these resources heavily depends on the initial state
where the transaction is performed:
so, to be sure that their transactions are accepted,
users specify a maximum fee they are willing to pay. 
Besides forcing users to over-approximate fees,
this also opens to attacks where the adversary front-runs transactions
so that they are executed in a state where the paid fee is insufficient: 
the consequence is that users pay the fee even for rejected transactions
that do not update the contract state according to their intention.
With MEV attacks instead, the adversary colludes with malicious blockchain nodes
to propose blocks where the ordering of transactions is 
profitable for the adversary (to the detriment of users).
These attacks are very common in account-based blockchains
(targeting in particular DeFi contracts),
and are estimated to be worth more than USD 1 billion~\cite{flashbotsPostmerge} so far.

The UTXO model naturally mitigates these attacks.
Indeed, when a user sends a transaction $\txT$ to the blockchain network,
they know \emph{exactly} in which state it will be executed, 
since this state is completely determined by $\txT$'s inputs.
Therefore, if an adversary $\pmv{M}$ front-runs $\txT$ with their transaction
$\txT[\pmv{M}]$,
the transaction $\txT$ will be rejected by the blockchain network,
since some of its inputs are spent by $\txT[\pmv{M}]$.
If the user still desires to perform the action in the new state,
they must resend $\txT$, updating its inputs
(and therefore, specifying the new state where the action is executed).
This thwarts both the fees exhaustion attacks and the MEV attacks
described before.

\paragraph{UTXO designs: Bitcoin vs. Cardano}

Currently, the two main UTXO blockchains are Bitcoin and Cardano.
These platforms follow radically different design choices
in the structure of transactions and
in the scripting languages to specify their spending conditions.
These differences deeply affect the expressiveness of their contracts
and the complexity of their runtime environments.
On the one hand, Bitcoin has a minimal scripting language,
featuring only basic arithmetic and logical operations,
conditionals, hashes, and (limited) signature verification~\cite{bitcointxm}.
This imposes a stringent limit on the expressiveness of contracts in Bitcoin:
contracts requiring unbounded computational steps, or
transfers of tokens different than native crypto-currency, 
cannot be expressed~\cite{bitmlracket}.
Neglecting the lack of expressiveness, the design choice
of keeping the scripting language  minimal
has some positive aspects: besides limiting the attack surface and
simplifying the overall design (\eg, no gas mechanism is needed),
it facilitates the formal verification of contracts.
On the other side of the spectrum, Cardano's scripting language
is an untyped lambda-calculus~\cite{plutus-core},
which makes Cardano scripts, and in turn contracts, Turing-complete.
This increase in expressiveness comes at a cost,
in that the static verification of general contract properties
is undecidable.
Furthermore, since Cardano's scripts feature unbounded iteration,
a gas mechanism is needed to abort time-consuming computations
and suitably reward blockchain nodes for validating transactions.
Although the gas needed to execute a transaction is statically known
(unlike in account-based blockchains, where it depends on the
actual state where the transaction is executed),
still it would be safer to avoid the gas mechanism altogether.
For instance, a misalignment of the gas incentives led to
DoS attack on Ethereum~\cite{Perez20ndss}.
Another drawback of Bitcoin, Cardano, and UTXO blockchains in general,
is that, due to the absence of explicit state,
writing stateful contracts 
is more difficult than in account-based blockchains~\cite{Brunjes20isola}. 

Our research question is whether one can find a balance
between the two approaches which also overcomes their usability issues.
More specifically, ours is a quest for a contract language and
a UXTO model such that:
\begin{itemize}

\item the expressiveness of contracts is enough for real-world use cases
  (contracts are Turing-complete);

\item executing contracts requires a simple blockchain design
  (individual transaction scripts are not Turing-complete, and
  no gas mechanism is required);

\item it can serve as a compilation target of developer-friendly higher-level contract languages.

\end{itemize}

\paragraph*{Contributions}

We address this research question 
by proposing an expressive 
intermediate-level contract language
that compiles into transactions executable by
a bare-bone UTXO blockchain
(with no gas mechanism).
The key insight is to scatter the execution of complex contract actions across
multiple UTXO transactions.
Even if each of these transactions contains only
simple (loop-free) scripts, the overall chain of transactions can encompass
complex (possibly recursive) behaviours.

We summarize our main contributions as follows:
\begin{itemize}

\item \langname, an Intermediate Level Language for UTXO blockchains.  
  \langname is a Turing-complete clause language with
  primitives to exchange crypto-assets.
  We evaluate \langname on a few use cases,
  including gambling games, auctions and Ponzi schemes
  (\Cref{sec:overview}).
  
\item a compiler from \langname to UTXO transactions.
  The scripting language used in these transactions is 
  Bitcoin Script extended with \emph{covenants},
  operators to constrain the output scripts of the redeeming
  transactions~\cite{BLZ20isola}.
  This is a lightweight mechanism, which can be implemented
  with minimal overhead on the runtime of
  UTXO blockchains~\cite{Moser16bw,Oconnor17bw}.
  
\item a proof of the security of the \langname compiler.
  Namely, we prove that,
  even in the presence of adversaries, with overwhelming probability
  there is a step-by-step correspondence between
  the execution of an \langname contract and
  that of the chain of transactions resulting from its compilation.
  \change{Our security result essentially establishes Robust Trace Property Preservation~\cite{Patrignani19csur,Abate19csfw}, ensuring that each computational trace (involving any computational adversary) has a symbolic counterpart (involving a suitable symbolic adversary).}
  Its proof is quite complex, as it matches every possible
  contract action in \langname (more than 20 cases) with some action
  at the blockchain level.

\item \change{a prototype implementation of a compiler from a Solidity-like high-level contract language to \langname. We illustrate our compilation technique in~\Cref{sec:hellum}.}

\item \change{an evaluation of the practicality of our approach, based on a benchmark of common smart contracts that we implement in the high-level language and then translate to \langname with our prototype compiler. Overall, our evaluation shows that it is feasible to reconcile the UTXO model with the familiar procedural programming style supported by Solidity, effectively making UTXO contracts more usable in practice.}

\end{itemize}

\change{Because of space constraints, we refer to a technical report for the proofs~\cite{BMZ23arxiv}, and to a github repository\footnote{\url{\github}} for the code of the prototype compiler and of the benchmark.}

\section{Overview}
\label{sec:overview}

In this~\namecref{sec:overview} we overview our approach, 
discussing its main features and results.
Here we will mostly focus on intuition, leveraging on examples
and postponing the full technical development to later sections.

\subsection{An intermediate contract language}

\langname is a clause-based process calculus
that can serve as an intermediate contract language
and compiles to a bare-bone UTXO model.
\langname contracts are sets of \keyterm{clauses},
each having a defining equation of the form:
\[
	\clauseDefXParam = \clauseAdv{\vec{v \tokT}}{p}{ \contrC} 
\]
Here, $\cVarX$ is the clause name, $\vec{\stPar}$ and $\vec{\dinPar}$ are sequences of formal parameters (respectively, \emph{internal} and \emph{external}),
$\clauseFunding{\vec{v\tokT}}{p}$ is the \emph{funding precondition}
(namely: ``$\vec{v}$ units of tokens $\vec{\tokT}$ are available
and the condition $p$ is true''),
and $\contrC$ is a \emph{process} encoding the clause behaviour.
We provide some intuition about these constructs with an example.

\paragraph*{Example: a ``double or nothing'' game}

We specify a gambling game between two players $\pmvA$ and $\pmvB$
as follows:
\begin{align*}
  \clauseDefNonVector{\cVar{Init}}{}{}
  & = \clauseFunding{0}{} \ ( \ncadv{\clauseCallNonVector{\cVar[\pmvA]{Play}}{1}{}} +  \ncadv{\clauseCallNonVector{\cVar[\pmvB]{Play}}{1}{}} )
  \\
  \clauseDefNonVector{\cVar[\pmvA]{Play}}{v}{} &= \clauseFunding{v\tokT}{} \ (
  \ncadv{\clauseCallNonVector{\cVar[\pmvB]{Play}}{2v}{}}
  \\
  & \qquad \quad\, + \afterRelC{5}{\withdrawC{ (\splitB{v\tokT}{\pmvA} )}} )  
  \\
  \clauseDefNonVector{\cVar[\pmvB]{Play}}{v}{} &= \clauseFunding{v\tokT}{} \ (
  \ncadv{\clauseCallNonVector{\cVar[\pmvA]{Play}}{2v}{}}
  \\
  &\qquad \quad\, + \afterRelC{5}{\withdrawC{ (\splitB{v\tokT}{\pmvB} )}} ) 
\end{align*}
Here, we have not used external parameters, and we have
omitted writing $\code{if}\ true$ in the funding precondition.

To start the game, participants can invoke 
the $\cVar{Init}$ clause.
Since it has no funding precondition, this does not require paying
tokens upfront.
After that, the contract gives two options:
either calling $\cVar[\pmvA]{Play}$ or $\cVar[\pmvB]{Play}$.
Choosing either option requires the player to satisfy its funding precondition:
here, both clauses require paying $v=1$ tokens $\tokT$,
since the internal parameter $v$ is set to $1$ by the caller $\cVar{Init}$.
Now, assume that $\cVar[\pmvA]{Play}$ was chosen.
The clause offers two new options: 
either calling $\cVar[\pmvB]{Play}$ with a doubled internal parameter $v$,
or sending $v$ units of $\tokT$ to player $\pmvA$ after 5 time units.
The first option requires to pay other $v=1$ tokens to the contract,
so that its new balance satisfies the funding precondition of $\cVar[\pmvB]{Play}$.
After that, both players can take turns doubling the contract balance,
until one of them fails to do so within 5 time units.
When this happens, the other player can redeem the whole contract balance,
ending the game.


In the game seen so far, the contract balance is fully determined by the contract itself, with players having no choice in regard to how many tokens they can add.
External parameters allow us to make the game more interesting,
letting players arbitrarily raise the bet as long as it is greater than
the double of the previous balance:
\begin{align*}
  \clauseCallNonVector{\cVar[\pmvA]{Play}}{v}{w}
  & =
  \clauseFunding{w\tokT}{ w>2v}
  \ (
  \ncadv{\clauseCallNonVector{\cVar[\pmvB]{Play}}{w}{?}}
  \\
  &  \qquad  +\afterRelC{5}{\withdrawC{ (\splitB{w\tokT}{\pmvA} )}} \ )
  \\
  \clauseCallNonVector{\cVar[\pmvB]{Play}}{\bar{v}}{\bar{w}}
  & =
  \clauseFunding{\bar{w}\tokT}{\bar{w} > 2 \bar{v}}
  \ (
  \ncadv{\clauseCallNonVector{\cVar[\pmvA]{Play}}{\bar{w}}{?}}
  \\
  &  \qquad  +\afterRelC{5}{\withdrawC{ (\splitB{\bar{w}\tokT}{\pmvB} )}} \ )  
\end{align*}

For instance, assume that 
$\clauseCallNonVector{\cVar[\pmvA]{Play}}{v}{w}$ is active
when a player executes
\mbox{$\ncadv{\clauseCallNonVector{\cVar[\pmvB]{Play}}{w}{?}}$}.
The internal parameter \mbox{$\bar{v}=w$} is determined by the process, while
the external parameter \mbox{$\bar{w}=?$} is chosen by the player,
provided that it respects the funding precondition of $\cVar[\pmvB]{Play}$,
namely \mbox{$\bar{w} > 2 \bar{v}$}.
This means that the player must pay enough tokens to
make the contract balance reach $2 w$ tokens,
doubling the previous balance $w$.

\paragraph*{More on \langname clauses}

Generalising from the previous examples,
processes $\contrC$ are choices $\contrD[1] + \cdots + \contrD[k]$
among one or more branches.
Branches have two forms:
\begin{itemize}

\item \mbox{$\withdrawC{(\splitB{v_1\tokT[1]}{\pmvA[1]}\Vert \cdots \Vert \splitB{v_n\tokT[n]}{\pmvA[n]})}$}.
  This branch sends $v_i$ tokens of type $\tokT[i]$ to
  participant $\pmvA[i]$ (for all $i$).
  The needed funds are taken from
  the process balance and from the contributing participants.
  
\item \mbox{$\ncadv{(\clauseCallX[1]{\sexp[1]}{?}\Vert \cdots\Vert \clauseCallX[n]{\sexp[n]}{?})}$}.
  This branch invokes the clauses $\cVarX[i]$ in parallel,
  with the actual internal parameters given by
  the \emph{expressions} $\vec{\sexp[i]}$.
  The external parameters $\vec{?}$ represent values chosen by participants.  
  The $\callname$ operation is enabled when the funding preconditions
  of \emph{every} $\cVarX[i]$ are satisfied.
  Like in the previous case,
  the needed funds are taken from the process balance,
  and from additional funds possibly sent by participants.
\end{itemize}

Branches can be decorated with time constraints and authorizations.
Time constraints enable a branch only after a certain time has passed.
They can be absolute ($\afterC{t}{\contrD}$), making the branch $\contrD$
enabled since a certain time $t$, 
or relative ($\afterRelC{\delta}{\contrD}$),
making $\contrD$ enabled after a delay $\delta$ since the previous contract step.
Authorizations ($\authC{\pmvA}{\contrD}$)
enable a branch only when $\pmvA$ has provided their authorization.
Note that multiple authorizations are possible
(\eg, $\authC{\pmvA}{\authC{\pmvB}{\contrD}}$).
In this case, all the involved participants must agree on the chosen branch.
Furthermore, if the branch is a call, then they must also agree on the values
of the external parameters.
Effectively, all the external parameters are chosen by the authorizers.

To stipulate a contract, participants spawn an instance of a clause,
providing the funds required by its funding precondition $\vec{v\tokT}$.
Note that $\vec{v\tokT}$ can be a sequence
$v_1 \tokT[1] \cdots v_n \tokT[n]$, meaning that 
$v_i$ units of each token $\tokT[i]$ are required.
Doing so activates the clause, running its process, which can in turn
spawn instances of other clauses via $\callname$,
transferring the control to them.
Spawning multiple instances of clauses is possible by
exploiting the inherent parallelism of the UTXO model.
We take advantage of this parallelism in the following example.

\paragraph*{Exploiting parallelism}

We specify a ``Ponzi scheme'' contract as follows:
\begin{align*}
  \clauseDefNonVector{\cVar{P}}{v}{A} &= \clauseFunding{v \tokT}{} \
  \ncadv{(\clauseCallNonVector{\cVar{S}}{2v\commaCode A}{}\Vert \clauseCallNonVector{\cVar{P}}{2v}{?}\Vert \clauseCallNonVector{\cVar{P}}{2v}{?})}
  \\
  \clauseDefNonVector{\cVar{S}}{v \commaCode A}{} &= \clauseFunding{v\tokT}{} \ \withdrawC{(\splitB{v\tokT}{A})}
\end{align*}

The clause $\cVar{P}$ takes an integer $v$ as an internal parameter,
and a participant $A$ as an external parameter (denoting the owner).
The funding precondition requires $v$ units of $\tokT$.
The clause $\cVar{P}$ calls $\cVar{S}$ along with two copies of itself
(each one doubling the internal parameter $v$).
The clause $\cVar{S}$ simply transfers some tokens according to its
internal parameters $v$ and $A$
(note that they are before the ``\code{;}'').
When $\clauseCallNonVector{\cVar{P}}{v}{A}$ is active,
to continue the contract we need to satisfy the preconditions of all called clauses,
which require $6v\tokT$ overall. Since the contract balance is $v\tokT$,
participants must provide $5v\tokT$.
In practice, the owner $A$ will need to convince two participants $B$ and $C$
to provide $2.5v\tokT$ each, in exchange for setting themselves
as owners in the newly spawned copies of $\cVar{P}$.
When the $\callname$ is performed, the former owner $A$ receives $2v\tokT$. 
If $B$ and $C$ later manage to enrol two other participants in the scheme,
they will receive $4v\tokT$, gaining $4v\tokT - 2.5v\tokT=1.5v\tokT$.
Note that if $C$ does not find new participants, but $B$ does, $B$ can still continue its process, since each copy of $\cVar{P}$ executes independently.
We remark that $B$ and $C$ are \emph{not} known when $A$ is enrolled:
this is why we need to use external parameters.

Upon compilation, parallel active contracts can be concurrently executed by
UTXO blockchain nodes by exploiting their internal parallelism.
This would not be possible with a traditional stateful account-based
implementation, where a single contract would process all transactions.

\subsection{The UTXO model}
\label{sec:overview:utxo}

\newcommand{\newBid}{\txTag{}{newBid}}
\newcommand{\oldBid}{\txTag{}{oldBid}}
\newcommand{\Bidder}{\txTag{}{Bidder}}
\newcommand{\auctionA}{\txTag{}{A}}
\newcommand{\auctionV}{\txTag{}{v}}

\langname contracts can be executed on a bare-bone UTXO blockchain,
with no Turing-complete scripting language and no gas mechanism.
Basically, a UTXO model similar to Bitcoin's is enough,
with the addition of custom tokens and
\emph{covenants}~\cite{Moser16bw,Oconnor17bw,BLZ20isola}.

A transaction $\txT$ in this UTXO model is a tuple with
the following fields, similarly to Bitcoin:
\begin{itemize}

\item $\txOut{}$ is a sequence of \emph{transaction outputs}, \ie triples of the form
  $(\txarg, \txscript, \txval)$,
  where $\txarg$ is a sequence of values
  (which we use to encode the contract state),
  $\txscript$ is the script specifying the spending condition,
  and $\txval$ encodes the tokens held by the output ($\vec{v \tokT}$).  

\item $\txIn{}$ is a sequence of \emph{transaction inputs},
  referring to the transaction outputs which are going to be spent by $\txT$.
  An input $(\txTi,i)$ refers to the $i$-th output of
  a previous transaction $\txTi$.

\item $\txWit{}$ is a sequence of \emph{witnesses} (sequences of values),
  passed as parameters to the
  scripts of input transactions.
  More precisely, if $\txIn[j]{} = (\txTi, i)$,
  then $\txWit[j]{}$ is the witness
  passed to the script $\txTi.\txOut[i]{}.\txscript$.

\end{itemize}

As a simple example, we display below a transaction $\txT[1]$
containing a single output,
which holds $1 \tokT$
and can be redeemed by any transaction carrying a signature of $\pmvA$
in its $\txWit{}$ field (referred to by $\rtx.\txWit{}$).

\resizebox{0.85\columnwidth}{!}{
  \hspace{-0pt}
  \begin{nscenter}
    \small
    \begin{tabular}[t]{|l|}
      \hline
      \\[-9pt]
      \multicolumn{1}{|c|}{$\txT[1]$} \\[-1pt]
      \hline
      $\cdots$ \\
      \txOut[1]{$\{ \txarg: \cdots$,} \\[1pt]
      \hspace{30pt} $\txscript: \versig{\constPK[\pmvA]}{\rtx.\txWit{}}$, 
      \hspace{0pt} \textcolor{Gray}{// verify signature} \\[1pt]
      \hspace{30pt} $\txval: 1 \tokT \}$ \\[1pt]                  
      \hline
    \end{tabular}
  \end{nscenter}
}

\medskip
In order to redeem the tokens held in a transaction output
$\txT.\txOut[j]{}$,
a transaction $\txTii$ has to satisfy the spending condition
$\txT.\txOut[j]{}.\txscript$.
This script can access the fields of $\txT$ and $\txTii$,
perform basic arithmetic, logical and cryptographic  operations
(hashing, signature verification of the redeeming transaction),
and enforce time constraints.
Covenant operators allow the script in $\txT$ to constrain
the scripts in $\txTii$.
The covenant  $\verscript{\sigma}{n}$ mandates the $n$-th output
of $\txTii$ to have a script equal to $\sigma$.
The covenant $\verrec{n}$ requires the $n$-th output of $\txTii$
to have the same script of $\txT.\txOut[j]{}$, the one currently being checked.

As an example, consider the following transaction
redeeming (the only output of) $\txT[1]$:

\resizebox{0.95\columnwidth}{!}{
  \hspace{-25pt}
  \begin{nscenter}
    \small
    \begin{tabular}[t]{|l|}
      \hline
      \\[-9pt]
      \multicolumn{1}{|c|}{$\txT[2]$} \\[-1pt]
      \hline \\[-7pt]      
      \txIn[1]{$(\txT[1],1)$} \\[1pt]
      \txWit[1]{$\sig{\pmvA}{\txT[2]}{}$} \\[1pt]
      \txOut[1]{$\{ \txarg: \constPK[\pmvA], $} \\[1pt]
      \hspace{30pt} $\txscript: \versig{\seqat{\ctxo{\txarg}}{1}}{\rtx.\txWit{}} \andE$ 
      \hspace{0pt} \textcolor{Gray}{// verify signature} \\[1pt]
      \hspace{48pt} $\rtxo{\txval}{1} = 1 \tokT \andE$ 
      \hspace{32pt} \textcolor{Gray}{// preserve value} \\[1pt]
      \hspace{48pt} $\verrec{1},$ 
      \hspace{80pt} \textcolor{Gray}{// preserve script} \\[1pt]
      \hspace{30pt} $\txval: 1 \tokT \}$ \\[1pt]
      \hline
    \end{tabular}
  \end{nscenter}
}

\bigskip
The transaction $\txT[2]$ can redeem its input,
since it carries a witness (a signature of $\pmvA$)
that satisfies the script $\txT[1].\txOut[1]{}.\txscript$. 
Furthermore the assets in $\txT[2]$'s output do not exceed those in its inputs.
The spending condition of $\txT[2]$ is given by the
$\txarg$ field (containing $\pmvA$'s public key $\constPK[\pmvA]$),
and the script $\txscript$.
The script is a conjunction of three conditions:
\begin{itemize} 

\item the $\txWit{}$ness of the redeeming transaction
  must contain a transaction signature, to be verified against
  the public key stored in the 1st element of the $\txarg{}$ sequence
  of the current transaction output
  (denoted by $\ctxo{}.\txarg.1$).
  In $\txT[2]$, this is just $\constPK[\pmvA]$.

\item the $1$st output of the redeeming transaction must have $1 \tokT$ value
  (here $\rtxo{}{i}$ is the $i$-th output of the \emph{redeeming} transaction).
  
\item the $1$st script of the redeeming transaction must be equal to
  the current one.
  This is enforced using the covenant $\verrec{1}$.

\end{itemize}

This transaction actually implements a sort of Non-Fungible Token (NFT).
To transfer the NFT to $\pmvB$, $\pmvA$
spends $\txT[2]$ with a redeeming transaction $\txT[3]$,
writing her signature in the $\txWit{}{}$ field of $\txT[3]$, and 
setting the $\txarg{}$ field to $\pmvB$'s public key $\constPK[\pmvB]$.
Note that the script and the balance are preserved along transactions
thanks to the covenant.

\subsection{The \langname compiler}

One of our main contributions is a compiler from
\langname contracts to UTXO transactions.
Intuitively, we encode active clauses into transaction outputs, where
the $\txval$ field records the contract balance,
$\txscript$ enforces the contract logic, and
$\txarg$ records the contract state.

The $\txarg$ field of a transaction output encoding an active clause
will have one entry for each actual parameter,
and the following additional entries:
$\txname$ to represent the clause name,
$\txbranch$ to represent the index of the executed branch, and $\txnonce$ to keep the behaviour faithful
to the \langname semantics.
The $\txscript$ field of \emph{all} transactions outputs
resulting from the compilation of an \langname contract is the same,
and it is preserved along chains of transactions
by using the $\verrec{}$ covenant.

Here we illustrate the compilation of an auction contract focussing on the construction of the script.

\paragraph*{An auction contract}

\newcommand{\ILLnewBid}{\mathit{newBid}}
\newcommand{\ILLoldBid}{\mathit{oldBid}}
\newcommand{\ILLBidder}{\mathit{Bidder}}
The contract (\Cref{fig:auction:illum}) consists of three clauses: $\cVar{Init}$ that takes no parameters and initialises the auction with a starting bid of 0 tokens; $\cVar{Bid}$ that allows a $\ILLBidder$ to raise the $\ILLoldBid$ to a $\ILLnewBid$; and $\cVar{Pay}$ which transfers tokens to a participant.
The contract flows as follows.
After the initialisation, a participant can call the clause $\cVar{Bid}$ to start the auction, setting themselves as the highest bidder.
Then, we must execute one of the two branches of $\cVar{Bid}$.
The first branch raises the bid, setting the values of $\ILLnewBid$ and $\ILLBidder$ through external parameters, and refunds the previous bidder.
The second branch closes the auction, transferring the tokens to the hardcoded $\pmv{Owner}$.
Since the first branch of $\cVar{Bid}$ recursively calls the clause $\cVar{Bid}$ itself, it can only be taken if the funding precondition is satisfied, which means that the new bid must be greater than the previous one.
On the other hand, the second branch of $\cVar{Bid}$ can only be executed by the owner after a deadline of 1000 time units.
By choosing this branch the owner closes the auction,
and receives the highest bid.

\begin{figure}
  \begin{mdframed}
    \small
    \negcaptionspace
  \begin{align*}
    & \clauseDefNonVector{\cVar{Init}}{}{} =
    \clauseFunding{0}{} \ \ncadv{\clauseCallNonVector{\cVar{Bid}}{0}{ ? \commaCode ? }}
    \\[5pt]
    & \clauseDefNonVector{\cVar{Bid}}{\ILLoldBid}{\ILLnewBid \commaCode \ILLBidder}
    \\
    & = \clauseFunding{\ILLnewBid \, \tokT}{ \ILLnewBid > \ILLoldBid \andE \ILLBidder \neq \pmv{Null} }
    \\
    & \hspace{15pt} \ncadv{ ( \clauseCallNonVector{\cVar{Bid}}{\ILLnewBid}{? \commaCode ?} \ \Vert \ \clauseCallNonVector{\cVar{Pay}}{\ILLnewBid\commaCode \ILLBidder}{} )} 
    \\ 
    & \ + \afterC{1000}{} \authC{\pmv{Owner}}{}\withdrawC{(\splitB{\ILLnewBid \, \tokT}{\pmv{Owner}})}
    \\[5pt]
    & \clauseDefNonVector{\cVar{Pay}}{v \commaCode A }{} = \clauseFunding{v \tokT}{} \ \withdrawC{(\splitB{v\tokT}{A})}
  \end{align*}
  \end{mdframed}
  \negcaptionspace  
  \caption{An auction contract in \langname.}
  \label{fig:auction:illum}
\end{figure}


Compiling the contract, we obtain the following script:
\begin{align*}
  \scr_{ \sf Auction } \coloneqq \ &\inidx = 1 \andE
  \\
  &\ifE{ (\ctxo{\txname} = \cVar{Init}  ) }{ \scr_{\sf Init} }{}
  \\
  &\ifE{ (\ctxo{\txname} = \cVar{Bid}  ) }{ \scr_{\sf Bid}}{}
  \\
  &\ifE{(\ctxo{\txname} = \cVar{Pay} )}{ \scr_{\sf Pay} }{\false}
\end{align*}

The condition \mbox{$\inidx = 1$} checks that this output
is redeemed by an input at position 1.
This is needed to thwart attacks where a transaction spends two contracts at once, effectively cancelling one of them.
The rest of the script is a switch among the possible clauses,
where $\ctxo{\txname}$ denotes the $\txname$ item of the
$\txarg$ field in the current transaction output.
For brevity here we only illustrate the most interesting case, \ie $\scr_{\sf Bid}$.
Recall that the $\cVar{Bid}$ process is a choice between two branches. Consequently, the associated script has the following form:
\begin{align*}
  \scr_{\sf Bid} \coloneqq \ &  {\sf if \ BranchCond1	\ then \ } \scr_{ \sf Branch1 }
  \\
  & {\sf else \ if \ BranchCond2 \ then \ } \scr_{ \sf Branch2 } 
  \\ 
  & {\sf else \ } \false
\end{align*}

The first branch calls \emph{two} clauses, \ie $\cVar{Bid}$ and $\cVar{Pay}$:
\[
 \ncadv{ ( \clauseCallNonVector{\cVar{Bid}}{\ILLnewBid}{? \commaCode ?} \ \Vert \ \clauseCallNonVector{\cVar{Pay}}{\ILLnewBid\commaCode \ILLBidder}{} )}
\]
Consistently, ${\sf BranchCond1}$ checks that the redeeming transaction
($\rtx$) has exactly \emph{two} outputs:
this is the goal of the condition $\outlen{\rtx} = 2$ below.
Furthermore, ${\sf BranchCond1}$ checks that the chosen branch
is indeed the first one: this is done
by requiring that both outputs in the redeeming transaction
($\rtxo{}{1}$ and $\rtxo{}{2}$)
have the argument $\txbranch$ set to 1. 
\begin{align*}
  {\sf BranchCond1} \coloneqq & \ \outlen{\rtx} = 2 \andE \rtxo{\txbranch}{1} = 1 
  \\
  &\ \andE \rtxo{\txbranch}{2} = 1
\end{align*}

The second branch instead performs a $\withdrawname$ operation
with exactly \emph{one} recipient, so we check that
the redeeming transaction has exactly \emph{one} output, which has its $\txbranch$ argument set to 2.
\[
  {\sf BranchCond2} \coloneqq \ \outlen{\rtx} = 1 \andE \rtxo{\txbranch}{1} = 2
\]
  
The script $\scr_{ \sf Branch1 }$ verifies that the two outputs of the redeeming transaction encode, respectively, the clauses $\cVar{Bid}$ and
$\cVar{Pay}$.
The part corresponding to $\cVar{Bid}$ is:
\begin{align*}
  & \rtxo{\txname}{1} = \cVar{Bid} \andE
  \\
  & \verrec{1} \andE |\rtxo{\txarg}{1}| = 6 \andE
  \\
  & \rtxo{\oldBid}{1} = \ctxo{\newBid} \andE
  \\
  & \rtxo{\txval}{1} = \rtxo{\newBid}{1}\ \tokT \andE
  \\
  & \rtxo{\newBid}{1} > \rtxo{\oldBid}{1} \andE
  \\
  & \rtxo{\Bidder}{1} \not= \pmv{Null} \andE \cdots
\end{align*}

The script performs the following checks on the redeeming transaction:
\begin{inlinelist}

\item the clause name in the first output is indeed $\cVar{Bid}$;
  
\item the script is preserved (via the $\verrec{}$ covenant);
  
\item the number of arguments is correct
  (3 arguments for $\newBid$, $\oldBid$ and $\Bidder$, and 3 arguments
  for $\txname$, $\txbranch$ and $\txnonce$);
  
\item the value of the $\oldBid$ of the redeeming transaction
  is set to the $\newBid$ of the current transaction,
  coherently with the passing of parameters in the $\cVar{Bid}$ clause;

\item the amount of tokens of type $\tokT$ transferred to
  the redeeming transaction is the one specified in the funding precondition;
  
\item the guard in the funding precondition is satisfied.
\end{inlinelist}

The part of the script corresponding to $\cVar{Pay}$ is obtained in the same way:
\begin{align*}
  \cdots \
  & \rtxo{\txname}{2} = \cVar{Pay} \andE
  \\
  & \verrec{2} \andE |\rtxo{\txarg}{2}| = 5 \andE
  \\
  & \rtxo{\auctionA}{2} = \ctxo{\Bidder} \andE
  \\
  & \rtxo{\auctionV}{2} = \ctxo{\newBid} \andE
  \\
  & \rtxo{\txval}{2} = \ctxo{\newBid} \ \tokT
\end{align*}

The script $\scr_{\sf Branch2}$ encodes the $\withdrawname$ operation
in \Cref{fig:auction:illum}, enforcing the
authorization of $\pmv{Owner}$ and the absolute time constraint
of 1000 time units:
\begin{align*}
& \afterAbs{1000}{} \versig{\pmv{Owner}}{\rtx.\txWit{}(1)} \andE  
\\
& \verscript{ \versig{\ctxo{\txowner}}{\rtx.\txWit{}(1)} }{1} \andE
\\
& |\rtxo{\txarg}{1}|= 3 \andE
\\
& \rtxo{\txowner}{1} = \pmv{Owner} \andE
\\ 
& \rtxo{\txval}{1} = \ctxo{\newBid} \ \tokT 
\end{align*}

The script performs the following checks on the redeeming transaction:
\begin{inlinelist}

\item ${\sf absAfter}$ forces the redeeming transaction to be published
  at a time greater than 1000;

\item the first $\versigName$ checks the presence of $\pmv{Owner}$'s
  signature in the witness of the redeeming transaction;
  
\item the only output of the redeeming transaction
  has 3 arguments ($\txowner$, $\txbranch$, and $\txnonce$),
  and a script that accepts any transaction signed by the $\txTag{}{owner}$
  (this is enforced by the $\verscript{}{}$ covenant). This effectively transfers the ownership of the tokens to $\pmv{Owner}$.  
\end{inlinelist}
The details of our compilation technique are in~\Cref{sec:compiler}.

\subsection{Security of the \langname compiler}

Our main technical result is the security of the compiler
establishing a strict correspondence between
actions at the \langname level
and those at the blockchain level.
This ensures that any contract behaviour 
that is observable at the blockchain level is also observable
in the semantics of \langname.
In particular, any attack that may happen at the blockchain level
can be detected by inspecting the symbolic semantics of \langname contracts.
This is a fundamental step towards static analysis tools
for the verification of security properties of contracts in the UTXO model.

Here we outline how this result is proved
in~\Cref{sec:adversary,sec:soundness}.
We start by defining the adversary model,
both at the symbolic level of \langname
and at the computational level of the blockchain.
The adversary is modelled as a PPTIME algorithm that schedules the
actions chosen by participants, possibly interleaving them with
adversarial actions.
The bridge between the symbolic and the computational level
is given through a \emph{coherence relation},
which associates symbolic actions (\eg, a $\callname$ action)
with their computational counterparts (\eg, a transaction).
The definition of this coherence relation is quite gruelling,
as it must consider all possible actions,
which amount to 20 cases: this complexity of course reflects on the proofs.
As a first sanity check, we show in  \Cref{lem:balance-preservation} that the coherence relation precisely characterises the exchange of assets: namely, the asset ownership is consistent between symbolic and computational executions whenever they are coherent.
Our main security result (\Cref{th:computational-soundness})
guarantees that any computational execution
is coherent with some symbolic execution, up-to a negligible error probability.
Together with \Cref{lem:balance-preservation},
it proves that computational exchanges of assets, including those mediated by contracts,
are always mirrored at the symbolic level.

\section{The \illum intermediate language}
\label{sec:illum}

We now refine the description of \illum given in~\Cref{sec:overview},
by providing its syntax and semantics.
Because of space constraints, we omit some technicalities,
relying on examples and intuitions: 
see \iftoggle{arxiv}{Appendix~\ref{sec:app-symbolic}}{\cite{BMZ23arxiv}} for full details.

\paragraph*{Syntax}

We assume a set $\Part$ of participants,
(ranged over by $\pmvA, \pmvB, \cdots$, and by a dummy participant $\pmv{Null}$).
Contracts and deposits are denoted by the lowercase letters $x,y,\cdots$,
while clauses will have names $\cVarX,\cVarY, \cdots$.
Clause parameters will be denoted by $\stPar[i]$ and $\dinPar[i]$,
while the actual values substituted to those parameters will be denoted by $\stVal[i]$ and $\dinVal[i]$.
Arithmetic expressions
(integer constants and parameters, basic operations, hashes)
are denoted by $\sexp[i]$, while participant expressions (constants and parameters) are denoted by $\nexp[i]$.
The value of parameters and expressions can also be key-value mappings.
The domain of a mapping can be chosen to be either the integers or the
participants.
The codomain can be chosen similarly.
If $\mathcal{M}$
is a mapping expression, we denote with $\mathcal{M}[\cdots]$
the access to one of its values, and with
$\mathcal{M}[\cdots \rightarrow \cdots]$
the update of one of its associations.
Sequences are denoted in bold, with $\vec{x} = x_1 \cdots x_n$.

We remark that the precise set of data types
that can be used in parameters and expressions is not fundamental to the
design of \illum.
Indeed, our design can be easily adapted to different
data types by suitably altering the syntax and semantics of expressions.
To support compilation to the UTXO model, we simply require that the
underlying blockchain scripting language supports the same data types
and operations.
We assume that data types include at least
integers and participants, since their r\^ole is crucial to \illum
constructs.
Key-value mappings, instead, are not as crucial, and could
be removed if not supported by the underlying blockchain, at the
expense of reducing the usability of \illum.
Throughout the paper, we mostly showcase examples that use only the
fundamental data types (integers and participants).
Notably, the \illum compiler handles all these contracts
without requiring mappings to be supported by the compilation target.
In~\Cref{sec:discussion} we will also exploit mappings to
discuss more complex contracts.

\begin{defn}[Clauses]
  A clause is defined by an equation
  \[
  \clauseDefXParam = \clauseAdv{\vec{\sexp\tokT}}{p}{\contrC}
  \]
  where $\clauseFunding{\vec{\sexp\tokT}}{p}$ is the \emph{funding precondition}, and $\contrC$ is a \emph{process}.
  The clause takes two sequences of parameters $\vec{\stPar}$ and $\vec{\dinPar}$.
  Parameters can be of any type (integers, participants, or mappings).
  These types are always clear from the context, hence omitted.
  We require that all the parameter names in
  $\vec{\sexp}$, $p$, and $\contrC$ are present
  in $\vec{\stPar}$, $\vec{\dinPar}$.
\end{defn}

When $\cVarX$ is invoked, the calling process provides
the actual internal parameters, while the participants
who are performing the call choose the actual external ones.
The funding precondition $\clauseFunding{\vec{v\tokT}}{p}$
encodes the requirements for the invocation of $\cVarX$.
Namely, the sequence $\vec{v\tokT} = v_1 \tokT[1] \cdots v_n \tokT[n]$
asks the participants to transfer $v_i$ tokens of type $\tokT[i]$
to the process $\contrC$ (for all $i$).
Moreover, $p$ is a boolean condition on the parameters that must hold.
We write $\clauseFunding{\vec{v\tokT}}{}$ for $\clauseFunding{\vec{v\tokT}}{\true}$.

\begin{defn}[Processes]
  Processes have the following syntax:
  \begin{align*}
    \contrC & \bnfdef \textstyle \sum_{i \in I} \contrD[i]
    && \text{process}
    \\[4pt]
    \contrD	& \bnfdef
    && \text{branch} 
    \\
    & \ncadv{( \cdots \mmid \clauseCallX[i]{\stVal[i]}{?} \mmid \cdots )}
    && \text{call clauses $\cdots \cVar[i]{X} \cdots$}
    \\
    \bnfmid & \withdrawC{(\cdots \mmid \sexp[i]\tokT[i] \rightarrow \nexp[i] \mmid \cdots )}
    && \text{transfer $\sexp[i]\tokT[i]$ to each $\nexp[i]$}
    \\
    \bnfmid	& \authC{\nexp}{\contrD}
    && \text{wait for $\nexp$ authorization}
    \\
    \bnfmid	& \afterC{\sexp}{\contrD}
    &&  \text{wait until time $\sexp$}
    \\
    \bnfmid	& \afterRelC{\sexp}{\contrD}
    && \text{wait $\sexp$ after activation} 
  \end{align*}
  where we assume that:
  \begin{inlinelist}

  \item each clause name $\cVar{X}$
    has a unique defining equation
    \mbox{$\clauseDefXParam= \clauseAdv{\vec{\sexp\tokT}}{p}{\contrC}$};
    
  \item the sequence $\vec{\stVal}$ of actual parameters passed
    to a called clause $\cVarX[i]$ matches, in length and typing,
    the sequence $\vec{\stPar}$ of formal (internal) parameters;

  \item the order of decorations is immaterial.
    
  \end{inlinelist}
\end{defn}

A clause $\cVarX(\cdots)$
together with two correctly typed sequences of actual parameters
$\vec{\stVal}$ and $\vec{\dinVal}$ is said to be an \emph{instantiated clause}, and denoted by $\clauseCallXParam$.

\paragraph*{Semantics}

The execution of contracts is modelled as a transition relation
between \keyterm{configurations}, that are abstract representations
of the blockchain state.
In a configuration, tokens can be stored in deposits and active contracts.

A \keyterm{deposit} $\confDep[x]{\pmvA}{v  \tokT}$ represents $v$ units of token $\tokT$ owned by~$\pmvA$.
It is uniquely identified by the name $x$,
and can only be spent upon $\pmvA$'s authorization.
A term $\confContrTime[x]{t}{\contrC}{\vec{v\tokT}}$ is an
\keyterm{active contract}, where $x$ is the unique identifier,
$t$ is the activation time,
and $\vec{v\tokT}$ is the balance, 
which can only be transferred according to the contract logic specified by~$\contrC$.

Besides the terms used to store tokens,
in a configuration we also have advertisements and authorizations.

\keyterm{Advertisement} terms are used by a participant
to propose one of the following actions:
\begin{itemize}

\item The activation of a new contract.
  This is done by advertising
  $\confF = \confAdvInitN{\clauseCallXParam}{\vec{z}}{}{h}$, which specifies the instantiated clause $\clauseCallXParam$
  and a nonempty list $\vec{z}$ of deposit names that will be spent to fund the contract.
  The index $h$ is just a nonce used to differentiate between two otherwise identical advertisements.

\item The continuation of an active contract.
  This is done by advertising $\confF = \confAdvN{\complD}{\vec{z}}{}{x}{j}{h}$.
  The list $\vec{z}$ specifies the deposits that will be spent
  for the continuation and added to the balance of $x$.
  The index $h$ is again a nonce. 
  The term $\complD$ is an \emph{advertised branch},
  constructed by taking $\contrD$, the $j$-th branch of $x$, and instantiating the question marks $?$ appearing in a $\callname$
  with the actual values $\dinVal$.


\end{itemize} 

\keyterm{Authorization} terms are used by participants
to enable the spending of deposits
and to enable the execution of a contract branch
decorated by $\authC{\pmvA}{\cdots}$.
Authorizations have the form $\confAuth{\pmvA}{\chi}$,
where $\pmvA$ is the authorizing participant,
and $\chi$ denotes the authorized action.
We see here two cases of authorization terms,
relegating the others to Appendix~\ref{sec:app-symbolic}:
\begin{itemize}

\item $\confAuth{\pmvA}{\authBranch{z}{\confF}}$
  authorises the spending of a deposit $z$ owned by $\pmvA$ that appears in
  the advertisement $\confF$;

\item $\confAuth{\pmvA}{\authBranch{x}{\confF}}$
  authorises the continuation of the \mbox{$j$-th} branch of a contract $x$, as advertised by
  \mbox{$\confF= \confAdvN{ \authC{\pmvA}{\complD} }{\vec{z}}{}{x}{j}{h}$}.

\end{itemize}

\begin{defn}[Configurations]
  A configuration $\confG$ is a term $\tilde{\confG} \mid t$,
  where $t \in \Nat$ denotes the time,
  and the pre-configuration $\tilde{\confG}$ has the following syntax:
  \begin{align*}
    \tilde{\confG} \ \bnfdef \;
    & \confContrTime[x]{t}{\contrC}{\vec{v\tokT}} && \text{active contract}
    \\
    \bnfmid & \confDep[x]{\pmvA}{v\tokT} && \text{deposit}
    \\
    \bnfmid & \confF && \text{advertisement}
    \\
    \bnfmid &\confAuth{\pmvA}{\chi} && \text{authorization}
    \\
    \bnfmid & (\tilde{\confG} \mid \tilde{\confG}) && \text{parallel composition}
  \end{align*}
  We assume that:
  \begin{inlinelist}
  \item the parallel composition is associative and commutative;
  \item all parallel terms are distinct;
  \item names are unique;
  \item all expressions occurring in active contracts are reduced to constants
    (integers or names).
  \end{inlinelist}
\end{defn}

\paragraph*{An example}

In \Cref{sec:overview} we have discussed the intuition behind
the language semantics.
Here we refine this intuition by precisely illustrating the evolution
of the configuration during the execution of a simple contract.
This example shows the semantics of the main language constructs,
and the role of advertisements and authorizations terms.
\begin{align*}
  \clauseDefNonVector{\cVar{X}}{a}{b}
  & = \clauseAdv{b\tokT}{b>a }{\contrFmt{Wait}(b)}
  \\
  \contrFmt{Wait} (b)
  & = \afterRelC{10}{\withdrawC{(\splitB{b\tokT}{\pmvA})}} \\
  & \, + \authC{\pmvB}{\ncadv{\clauseCallNonVector{\cVar{X}}{b}{?}}}
\end{align*}

The first branch allows $\pmvA$ to withdraw the whole balance
after 10 time units since the contract activation.
The second branch allows $\pmvB$ to temporarily prevent $\pmvA$
from withdrawing: this requires $\pmvB$ to restart the contract
with an increased balance.
We start from the initial configuration:
\[
\confG[0] =  \confDep[{z_1}]{\pmvC}{1\tokT} \mid \confDep[{z_2}]{\pmvB}{2\tokT} \mid \confDep[{z_3}]{\pmvB}{3\tokT} \mid t
\] 
Participant $\pmvC$ starts by advertising
$\confF[0] =\confAdvInitN{\clauseCallNonVector{\cVar{X}}{0}{1}}{z_1}{}{h}$,
then authorizes the use of their deposit $z_1$ in the stipulation,
and finally stipulates the contract, reaching configuration $\confG[1]$:
\begin{align*}
  \confG[0]
  & \rightarrow \confG[0]\mid \confF[0]
  \;\rightarrow\;
  \confG[0]\mid \confF[0] \mid \confAuth{\pmvC}{\authBranch{z_1}{\confF[0]}}
  \\
  &\rightarrow \confContrTime[x]{t}{ \contrFmt{Wait}(1) }{1\tokT} \mid \confDep[{z_2}]{\pmvB}{2\tokT} \mid \confDep[{z_3}]{\pmvB}{3\tokT} \mid t = \confG[1]
\end{align*}

From $\confG[1]$ there are multiple possible continuations. For instance, $\pmvB$ can choose to execute the second branch of $\contrFmt{Wait}$. 
To do so, $\pmvB$ first produces the advertisement
$\confF[1] = \confAdvN{\authC{\pmvB}{\ncadv{\clauseCallNonVector{\cVarX}{1}{3}}} }{z_2}{}{x}{2}{h}$.
Then, $\pmvB$ gives two authorizations: one to satisfy the decoration $\authC{\pmvB}{\cdots}$ in the second branch of $\contrFmt{Wait}$,
and another one to allow the spending of $z_2$.
With these, the configuration can evolve as follows:
\begin{align*}
  \confG[1]
  & \rightarrow \confG[1]\mid \confF[1]
  \; \rightarrow \;
  \confG[1] \mid \confF[1] \mid \confAuth{\pmvB}{\authBranch{x}{\confF[1]}}  
  \\
  & \rightarrow
  \confG[1] \mid \confF[1] \mid \confAuth{\pmvB}{\authBranch{x}{\confF[1]}}\mid \confAuth{\pmvB}{\authBranch{z_2}{\confF[1]}}
  \\
  & \rightarrow \confContrTime[x']{t}{ \contrFmt{Wait}(3) }{3\tokT} \mid \confDep[{z_3}]{\pmvB}{3\tokT} \mid t = \confG[2]
\end{align*}

$\pmvB$ can again choose the second branch,
this time spending $z_3$ to fund its execution.
Otherwise, if $\pmvB$ lets the time pass,
$\pmvA$ can advertise the continuation:
\[
\confF[2] = \confAdvN{\afterRelC{10}{\withdrawC{(\splitB{3\tokT}{\pmvA})}} }{ }{ }{x'}{1}{h}
\]
and then withdraw the contract balance:
\begin{align*}
  \confG[2]
  & \rightarrow \confContrTime[x']{t}{ \contrFmt{Wait}(3) }{3\tokT} \mid \confDep[{z_3}]{\pmvB}{3\tokT} \mid t + 10 = \confG[3]
  \\
  & \rightarrow \confG[3] \mid \confF[2]
  \; \rightarrow \;
  \confDep[z_4]{\pmvA}{3\tokT} \mid \confDep[{z_3}]{\pmvB}{3\tokT} \mid t + 10
\end{align*}

The semantics of \illum has a set of rules for
reducing contracts, and another set for deposits
(see Appendix~\ref{sec:app-symbolic}).

\paragraph*{Turing-completeness}
\illum is Turing-complete: indeed, we can simulate in \illum
any counter machine~\cite{FMR68},
a well-known Turing-complete computational model.
The proof is similar to that in~\cite{BLZ21csf}:
we simulate any counter machine by 
storing each counter in the arguments of recursive clauses.
Incrementing and decrementing the counters is simply done by
specifying the new values of the arguments inside the $\callname$.
Conditional jumps are simulated as choices,
also exploiting clause preconditions.
This construction does not exploit key-value mappings:
it is only based on the assumption that integers are unbounded, as usual.
Notice that, despite its Turing-completeness, \illum can be compiled down to a ``poor'' UTXO blockchain, \ie one with non-Turing-complete scripts.
This is accomplished by spreading the execution of a compiled contract
across multiple transactions, each with its own loop-free script. 
Note that our key-value mappings just feature operators to lookup a single key
and to update a single association:
in this way, even with maps, UTXO scripts can be run in nearly constant-time.
This makes the gas mechanism unnecessary. 

\section{Compiling \langname to UTXO scripts}
\label{sec:low-level-model}
\label{sec:compiler}

The compilation target of \langname
is a UTXO blockchain that is close to Bitcoin,
with minimal extensions in the structure of transactions and in the
scripting language to overcome its expressiveness limitations.


\paragraph*{Scripting language}

We consider a scripting language that extends Bitcoin Script with
covenants, borrowing from~\cite{BLZ20isola}
(see  
\iftoggle{arxiv}{Appendix~\ref{sec:app-computational}}{\cite{BMZ23arxiv}}
for its syntax and semantics).
Here we recap some operators that are used by our compiler.
First,
$\ctxo{\txf}$ denotes the field $\txf$ of the current transaction output
that is being spent.
Similarly, $\rtxo{\txf}{e}$ denotes the field $\txf$ of the $e$-th output of the redeeming transaction.
Then, we have the covenants:
$\verscript{\scr}{n}$
checks that the $n$-th output script of the redeeming transaction
is equal to $\scr$, while
$\verrec{n}$ checks that the $n$-th output script of the redeeming transaction
is equal to the one of the output being spent.
The operators $\inidx$ and $\rtxw$ denote, respectively,
the position of the redeeming input among the ones of the redeeming transaction,
and the witness associated to it.
To improve readability, we will use names instead of indices
when referring to arguments in the $\txarg$ sequence
(\eg, we write $\ctxo{\txowner}$ for $\ctxo{\txarg}.1$).

While constructing a contract script we will need to replace the parameters $\stPar[i]$ and $\dinPar[i]$ appearing in an expression $\sexp$ with the respective arguments $\ctxo{\txTag{}{In_i}}$ and $\ctxo{\txTag{}{Ex_i}}$.
To simplify the notation, we denote this substitution with $\ctxo{\sexp}$.
For instance, if the contract contains a term
$\afterC{t}{\cdots}$ with $ t = \stPar[1]+ \dinPar[1] + 3$,
we will write $\ctxo{t}$ instead of
$\ctxo{\txTag{}{In_1}}+ \ctxo{\txTag{}{Ex_1}} + 3$.
Similarly, whenever an expression uses the arguments of a redeeming transaction's output, we denote it as $\rtxo{\sexp}{j}$.

\paragraph*{Representing deposits}

We represent a deposit $\confDep{\pmvA}{v\tokT}$ in an \langname configuration
as a transaction output with value $v\tokT$,
argument $\txowner$ set to $\pmvA$,
and the script:
\[
\versig{\ctxo{\txowner}}{\rtxw.1}
\]
allowing $\pmvA$ to spend the funds by providing her signature.

\paragraph*{How the compiler works}

Representing an active contract $\confContrTime[x]{t}{\contrC}{\vec{v\tokT}}$ 
at the blockchain level is more complex:
we need to consider the clause $\clauseCallXParam$ from which it originated.
The output representing $x$ has a value $\vec{v\tokT}$,
and its arguments are the following:
$\txname$ for the clause name $\cVarX$,
for the actual parameters $\txTag{}{In_i}$ and $\txTag{}{Ex_i}$,
and two technical arguments $\txnonce$ and $\txbranch$. 
The output script of a contract is preserved along executions:
we detail its construction in the next paragraphs,
refining and generalising the intuitions given in~\Cref{sec:overview}
(the full technical details are in 
\iftoggle{arxiv}{Appendix~\ref{sec:app-compiler}}{\cite{BMZ23arxiv}}).

Let $\cVarX[0]$ be the initial clause of a contract,
and let $\cVarX[1] \cdots \cVarX[n]$ be the clauses
that can be reached by recursively following every $\callname$ operation
that appears in $\cVarX[0]$'s definition.
We assume that $\cVarX[i]$ defines the process $\contrC[i]$.
We generate a script for the overall contract as follows.
The script requires that the output must be redeemed
from an input in the first position.
Then, it performs a switch on the $\txname$ argument to see which clause
is currently encoded in the transaction output,
and choose accordingly which script is going to be executed:
\begin{align*}
  \txscript \coloneqq \;
  & \inidx = 1 \andE
  \\
  & \ifE{ (\ctxo{\txname} = \cVarX[0]  ) }{\scr_{X_0}}{}
  \\[-4pt]
  & \vdots
  \\
  & \ifE{(\ctxo{\txname} = \cVarX[n] )}{\scr_{X_n}}{\false}
\end{align*}

To construct the script associated to a clause $\cVarX$
in $\cVarX[0] \cdots \cVarX[n]$,
we inspect its process $\contrC = \contrD[1]+\cdots+\contrD[m]$,
and associate an integer $n_j$ with each $\contrD[j]$ as follows:
if $\contrD[j]$ ends in a $\callname$,
then $n_j$ is equal to the number of called clauses;
otherwise, if $\contrD[j]$ ends in a $\withdrawname$,
then $n_j$ is equal to the number of participants receiving the funds.
This $n_j$ will be the number of outputs of a transaction
that redeems the $j$-th branch of $\contrC$.
This transaction must also specify the value $j$
in the $\txbranch$ argument of each of its outputs.
To check these conditions, we use:
\begin{align*}
  \mathsf{B}_j \coloneqq \;
  & (\outlen{\rtx} = n_j \andE \rtxo{\txbranch}{1} = j \andE
  \\
  & \cdots \andE \rtxo{\txbranch}{n_j} =j)
\end{align*}
Then, we handle all the branches, with a conditional
\begin{align*}
  \scr_{X} \coloneqq
  & \ \ifE{\mathsf{B}_1}{\scr_{\contrD[1]}}{}
  \\
  &\quad \vdots \qquad \quad \vdots \quad \qquad \vdots
  \\
  & \ \ifE{\mathsf{B}_m}{\scr_{\contrD[m]}}{\false}
\end{align*}

Each branch $\contrD$ in $\contrD[1] \cdots \contrD[m]$
is a sequence of decorations ended by a $\callname$ or $\withdrawname$. 
To construct $\scr_{\contrD}$, we first focus on the decorations.
If there is an authorization decoration, then the
witnesses of the redeeming transaction requires a signature
by the authorizing participant:
\begin{align*}
	\scr_{\authC{\pmvA[1]}{\decC{\cdots}{ \authC{\pmvA[k]}{\contrDi} } }} \coloneqq& \ \versig{\ctxo{\pmvA[1]} }{\rtxw.1} \andE \cdots
	\\ & \ \andE\versig{\ctxo{\pmvA[k]}}{\rtxw.k}  \andE {\scr_{\contrDi}}
\end{align*}
where $\contrDi$ does not contain any authorization decoration.
The ``after'' decorations are handled by the corresponding script operators
${\sf absAfter}$/${\sf relAfter}$ for absolute/relative timelocks:
\begin{align*}
  \scr_{\afterC{t}{\contrDii}}
  & \coloneqq \afterAbs{\ctxo{t}}{\scr_{\contrDii}} 
  \\
  \scr_{\afterRelC{\delta}{\contrDii}}
  & \coloneqq \afterRel{\ctxo{\delta}}{\scr_{\contrDii}}
\end{align*}

Finally, we describe the terminal parts of the script, \ie
$\withdrawname$ and $\callname$.
First, we consider the case:
\[
\contrDii = \withdrawC{(\splitB{v_1\tokT[1]}{\pmvA[1]} \mmid \cdots \mmid  \splitB{v_n\tokT[n]}{\pmvA[n]})}
\]
Here, we want each output of the redeeming transaction
to encode a deposit of value $\valV[k]\tokT[k]$ owned by $\pmvA[k]$. 
We use the operator $\verscript{}{}$
to force the redeeming transaction to have the correct script,
and $|\cdot|$ to check that it has
exactly 3 arguments
(corresponding to $\txnonce$, $\txbranch$, and $\txowner$).
The script also checks the output values and the owners:
\begin{equation*}
  \scr_{\contrDii} \coloneqq
  \left.
  \begin{aligned}
    & |{\rtxo{\txarg}{i}}|= 3 \andE
    \\
    & \verscript{ \versig{\ctxo{\txowner}}{\rtxw.1}}{i} \andE
    \\
    & \rtxo{\txowner}{i} = \ctxo{\pmvA[i]} \andE
    \\
    & \rtxo{\txval}{i} = \ctxo{v_i} \tokT[i]
    \andE \cdots
  \end{aligned}
  \right\}
  \rotatebox{90}{\hspace{-12pt}\small$i \in 1..n$}
\end{equation*}

\noindent
The last case is that for a $\callname$:
\[
\contrDii= \ncadv{ ( \clauseCallY[1]{ \stVal[1] }{?} \mmid \cdots \mmid \clauseCallY[n]{\stVal[n]}{?})}
\]
Let
$\clauseDefY[i]{\stPar[i]}{\dinPar[i]} = \clauseAdv{\vec{\sexp[i]\tokT[i]}}{p_i}{\!\contrC[i]}$,
where $|\vec{\stPar[i]}| = k_i$ and $|\vec{\dinPar[i]}|= h_i$.
The script $\scr_{\callname i}$
requires that the $i$-th output of the redeeming transaction
encodes the contract specified by
$\clauseCallY[i]{\stVal[i]}{\dinVal[i]}$,
for some choice of the parameters $\vec{\dinVal[i]}$.
We use $\verrec{}$ to preserve the contract script,
and then check that the output has the correct number of arguments.
We also require that the $\txname$ is $\cVarY[i]$,
and that the arguments $\txTag[]{}{\mathit{In}_{i j}}$
match the actual parameters.

Finally, we check the funding precondition:
\begin{align*}
  \scr_{\callname i} \coloneqq
  & \ \verrec{i} \andE |\rtxo{}{i}|= 3 + k_i + h_i \andE
  \\
  & \ \rtxo{\txname}{i} = \cVarY[i] \andE
  \\
  & \ \rtxo{\txTag[]{}{\mathit{In}_{i1}}}{i} = \ctxo{\stVal[{i \, 1}]}
  \andE \cdots \andE 
  \\
  & \ \rtxo{\txTag[]{}{\mathit{In}_{i k_i}}}{i} = \ctxo{\stVal[{i \, k_i}]} \andE
  \\
  & \ \rtxo{\txval}{i}=\rtxo{ \vec{\sexp[i] \ \tokT[i] } }{i} \andE
  \\
  & \ \rtxo{p_i}{i}   
\end{align*}
This must be done for all $\cVarY[1] \cdots \cVarY[n]$, obtaining:
\[
\scr_{\contrDii} \coloneqq \scr_{\callname 1} \andE \cdots  \andE \scr_{\callname n} 
\]


\paragraph*{Executing a compiled contract}
The \langname compiler translates an \langname contract into a script.
In this way, 
the compilation process creates a correspondence between
active contracts and transaction outputs on the blockchain.
We will formalise this \emph{coherence relation} later when
establishing the security of the compiler.
For now, we just note that each execution step 
of an active \langname contract corresponds, in the blockchain,
to a new transaction redeeming the previous output. 

\paragraph*{A hint about the correctness of the compiler}

The script produced by the compiler imposes very stringent conditions
on the redeeming transaction $\txT$.
In particular, its outputs are almost completely determined
by the compiled script:
the number of outputs and their assets are fixed;
their script is determined either by $\verrec{}$ (in the $\callname$ branches)
or by $\verscript{}{}$ (in the $\withdrawname$ branches);
the number of arguments is fixed, and
the value of most of the arguments (which encode the contract state)
is determined by the script.
The only ``free'' fields in $\txT$ are the arguments representing
the external contract parameters, which are only subject to respect
the funding precondition.
This mirrors the \langname semantics,
where participants can choose the actual external
parameters at runtime.
This \say{rigidity} is important in establishing
the correctness of the \langname compiler:
if a UTXO encodes an active contract
$\confContrTime[x]{t}{\contrC}{\vec{v\tokT}}$,
then any transaction that redeems it must behave
in ``agreement'' with one of the branches of $\contrC$.
The full details of the proof of correctness are presented in \iftoggle{arxiv}
{\Cref{sec:app-correctness-of-compiler}}
{\cite{BMZ23arxiv}}.

\section{Adversary model}
\label{sec:adversary}

The semantics of \langname describes \emph{all}
actions that can be performed on contracts and deposits. 
For this reason the set of reachable configurations is very broad.
In particular, it always contains the configuration obtained by donating all the deposits to a single participant.
However, in a realistic scenario, this configuration would not be reached because participants would have no interest in authorizing the donations. 
To avoid considering these unrealistic executions,
we need to restrict the semantics according to the behaviour of participants,
which can decide whether to authorize or not any given action.
To this aim, we introduce \emph{strategies}, which are algorithms
that model the participants' behaviour, computing the actions chosen
by a participant at each execution step.
We assume a subset $\PartT$ of participants for whom the strategies are known.
The strategies of these \emph{honest} participants
are instrumental in defining the adversary model.
Namely, we see the adversary $\Adv$ as an entity that 
controls the scheduling of the actions chosen by honest participants, and possibly inserts their own actions.
This is consistent with adversarial miners/validators in blockchains,
who can read user transactions in the mempool, and produce blocks containing some of these transactions, suitably reordered and possibly interleaved with their own transactions.
Intuitively, we model the adversary as a strategy that controls all participants outside of $\PartT$, can observe the actions outputted by the strategies of honest participants, and can decide to perform one of these actions or one of their own.

\paragraph*{Symbolic runs}

A symbolic run $\runS$ is a sequence of configurations $\confG[i]$ connected by semantic actions $\labS[i]$.
The first configuration in the sequence only contains deposits, and has time $t_0 = 0$.
We denote with $\confG[\runS]$ the last configuration of $\runS$. A run is written as $\runS = \confG[0] \xrightarrow{ \labS[0]} \confG[1] \xrightarrow{\labS[1]} \cdots$.

\paragraph*{Symbolic strategy of honest participants}

Each honest participant $\pmvA$ has a strategy $\stratS{\pmvA}$,
\ie a PPTIME algorithm that takes as input the symbolic run $\runS$ and outputs the set of \say{choices} of $\pmvA$,
\ie the \langname actions that $\pmvA$ wants to perform.
The strategy is subject to well-formedness constraints:
\begin{inlinelist}
\item the actions in $\stratS{\pmvA}(\runS)$ must be enabled by the semantics in $\confG[\runS]$; 
\item each authorization action in $\stratS{\pmvA}(\runS)$ must be of
  the form  $\confAuth{\pmvA}{\authBranch{\cdots}{\cdots}}$, forbidding $\pmvA$ to impersonate another participant.
\end{inlinelist}

\paragraph*{Symbolic adversarial strategy}

Dishonest participants are controlled by the adversary $\Adv$,
who is also in charge of scheduling updates to the run. 
Their strategy $\stratS{\Adv}$ is a PPTIME algorithm
that takes as inputs the run $\runS$ and the sets of choices
given by the honest participants' strategies.
The output of $\stratS{\Adv}$ is a single \langname action $\labAdv$
that will be used to update the run.
$\Adv$'s strategy is subject to the following constraints:
\begin{enumerate*}[$(i)$]
\item $\labAdv$ must be enabled in $\confG[\runS]$;
\item if $\labAdv$ is an authorization action by a honest $\pmvA$,
  then it must have been chosen by $\pmvA$;
\item if $\labAdv$ is a delay, then it must have been chosen by all honest participants.
\end{enumerate*}
The second condition prevents $\Adv$ from forging signatures,
while the third condition ensures that $\Adv$ cannot
prevent honest participants from meeting deadlines.

\paragraph*{Symbolic conformance}

Since a strategy $\stratS{\pmvA}$ is probabilistic,
we implicitly assume that it takes as input
an infinite sequence of random bits $\rndR_{\pmvA}$.
Consider now a set of strategies $\stratSSet$
including those of honest participants, $\stratS{\Adv}$,
and a random source $r$ from which the sequences $r_{\pmvA}$ are derived.
We can uniquely determine a run $\runS$ by
performing the actions outputted by $\stratS{\Adv}$.
Such a run is said to \emph{conform} to $(\stratSSet, r)$.

\paragraph*{Computational runs}

Above, we have defined an adversarial model at the symbolic level.
We model adversaries at the computational level in  a similar way,
replacing symbolic actions with computational ones.
A computational run $\runC$ is a sequence of actions $\labC$
in one of these forms:
\begin{align*}
  & \txT
  && \text{appending transaction $\txT$ to the blockchain}
  \\
  & \delta
  && \text{performing a delay}
  \\
  & \pmvA \rightarrow \ast:m
  && \text{broadcasting of message $m$ from $\pmvA$}
\end{align*}

The first action in the run $\runC$ is an initial transaction
that distributes tokens to participants, and it is followed by the broadcast of each participant's public keys.

\paragraph*{Computational strategies of honest participants}

Each honest $\pmvA$ is associated with
a computational strategy, \ie a PPTIME algorithm $\stratC{\pmvA}$
that takes as input a computational run and outputs
the set of choices of $\pmvA$.
If $\stratC{\pmvA}(\runC)$ includes an action $\labC = \txT$,
then $\txT$ must be \emph{consistent} with $\runC$,
essentially meaning that $\txT$ is a valid transaction in the blockchain state
reached after the run $\runC$.

\paragraph*{Computational adversarial strategy}

Like in the symbolic case, the adversary is given scheduling power.
The strategy $\stratC{\Adv}$ takes as input
the run $\runC$ and the actions chosen by honest participants,
and outputs a single action that will be used to update $\runC$.
As for honest participants, the adversary cannot output invalid transactions.
Like in the symbolic case, $\stratC{\Adv}$ is only allowed to output a delay
if it has been chosen by all honest participants.
We allow the adversary to impersonate any honest participants $\pmvA$.
However, since $\Adv$ does not know the random source $r_{\pmvA}$, and $\stratC{\Adv}$ is PPTIME, $\Adv$ will be, with overwhelming probability,
unable to forge $\pmvA$'s signatures.

\paragraph*{Computational conformance}

Like in the symbolic case, a set of strategies $\stratCSet$ and a random source $r$ can be used to uniquely determine a computational run $\runC$,
that is said to conform to the pair $(\stratCSet,r)$.

\section{Security of the \langname compiler}
\label{sec:soundness}

Symbolic and computational runs describe the evolution of contracts
at two different level of abstraction: in \Cref{sec:compiler}
we have shown how transaction outputs encode \langname deposits and contracts.
Formally this correspondence between runs is modelled as a relation, which we call \emph{coherence}. Intuitively, coherence holds when the symbolic steps in $\runS$ and the computational steps in $\runC$ have the same effects on contracts and deposits.

\paragraph*{Coherence}

The coherence relation \mbox{$\coherRel{\runS}{\runC}{\txMapC}$}
is parameterized by a map $\txMapC$ that relates the symbolic names of deposits and contracts to transaction outputs.
Coherence is defined inductively, by exhaustively listing the possible actions of $\runS$. 
Here we present the most important cases, relegating
the full definition to \iftoggle{arxiv}{Appendix~\ref{sec:app-coherence}}{\cite{BMZ23arxiv}}.
\begin{itemize}
  
\item Advertising $\confF$ in $\runS$ is matched by the broadcast of a message $m$ in $\runC$. The message $m$ encodes a transaction $\txT[\confF]$ representing the action advertised by $\confF$. In particular, the script of $\txT[\confF]$'s outputs must be the one produced by the compiler.
Note that $\txT[\confF]$ is not yet appended to the blockchain, but only broadcast.

\item Sending an authorization $\confAuth{\pmvA}{\authBranch{\cdots}{\confF}}$ in $\runS$ is matched by sending a message $m$ in $\runC$, containing a corresponding signature from $\pmvA$ on $\txT[\confF]$.

\item Initiating a contract in $\runC$ consumes the respective advertisement $\confF$ and the required authorizations and deposits to insert a term $\confContrTime[x]{t}{\contrC}{\vec{v\tokT}}$ in the configuration. 
  This is matched in $\runC$ by appending $\txT[\confF]$ to the blockchain.
  This uses the signatures that were broadcast together with the symbolic authorizations.
  Moreover, the map $\txMapC$ is updated so that the new symbolic name $x$ is mapped to $\txT[\confF]$'s output.

\item Continuing a contract in $\runC$ is similar to the contract initiation described above, except that it may produce multiple deposits or contracts instead of a single one. 
  Again, it is matched in $\runC$ by $\txT[\confF]$, where $\confF$ is the continuation advertisement, and $\txMapC$ is updated to map the new names to each output of $\txT[\confF]$. 

\end{itemize}

The full definition also deals with deposit operations, delays, and transactions that spend inputs that are outside of the image of $\txMapC$.
The coherence relation is instrumental to establish correspondence results
between the two models. 
Notably, \Cref{lem:balance-preservation} shows that the coherence relation precisely characterizes the exchange of assets.

\begin{lem}
  \label{lem:balance-preservation}
  If $\coherRel{\runS}{\runC}{\txMapC}$ and $\confDep[x]{\pmvA}{v\tokT}$
  is a deposit appearing in the last configuration $\confG[\runS]$,
  then $\txMapC(x)$ is an unspent output in $\runC$ and
  encodes the deposit $x$ (\ie it has the structure presented
  in~\Cref{sec:low-level-model}).
  Notably, this means that the token balance is preserved by $\txMapC$.
\end{lem}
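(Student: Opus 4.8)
The plan is to prove the statement by induction on the derivation of the coherence relation $\coherRel{\runS}{\runC}{\txMapC}$, which is itself defined inductively by listing the possible symbolic actions. To make the induction close, I would first strengthen the invariant so that it also covers active contracts: for every deposit $\confDep[x]{\pmvA}{v\tokT}$ \emph{or} active contract $\confContrTime[x]{t}{\contrC}{\vec{v\tokT}}$ occurring in $\confG[\runS]$, the output $\txMapC(x)$ is unspent in $\runC$ and encodes $x$ with the structure of \Cref{sec:low-level-model} --- the deposit script $\versig{\ctxo{\txowner}}{\rtxw.1}$ with $\txowner = \pmvA$ and value $v\tokT$ for deposits, and the compiled contract script with the appropriate $\txarg$ and $\txval$ for contracts. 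This strengthening is needed because a fresh deposit can be produced by a $\withdrawname$ step of a contract, so to reason about its output I must already know the spending contract's output was well-formed.

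For the base case, the first configuration of $\runS$ contains only deposits, and the base clause of coherence maps them to the outputs of the initial distributing transaction; by construction these have the deposit structure and are unspent, since no transaction has yet spent them. For the inductive step I would assume the invariant for a coherent pair and case-split on the symbolic action $\labS$ that extends the run. Actions that append no transaction and neither create nor destroy deposits or contracts --- advertisements, authorizations, delays --- leave both the relevant fragment of $\confG[\runS]$ and the UTXO set unchanged, so the invariant is inherited directly. The substantive cases are contract initiation, contract continuation (covering the $\withdrawname$ and $\callname$ branches), and the deposit operations such as split, join and donate, in each of which a transaction $\txT$ is appended to the blockchain.

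For these cases I would rely on two properties that the definition of coherence guarantees. First, the inputs of the appended $\txT$ are exactly the $\txMapC$-images of the symbolic terms consumed by $\labS$; hence every deposit or contract that survives into the extended symbolic configuration keeps an output that is not among $\txT$'s inputs, and therefore stays unspent. Second, whenever $\labS$ creates new terms, coherence extends $\txMapC$ so that each new symbolic name is mapped to a corresponding output of $\txT$, whose shape is pinned down by the compiler's output conditions: in a $\withdrawname$ branch the covenant $\verscript{\versig{\ctxo{\txowner}}{\rtxw.1}}{i}$, together with the checks $|\rtxo{\txarg}{i}| = 3$, $\rtxo{\txowner}{i} = \ctxo{\pmvA[i]}$ and $\rtxo{\txval}{i} = \ctxo{v_i}\tokT[i]$, force precisely the deposit structure with the intended owner and value; the $\callname$ branches and the deposit operations are matched against the contract script and the deposit script in the same way. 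The ``Notably'' clause is then immediate, as encoding the deposit $x$ entails $\txMapC(x).\txval = v\tokT$.

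I expect the main obstacle to be breadth rather than depth: coherence is defined by roughly twenty cases, and each must be revisited to confirm that it either leaves the relevant outputs untouched or creates and consumes them in lockstep with the symbolic configuration. The one genuinely delicate point is the bookkeeping that an appended transaction spends precisely the outputs of the consumed symbolic terms and nothing else --- this is exactly what keeps the surviving deposits unspent --- so I would isolate it as an explicit sub-invariant, namely that the restriction of $\txMapC$ to the terms of $\confG[\runS]$ lands in the set of unspent outputs of $\runC$, and carry it through the induction alongside the structural claim.
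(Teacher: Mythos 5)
Your proposal is correct and follows essentially the same route as the paper, which establishes this lemma (via its appendix propositions on output correspondence and on the structure of encoding outputs) by induction on the coherence relation, with exactly your strengthening to cover active contracts alongside deposits and your reliance on the compiler's covenants and the explicit clauses of the coherence definition to pin down the shape and value of the newly created outputs. The only slight imprecision is that an appended transaction may additionally spend inputs \emph{outside} $\ran{\txMapC}$ (the ``destroyed funds''), so its inputs are not \emph{exactly} the $\txMapC$-images of the consumed symbolic terms --- but since those extra inputs lie outside the range of $\txMapC$, your key sub-invariant that surviving deposits remain unspent goes through unchanged.
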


This coherence result is lifted to an analogous lemma for contracts. 
We also establish the injectivity of the map $\txMapC$, which ensures that no two distinct symbolic deposits or contracts are represented by the same transaction output. This means that the whole volume of assets is preserved by the map.
Moreover, the coherence relation can be used as a guide to algorithmically translate the symbolic strategy of an honest participant into an equivalent computational strategy.  

\paragraph*{From symbolic to computational strategies}
Here we present the map $\stratMap$ that translate strategies.
Given the symbolic strategy $\stratS{\pmvA}$, the computational strategy $\stratC{\pmvA} = \stratMap(\stratS{\pmvA})$ will do the following:
first parse $\runC$ to create a corresponding symbolic run $\runS$ (using the coherence relation), then run $\stratS{\pmvA}(\runS)$ producing a set of symbolic labels $\LabS$, and lastly use the coherence again to transform each symbolic label into the corresponding computational label, which will be the output of the strategy.

\paragraph*{Security of the compiler}

\Cref{th:computational-soundness} gives us a way to construct a symbolic run $\runS$ that is coherent to a given computational run $\runC$ and conform to a set of given honest symbolic strategies $\stratSSet$.  This is done under the assumption that $\runC$ conforms to the honest computational strategies obtained by translating $\stratSSet$.
By contrast, we make no assumption on the computational adversarial strategy used to construct $\runC$.
Together with \Cref{lem:balance-preservation}, 
computational exchanges of assets, including those mediated by contracts, are mirrored at the symbolic level.

\begin{thm}[Security of the compiler]
  \label{th:computational-soundness}
  Let $\stratSSet$ be a set of symbolic strategies for honest participants,
  let $\stratCSet = \stratMap(\stratSSet)$,
  and let $\stratC{\Adv}$ be a computational adversarial strategy.
  If $\runC$ is a run with polynomial length
  conforming to $\stratCSet \cup \setenum{\stratC{\Adv}}$,
  then there exist, with overwhelming probability,
  a symbolic run $\runS$ and an adversarial strategy $\stratS{\Adv}$
  such that
  \begin{inlinelist}
  \item $\runS$ is coherent with $\runC$, and
  \item $\runS$ conforms to $\stratSSet \cup \setenum{\stratS{\Adv}}$.
  \end{inlinelist}
\end{thm}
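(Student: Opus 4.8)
The plan is to prove the statement by induction on the length of the computational run $\runC$, building the symbolic run $\runS$, the symbolic adversarial strategy $\stratS{\Adv}$, and the map $\txMapC$ incrementally, while maintaining the coherence invariant $\coherRel{\runS}{\runC}{\txMapC}$ as the inductive hypothesis. For the base case, the first computational action distributes the initial tokens and broadcasts the participants' public keys; this is matched by the initial symbolic configuration, which consists solely of deposits, and coherence holds with $\txMapC$ sending each deposit name to the corresponding output of the initial transaction. Since $\runC$ has polynomial length and all strategies are PPTIME, the whole construction runs in polynomial time, so the resulting $\stratS{\Adv}$ is itself a legitimate (PPTIME) symbolic strategy.

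For the inductive step, I assume coherence has been established for the prefix $\runC'$ via $\runS'$, and let $\labC$ be the next computational action produced by $\stratC{\Adv}$. I would proceed by a case analysis on $\labC$, in each case reading off from the coherence relation the symbolic action $\labAdv$ that $\stratS{\Adv}$ must output, and appending it to $\runS'$. If $\labC$ broadcasts a message $m$, I decode $m$: if it encodes a compiled transaction, I match it with the advertisement $\confF$ whose compilation yields that transaction; if it encodes a participant's signature, I match it with the corresponding authorization $\confAuth{\pmvA}{\authBranch{\cdots}{\confF}}$; otherwise the message is symbolically irrelevant and leaves $\runS'$ unchanged. If $\labC$ is a delay, it is matched by a symbolic delay, and the constraint that all honest participants agreed transfers from the computational to the symbolic side through the strategy translation $\stratMap$. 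If $\labC$ appends a transaction $\txT$, then either $\txT$ spends only outputs outside the image of $\txMapC$ (adversary-only activity that does not touch the encoded deposits and contracts, and so is absorbed without affecting the symbolic configuration), or $\txT$ spends an output in the image of $\txMapC$, in which case I match it with the corresponding symbolic step (deposit spending, contract initiation, or contract continuation) and update $\txMapC$ to map the new symbolic names to the outputs of $\txT$. At each step I verify that the chosen $\labAdv$ satisfies the well-formedness constraints on $\stratS{\Adv}$: enabledness in $\confG[\runS']$, that honest authorizations were indeed chosen by the honest symbolic strategies, and that delays occur only under unanimous honest agreement.

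I expect the hard part to be the case where $\txT$ redeems a contract output. Here I would exploit the \say{rigidity} of the compiled script discussed in~\Cref{sec:compiler}: the covenants $\verrec{}$ and $\verscript{}{}$, together with the checks on the number of outputs, the number and values of arguments, the transferred tokens, the timelocks, and the funding precondition, force $\txT$ to agree with exactly one branch of the active contract $\contrC$, leaving free only the arguments that encode external parameters. This must be argued uniformly across all of the roughly twenty cases of the coherence definition, one for each shape of branch and decoration, checking in particular that a witness carries a valid signature of an honest $\pmvA$ precisely when the symbolic semantics requires $\pmvA$'s authorization, and that the external-parameter values read from $\txT$ constitute a legal symbolic choice. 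This cross-case matching is where the bulk of the technical work lies, and it reuses the compiler-correctness analysis.

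Finally, the \say{overwhelming probability} qualifier accounts for the only way the exact correspondence can break: the adversary might append a transaction carrying a valid signature of an honest participant who did not symbolically authorize the corresponding action, or exhibit a hash collision that lets a transaction masquerade as the compilation of a different advertisement. I would rule these out by a standard reduction: if the inductive matching failed with non-negligible probability, the polynomial-time $\stratC{\Adv}$ together with the PPTIME strategy simulation would yield a forger against the EUF-CMA security of the signature scheme or a collision-finder against the hash, contradicting their assumed security. Conditioning on the complement of these negligible bad events, the construction above is deterministic and yields a symbolic run $\runS$ that is coherent with $\runC$ and conforms to $\stratSSet \cup \setenum{\stratS{\Adv}}$, as required.
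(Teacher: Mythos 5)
Your proposal is correct and follows essentially the same route as the paper's proof: a step-by-step case analysis on the computational labels (messages, delays, transactions), using the rigidity of the compiled script and the compiler-correctness lemmas to match each contract-redeeming transaction with a valid symbolic advertisement, and attributing any failure of the matching to a signature forgery occurring with negligible probability. The only cosmetic difference is that the paper phrases the induction as "take the longest coherent prefix and derive a contradiction with its maximality," which is logically equivalent to your invariant-maintaining induction.
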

\begin{proof}[Proof (sketch)]
  We match step-by-step the computational moves with the symbolic moves according to the coherence relation.
  In particular, looking at possible transactions, we have two main cases:
  \begin{itemize}
  \item A deposit operation (e.g. donating a deposit). This requires participant signatures. If in the symbolic run, there are the corresponding authorizations, then this operation has an immediate symbolic counterpart.
    Otherwise, the computational signatures have been forged, which happens with negligible probability.
  \item A contract operation (\eg, a call to a new clause).
    This can be done only with a transaction that satisfies the contract script. Since the script closely matches the symbolic semantics, we can construct the corresponding symbolic move
    (again, the only case where this is not possible is that of a signature forgery).
  \end{itemize}
  The full proof in
  \iftoggle{arxiv}
  {Appendix~\ref{sec:app-computational-soundness}}
  {\cite{BMZ23arxiv}}
  considers all the possible computational moves
  (\eg, outputting a message, waiting),
  and relates them to a specific symbolic action
  that maintains the coherence relation.
  Technically, this requires examining all the \emph{twenty} cases in the definition of coherence and proving that whenever 
  none of them applies, the adversary must have managed to forge signatures.
\end{proof}

\change{%
The above security result can be seen in terms of Robust Trace Property Preservation (RTP)~\cite{Patrignani19csur,Abate19csfw}. RTP can be equivalently~\cite{Abate19csfw} stated in this form:
\[
\forall {\sf \color{blue} P}. \forall {\bf \color{orange} C_T}. \forall t.\ 
{\bf\color{orange} C_T[}{\sf compile}({\sf \color{blue} P}) {\bf \color{orange} ]} \rightsquigarrow t 
\implies \exists {\sf \color{blue} C_S}. \ {\sf \color{blue} C_S [P]} \rightsquigarrow t
\]
This can be read in our setting as ``whenever a \illum contract $\sf\color{blue} P$ is compiled and run in a computational adversarial context $\bf\color{orange} C_T$, producing an execution trace $t$, then there exists a symbolic adversarial context $\sf\color{blue} C_S$ where the original contract $\sf\color{blue} P$ produces the same trace $t$''.

The above property can not be proved as-is in our setting, for a number of reasons. First, computational and symbolic traces have different nature, so we can not claim to have the same trace in both worlds -- we instead claim that we have two traces $\runC, \runS$ which are related by the coherence relation. Furthermore, computational adversaries always have a negligible probability to break the underlying cryptography, so RTP can only hold with overwhelming probability and for traces having polynomial length.
The statement of~\Cref{th:computational-soundness} accounts for these peculiarities. Finally, in our formulation, the adversarial contexts are interpreted as (symbolic/computational) adversarial strategies.
}
\section{From high-level languages to \illum}
\label{sec:conclusions}
\label{sec:discussion}
\label{sec:hellum}


Although \illum provides an abstraction layer over the UTXO transaction model, its clause-based nature may make it unwieldy for developers familiar with the procedural style, which is currently mainstream in the smart contracts community thanks to languages like Solidity.
We show in this section that it is possible to reconcile the UTXO model with the familiar high-level imperative procedural style. More specifically, we consider an expressive fragment of Solidity, and we show how to compile it down to \illum. We evaluate our approach by developing a prototype compiler and interpreter for the high-level language ($\sim$2000 LoC of OCaml code), and by applying it to a benchmark of common smart contracts, including complex DeFi protocols like Automated Market Makers and Lending Pools.
Overall, one can benefit from the formal security guarantees of \illum, while sticking to a familiar development process.

\paragraph{The \hellum contract language}

As a high-level language for contracts in the UTXO model,
we consider a fragment of Solidity, a widespread smart contract language that has been popularized by Ethereum.
To make the compilation into UTXO possible, we get rid of a couple of problematic features, \ie loops and external contract calls. To compensate for the absence of external calls, which are the basis to implement custom tokens in Solidity, our language supports custom tokens natively.

The resulting High-Level Language for the UTXO Model, hereafter dubbed \hellum, is exemplified in \Cref{fig:hll:crowdfunding} through a crowdfunding contract.
The contract collects funds from donors until a \code{deadline}, then it transfers them to the \code{owner} only if the donations reach a given \code{target} amount. If the target is not met, then every donor is entitled to take back their donations.
The constructor sets the contract parameters. The \hllinline{next} modifier constrains which functions can be called next.
The \txcode{deposit} function receives donations, and updates the map \code{funds} accordingly. The modifier \hllinline{input(x:T)} 
requires the caller to pay an amount \code{x} of tokens \code{T} upon a call. The \hllinline{require} command sets the minimum donation to $10$ token units.
The \txcode{finalize} function can only be called after the deadline is reached. If the collected funds (\ie the contract balance) exceed the target, then they are transferred to the owner: otherwise the funds are kept in the contract.
Executing \txcode{finalize} enables the \txcode{withdraw} function, through which donors can take back their donations if the target has not been reached (note that if the target was met, then \txcode{withdraw} transfers no funds).
The modifier \hllinline{auth(a)} requires that any withdraw to  \code{a} must be authorized by the user controlling that address (\ie, the one who knows \code{a}'s private key).

\begin{figure}
  \lstinputlisting[language=hellum,morekeywords={Crowdfund,init,withdraw,deposit,finalize},frame=single]{contracts/crowdfund.hll}
  \negcaptionspace
  \caption{A crowdfund contract in \hellum.}
  \label{fig:hll:crowdfunding}
\end{figure}

We argue that this variant of Solidity is still practical for a wide range of applications (see~\Cref{tab:hellum:benchmark}).
Regarding loops, we note that in general they are discouraged even in Solidity, since they may vehicle gas exhaustion attacks~\cite{solidity-gas-exhaustion} where an adversary causes an iteration to exceed the block gas limit, thereby making the users pay the gas fees for failed transactions, and, possibly, making the contract stuck~\cite{ABC17post}. Despite this limitation, our language features unbounded data structures, in the form of key/value mappings. Iterative behaviours can be obtained by shifting the duty of performing iterations to users, by requiring them to perform repeated calls to contract functions (see \eg the withdrawal of funds in the crowdfunding contract).
Regarding external calls, while in Solidity they are the basis for any interaction between a contract and the environment (including  pure transfers of assets), in the UTXO model they are unnatural, since transaction validation must only involve the scripts referred to in the transaction inputs. Cardano, the main  smart contract platform based on the UTXO model, does not support external calls. In our high-level language we use a special primitive \hllinline{transfer} to exchange tokens, and we restrict calls to internal pure functions. 

\paragraph{Compiling \hellum to \illum}

\begin{figure}
  \lstinputlisting[language=hellum,morekeywords={f,g},frame=single]{contracts/nf4.hll}
  \negcaptionspace
  \caption{Normal form of \hellum functions.}
  \label{fig:hll:nf}
\end{figure}

\hellum contracts can be automatically compiled to \illum. Here we summarize the translation process (see Appendix~\ref{sec:hllc} for more details, and \github for the implementation).
We use the \contract{Test} contract in~\Cref{fig:hll:test7} as a working example to illustrate the compilation process.

First, we process each function in the \hellum contract, passing it through code transformations which bring it to the normal form displayed in~\Cref{fig:hll:nf}. More specifically, a function is in normal form when:
\begin{itemize}
    \item expressions do not contain internal calls to pure functions (these calls are macro-expanded); 
    \item the function starts with a single \hllinline{require} statement, which is the only one appearing in the function body;
    \item after the \hllinline{require}, the rest of the function body is a chain of conditional statements;
    \item each conditional block starts with a sequence of \hllinline{transfer} statements, followed by a single simultaneous assignment of all of the contract variables. This assignment also defines auxiliary variables representing the
    new contract balance after the transfers.
\end{itemize}

For example, the normal form obtained for the \contract{Test} contract is displayed in~\Cref{fig:hll:test7-nf}. In the transformed contract, we use the expression \hllinline{balance_pre(T)} to denote the contract balance of token \hllinline{T} \emph{before} the function call, and the auxiliary variable \hllinline{bal_T_fin} to denote the balance of \hllinline{T} \emph{after} the call.
When in normal form, functions are amenable to be translated into \illum clauses, since the simultaneous assignments effectively specify the new contract state as a function of the old state.

\begin{figure}
  \lstinputlisting[language=hellum,morekeywords={Test,f,g},frame=single]{contracts/test7.hll}
  \negcaptionspace
  \caption{A simple contract in \hellum.}
  \label{fig:hll:test7}
\end{figure}

\begin{figure}
  \lstinputlisting[language=hellum,morekeywords={Test_NF,f,g},frame=single]{contracts/test7-nf.hll}
  \negcaptionspace
  \caption{Normal form of the \contract{Test} contract.}
  \label{fig:hll:test7-nf}
\end{figure}

The \hellum compiler transforms each function \hllinline{f} into two \illum clauses, called \illinline{f_run} and \illinline{f_next}. The clause \illinline{f_run} is used to take the parameters of \hllinline{f} and run the function body. It has one branch for each conditional branch of \hllinline{f}: these branches are enabled by the same conditional guards, and perform the payments alongside with calling \illinline{f_next} with the updated state passes as parameter. 
The funding precondition of \illinline{f_run} is computed taking into account the \hllinline{input} modifiers, as well as the expression enclosed in the \hllinline{require} statement.

The clause \illinline{f_next} allows the execution to continue by calling one of the contract functions, as constrained by the \hllinline{next} modifier in the \hellum function \illinline{f}. To this purpose, \illinline{f_next} has one branch for each of the possible continuation functions. The branches of \illinline{f_next} use the \illum decorators to implement the behaviour of the \hllinline{auth} and \hllinline{after} modifiers of the called \hellum function.

The output of the compiler on the \contract{Test} contract is displayed in~\Cref{fig:ill:test7}, where we use the concrete \mbox{\illum} syntax supported by the compiler. There, we can observe how the clause \illinline{f_run} contains a process with two branches, one for each conditional branch in~\Cref{fig:hll:test7-nf}. Both of these branches call the \illinline{Check} clause so to enable the whole \hllinline{call} if and only if the corresponding conditional branch in \hellum would be taken. The clause \illinline{Check} requires in its precondition that its argument is true, so blocking the \illinline{f_run} branches which do not correspond to the \hellum execution.
The \illum branches call clause \illinline{Pay} to transfer the assets according to the \hllinline{a.transfer(...)} commands found in the corresponding conditional branch of \hllinline{f}
in~\Cref{fig:hll:test7-nf}.
Finally, each branch calls \illinline{f_next} passing the updated state in the parameters.

The correctness of the compilation from \hellum to \illum is straightforward.
First, the code transformations used to bring the \hellum contract in normal form, detailed in Appendix~\ref{sec:hllc}, are standard and clearly preserve the semantics of contracts.
Second, the \illum contract clauses are generated precisely following the simple structure of the obtained normal form, so their semantics is faithful to the original code by construction. Indeed, we perform the same conditional checks in \illum, we transfer the same tokens, and we we update the state variables in the same way it is done by the simultaneous assignment of the \hellum normal form.

\begin{figure}
  \vspace{-5pt}
  \lstinputlisting[language=illum,morekeywords={},frame=single]{contracts/test7.ill}
  \negcaptionspace
  \caption{Translation of the \contract{Test} contract in \illum.}
  \label{fig:ill:test7}
\end{figure}

\paragraph{Evaluation}

\begin{table}
\small
\centering
\begin{tabular}{|c|cc|cc|}
\hline
\multirow{2}{*}{\textbf{Contract}} & 
\multicolumn{2}{c|}{\hellum} & \multicolumn{2}{c|}{\illum} \\
& \textbf{LoC} & \textbf{B} & \textbf{LoC} & \textbf{B} \\
\hline
Crowdfund & 29 & 949 & 64 & 2150 \\
Auction & 31 & 772 & 52 & 1577 \\
Payment splitter & 37 & 1030 & 77 & 3183 \\
Vault & 39 & 984 & 90 & 4070 \\
Automated Market Maker & 40 & 1213 & 88 & 3642 \\
Voting & 42 & 1296 & 91 & 4519 \\
Vesting wallet & 44 & 1194 & 69 & 3360 \\
Escrow & 45 & 1359 & 99 & 3602 \\
King of the hill & 50 & 2062 & 69 & 4509 \\
Blind auction & 57 & 1742 & 86 & 5619 \\
Lending pool & 75 & 2062 & 132 & 6581 \\
Lottery & 78 & 2297 & 136 & 6401 \\
\hline
\end{tabular}
\caption{Benchmark of smart contracts in \hellum, displaying lines of code (LoC) and size in bytes (B).}
\label{tab:hellum:benchmark}
\end{table}

To evaluate the practicality of \illum as a compilation target
of higher-level contract languages, we construct a benchmark of  smart contracts.
The benchmark comprises common use cases, like \eg those in the \href{https://docs.openzeppelin.com/contracts}{OpenZeppelin library} of Solidity contracts.
Besides that, we also include more complex contracts like those found in DeFi: in particular, we implement a constant-product Automated Market Maker (AMM) and a Lending Pool.
All the contracts in our benchmark are implemented in \hellum, and automatically translated into \illum by our prototype compiler. 
\Cref{tab:hellum:benchmark} shows the size (LoC and bytes) of the \hellum contracts and of the corresponding \illum clauses. 
Despite the compilation into \illum produces just a 2x-3x expansion in the size, the \illum code is inherently less readable than the original \hellum contract, as usual for intermediate-level languages.

\section{Related work}
\label{sec:related}

\newcommand{\Scilla}{\textsc{Scilla}\xspace}

Intermediate languages have already been studied
that, like our \langname,
can serve as a compilation target of high-level smart contract languages.
\Scilla~\cite{Sergey19pacmpl} is an intermediate language
that targets \emph{account-based}
blockchains and is executed natively by the Zilliqa blockchain.
\Scilla has an imperative core featuring loop-free statements
(with operators to update state variables and transfer assets),
and a higher-order functional core 
with structural recursion on lists and naturals.
This gives a form of iteration,
and consequently requires the underlying blockchain to implement
a gas mechanism to thwart denial-of-service attacks.
Instead, in \langname every operation has a \emph{bounded} computational cost,
thus eliminating the need for a gas mechanism.
Nonetheless, \langname achieves Turing-completeness
by spreading complex computations across multiple basic actions.
The goals of \Scilla and \langname are different:
\Scilla is meant to be directly interpreted by blockchain nodes, while \langname is meant to be compiled to a lower-level script language,
demanding for a weaker runtime support from the underlying blockchain. 

In the UTXO realm, a variety of contract languages have been proposed,
starting from Bitcoin Script~\cite{bitcoinscript}, 
a low-level, stack-based language that is interpreted by Bitcoin nodes.
Since writing spending conditions directly in Bitcoin Script
can be quite complex, a few languages have been proposed
to relieve programmers from this task,
like Simplicity~\cite{Oconnor17plas} and Miniscript~\cite{miniscript}.
Although these languages allow for representing Bitcoin scripts
in a more structured and human-readable manner,
they do not make writing contracts in Bitcoin much easier
(except for basic single-transaction use cases).
In general, Bitcoin contracts take the form of protocols where participants
exchange messages and send transactions to the blockchain~\cite{bitcoinsok}.
The languages~\cite{Oconnor17plas,miniscript} however
can only specify the individual transactions used in these protocols,
and not the overall global contract.
BitML~\cite{BZ18bitml} is a higher-level language that allows to
specify global contracts and compile them to \emph{sets}
of Bitcoin transactions.
To be compliant with the strict constraints of Bitcoin,
the expressive power of BitML is limited to contracts with
\emph{bounded} execution lengths.
This rules out relevant use cases, like \eg the auction
in~\Cref{sec:overview} and the crowdfunding in~\Cref{sec:hellum},
which allow for an unbounded number of steps.
The work~\cite{BLMZ22lmcs} enhances the expressiveness of BitML
with a weak form of recursion: each recursive step
can only be performed  with the approval of \emph{all} participants.
In \langname instead recursion is unconstrained:
participants cannot prevent an enabled recursion step from happening.
This expressiveness gain comes at a cost, in that \langname cannot be compiled
into standard Bitcoin transactions.
Executing \langname on Bitcoin would be possible
by extending Bitcoin Script with covenants,
in a form that is just a bit more expressive than a recently proposed covenant opcode~\cite{BIP119}.

To overcome the expressiveness limitations of the Bitcoin UTXO model,
the Cardano blockchain extends it with
some additional functionalities~\cite{Chakravarty20wtsc,cardanoeutxo}:
\begin{inlinelist}

\item \label{eutxo:datum}%
  special transaction fields to store contract state;
  
\item \label{eutxo:covenants}%
  a mechanism to preserve contract code along chains of transactions;
  
\item \label{eutxo:tokens}%
  native custom tokens~\cite{Chakravarty20utxoma};

\item \label{eutxo:plutuscore}%
  an expressive scripting language~\cite{plutus-core}.

\end{inlinelist}
The first three functionalities are present also in our UTXO model:
in particular, we use $\txarg$ fields to encode the contract state,
and covenants to preserve the contract code.
The main difference between our UTXO model and Cardano's is
the scripting language.
Cardano's scripting language is an untyped lambda calculus enriched
with built-in functions to interact with the blockchain.
This makes Cardano scripts Turing-complete, and consequently requires
a complex runtime environment (including a gas mechanism).
Our scripts instead are not Turing-complete, but still our contracts are such,
as shown in~\Cref{sec:illum}.
\change{Existing smart contract languages for Cardano (\eg, Plutus, Aiken), although based on high-level languages (\ie, Haskell), impose a \emph{low-level} programming style for smart contracts, requiring developers to reason at the level of transactions, not too distantly from the awkward UTXO programming style exemplified in~\Cref{sec:overview:utxo}. Programming in this style is inherently more complex than using higher-level procedural languages, which are mainstream in the blockchain developers community.
Indeed, in existing Cardano languages, performing a contract action amounts to replacing the old state with a new one 
(\ie, spending some transaction outputs with a new transaction). Accordingly, programming a contract action amounts to verifying through the redeem scripts that the new state is a correct update of the old one, checking multiple transaction fields that encode the contract state. This programming style is quite burdensome, since forgetting even a single check may give rise to vulnerabilities (\eg, adversaries could be able to set a data field of the new state to a value at their choice).
To the best of our knowledge, we are the first to propose a practical procedural high-level language for smart contracts that can be automatically compiled to UTXO blockchains.}

Our UTXO model can be implemented efficiently.
Most operators of our scripting language are borrowed from Bitcoin Script,
which is interpreted very efficiently by Bitcoin nodes.
Implementing $\txarg{}$ fields and the opcodes to access them
poses no challenge.
Covenants, both of kind $\verscript{}{}$ and $\verrec{}$,
can be implemented by exploiting a mechanism similar to
``Pay to Script Hash'' in Bitcoin~\cite{BIP16},
which stores in the $\txscript{}{}$ field the hash of the script,
instead of the script itself.
For the $\verscript{}{}$ covenant,
we would specify the hash $h$ of the script (rather than the script)
in the first argument:
then, $\verscript{h}{i}$ would simply check that the hash
in $\rtxo{}{i}.\txscript{}$ is equal to $h$.
Similarly, the $\verrec{i}$ covenant would check that
the hash in $\rtxo{}{i}.\txscript{}$ is equal to the hash in
the current script, $\ctxo{}.\txscript{}$.
Both checks can be done very efficiently,
as one just needs to compare two hashes.
A further optimization can be achieved by exploiting Taproot~\cite{BIP341}, a mechanism allowing users to reveal the parts of the contract (clause branches) only when they are executed.
This decreases the size of witnesses that must be included
along with transactions,
which in turn decreases the transaction fees.

One of the main advantages of UTXO blockchains over account-based ones
is the possibility of parallelizing transaction validation over multiple cores.
Indeed, there is an easy criterion to determine
if two UTXO transactions are parallelizable,
\ie checking that their inputs are disjoint.
Instead, in account-based blockchains two transactions,
even targeting different contracts, may read/write the same part of the state,
\eg when they update the same account.
A few works study how to overcome this limitation:
some of them exploit dynamic techniques adopted from software
transactional memory
\cite{Dickerson17podc,Dickerson18eatcs,Anjana19pdp,Saraph19tokenomics},
while some others are based on the static analysis of contracts
\cite{BGM21lmcs,Pirlea21pldi}.
In particular, \cite{Dickerson17podc} provides empirical evidence
about the effectiveness of parallelizing transaction execution in Ethereum,
showing an overall speedup of 1.33x for miners and 1.69x for validators,
using only three cores,
based on a benchmark of representative contracts.

\iftoggle{anonymous}{}{\paragraph*{Acknowledgments}

Work partially supported by projects PRIN 2022 DeLiCE (F53D23009130001) and SERICS (PE00000014) under the MUR National Recovery and Resilience Plan funded by the European Union -- NextGenerationEU.}

\bibliographystyle{IEEEtran}
\bibliography{main}

\clearpage
\appendices
\section{Symbolic model of \langname contracts}
\label{sec:app-symbolic}

\newcommand{\irulespacemid}{20pt}
\newcommand{\irulespacepre}{5pt}

In this appendix we fully define the symbolic model. We start with  the syntax of contracts, clauses and configurations. We will then define the semantics of \illum as a state transition system in \Cref{fig:semantics-contracts}.

\paragraph*{Notation}

To improve readability, in the appendices we slightly simplify the model presented in the main text. 
First, we assume a single type of token. From a technical standpoint, handling multiple tokens would just require to change the semantics so that sums of values become sums of sequences of tokens. We prefer to omit this, as it would bloat an already heavy notation. 
We also simplify the arithmetic of the blockchain. In particular, we assume integers to be the only numerical data type. This restricts the arithmetic operations that are possible in contracts. Again, having rationals and divisions can be done without changing the fundamental theory developed in these appendices. We also omit mappings, since they are not strictly needed in the definition of the compiler, and could be easily added to the model.
Lastly, we adopt a different notation in the naming of the internal and external parameters of a clause: instead of the $\stPar$ and $\dinPar$, hereafter we use $\alpha$ and $\beta$. Actual values passed to clauses are also changed from $\stVal$ to $a$ and from $\dinVal$ to~$b$.

\renewcommand{\stPar}[1][]{\alpha_{#1}}
\renewcommand{\stVal}[1][]{a_{#1}}
\renewcommand{\dinPar}[1][]{\beta_{#1}}
\renewcommand{\dinVal}[1][]{b_{#1}}

\paragraph*{Syntax of expressions}
Before introducing the terms of \langname's symbolic model,
we define the syntax of expressions. 
First we have arithmetic expressions $\sexp$, defined as:
\begin{align*}
  \sexp \bnfdef
  & \phantom{\bnfmid} k && \text{(constants)}
  \\
  & \bnfmid \alpha,\beta && \text{(variables)}
  \\
  & \bnfmid |\sexp| && \text{(size)}
  \\
  & \bnfmid \sexp + \sexp 
  \\
  & \bnfmid \sexp - \sexp
  \\
  & \bnfmid H(\sexp) && \text{(hash)}
  \\
  & \bnfmid \ifE{p}{\sexp}{\sexp}
\end{align*}
Then there are name expressions $\nexp$, defined as:
\[
		\nexp \bnfdef \pmvA \text{ (names)} \bnfmid \alpha,\beta \text{ (variables)}
\]
We also define boolean expressions, or conditions as
\[
	p \bnfdef true \bnfmid \notE{p}  \bnfmid p \andE p \bnfmid \sexp= \sexp  \bnfmid \nexp = \nexp \bnfmid \sexp < \sexp
\]
In the following, we will also freely use other boolean operations that can be derived from the ones listed above.

\begin{defn}[Contracts]
  The syntax of contracts is:
  \begin{align*}
    \contrC & \bnfdef \textstyle \sum_{i \in I} \contrD[i]
    && \text{contract}
    \\[4pt]
    \contrD	& \bnfdef
    && \text{contract branch} 
    \\
    & \ncadv{( \cdots , \clauseCallX[i]{\mathcal{P}_i}{?},  \cdots )}
    && \text{call to clauses $\cVar[1]{X}\cdots\cVar[n]{X}$}
    \\
    \bnfmid & \withdrawC{(\cdots, \sexp[i] \rightarrow \nexp[i] \cdots )}
    && \text{transfer $\sexp[i]$ to each $\nexp[i]$}
    \\
    \bnfmid	& \authC{\nexp}{\contrD}
    && \text{wait for $\nexp$ authorization}
    \\
    \bnfmid	& \afterC{\sexp}{\contrD}
    &&  \text{wait until time $\sexp$}
    \\
    \bnfmid	& \afterRelC{\sexp}{\contrD}
    && \text{wait $\sexp$ after activation} 
  \end{align*}
  where $\vec{\mathcal{P}_i}$ is a sequence of arithmetic expressions $\sexp$ and name expressions $\nexp$. We also assume that: 
  \begin{enumerate}
  \item[$(i)$] each recursion variable has a unique defining equation $\clauseDefXParam= \clauseAdv{\sexp}{p}{\contrC} $, with the syntax below;
  \item[$(ii)$] the sequence of expressions $\vec{\mathcal{P}_i}$ passed to a called clause $\cVarX$ matches, in length and typing, the sequence of formal parameter $\vec{\alpha}$ of formal parameters;
  \item[$(iii)$] the order of decorations is immaterial, for instance $\authC{A}{\afterC{t}{\contrD}}$ is identified with $\afterC{t}{\authC{A}{\contrD}}$.
  \end{enumerate}
\end{defn}

\begin{defn}[Clauses]
  A clause is defined by an equation
  \[
  \clauseDefXParam = \clauseAdv{\sexp}{p}{\contrC}.
  \]
  where $\clauseFunding{\sexp}{p}$ is the \emph{funding precondition} and $\contrC$ is a contract.
  The clause takes two sequences of parameters $\vec{\stPar}$ and $\vec{\dinPar}$. Parameters are of two types: integers and participants, and we will assume that in the sequences all integer parameters precede participants.
  We ask that the only variables in $\sexp$, and in all the expressions contained in $\contrC$ and $p$, are the ones taken as parameters by $\cVarX$.
\end{defn}

The term $\clauseFunding{\sexp}{p}$ gives conditions that must hold in order to activate $\contrC$. In particular, $\sexp$ denotes the amount of tokens that must be stored in $\contrC$. These tokens will be taken both from the calling contract, and from additional deposits. 
The proposition $p$ is a predicate on the actual values that are passed to the clause at call time. If $p$ is not satisfied then the clause cannot be called, and $\contrC$ will not be activated.
When $p = true$, we simply write $\clauseAdv{\sexp}{}{\contrC}$.

\paragraph*{Evaluation and closed form}
We specify below the substitution of actual values for parameters.
By substituting the parameters of a clause $\cVarX$ with two sequences of actual values $\vec{\stVal}$ and $\vec{\dinVal}$ (with $\stVal[i]\in \mathbb{Z} \cup \Part$, and $\dinVal[i] \in \mathbb{Z}\cup \Part \cup \setenum{\ast}$) we produce an \emph{instantiated clause}, denoted with $\clauseCallXParam$. 
We define a relation $\clauseCallX{a}{b}\equiv \clauseAdv{v}{}{\contrCi}$ that holds iff no $b_j$ is equal to $\ast$ and the following conditions hold:
\begin{enumerate}

\item[$(i)$] The actual values are well-typed, \ie the types of $\vec{\stVal}$ and $\vec{\dinVal}$ match the ones of $\vec{\stPar}$ and $\vec{\dinPar}$ respectively. In particular, there must be $\ast$ among the elements of $\vec{\dinVal}$.

\item[$(ii)$] $\sem{\sexp\setenum{\vec{a}/\vec{\alpha}, \vec{b}/\vec{\beta}}} = v$; 

\item[$(iii)$] $\sem{p\setenum{\vec{a}/\vec{\alpha}, \vec{b}/\vec{\beta}}} = true$;

\item[$(iv)$] $ \sem{\contrC\setenum{\vec{a}/\vec{\alpha}, \vec{b}/\vec{\beta}}} = \contrCi$.

\end{enumerate}

Here, writing  $\vec{a}/\vec{\alpha}$ means that we replace every parameter $\alpha_i$ in the expression with the value $a_i$, and $\sem{\cdot}$ is a simple evaluation operator that performs all arithmetic and logic operations present in an expression.
Notice that, after the evaluation, every expression inside $\contrCi$ is reduced to an constant. Such a contract is said to be in \emph{closed form}.
Unless specified otherwise, from this point onward we will be working with contracts in closed form.

\begin{defn}[Configurations]
  A configuration $\confG$ is a term $\tilde{\confG} \mid t \mid \confTaint{w}$, where $t\in \mathbb{N}$ denotes the time, $\confTaint{w}$ is the destroyed funds counter, and $\tilde{\confG}$ has the following syntax:
  \begin{align*}
    \tilde{\confG} \bnfdef \ & \emptyset   && \text{empty}
    \\
    \bnfmid & \confContrTime[x]{t}{\contrC}{\valV} && \text{active contract}
    \\
    \bnfmid & \confDep[x]{\pmvA}{\valV} && \text{deposit}
    \\
    \bnfmid & \confF && \text{complete advertisement}
    \\
    \bnfmid & \Theta && \text{incomplete advertisement}
    \\
    \bnfmid &\confAuth{\pmvA}{\chi} && \text{authorization}
    \\
    \bnfmid & \tilde{\confG} \mid \tilde{\confG} && \text{parallel composition}
  \end{align*}
  We also assume that
  \begin{enumerate*}
  \item[$(i)$] parallel composition is associative and commutative;
  \item[$(ii)$] all parallel terms are distinct and names are never repeated;
  \item[$(iii)$] all contracts $\contrC$ appearing in a configuration are in closed form.
  \end{enumerate*}
\end{defn}

\paragraph*{Active contracts}
An \emph{active contract} is a term $\confContrTime[x]{t}{\contrC}{v}$.
It is uniquely identified by its name $x$, and it represents an amount of $v$ tokens (its \emph{balance}) that can only be spent according to the conditions set by one of $\contrC$'s branches. The integer $t$ is the time when the contract has been added to the configuration. We assume $\contrC$ to be in closed form.

\paragraph*{Deposits}
The terms $\confDep[x]{\pmvA}{v}$ in a configuration, called \emph{deposits}, are uniquely identified by their name $x$, and  represent an amount $v$ of tokens owned by participant $\pmvA$.
The owner of a deposit is the only one who can provide the authorization to spend it. \Cref{fig:semantics-deposits} defines the semantics of deposits: $\confDep[x]{\pmvA}{v}$ can be donated to another participant, split into two smaller deposits, or merged with another one.
Moreover, a deposit can be spent to fund the activation of a new contract, or the execution of a contract step.
Deposit can be destroyed. In this case the destroyed tokens are added to a counter 
$\confTaint{w}$, which keeps track of the tokens that was stored in a deposit which the owner has decided to destroy, denoting with $w \in \mathbb{N}$ the total. 
We assume that only dishonest participants can spend the tokens in the counter, and that they can do so freely, without the need to produce any authorization term.

\begin{figure}[t!]
\small
\centering
  \[
  \begin{array}{c}
    \irule
        {\confG \text{ contains }  \confDep[{z_1}]{\pmvA}{\valV[1]} \text{ and } \confDep[{z_2}]{\pmvA}{\valV[2]} }
        {\confG \xrightarrow{{\it auth-join}(\pmvA,z_1,z_2,i)}	\confG \mid \confAuth{\pmvA}{\authBranch{z_1,z_2,i}{v_1 + v_2}}}
    \\[\irulespacemid]
    \irule
        {\confG = \confGi \mid \confAuth{\pmvA}{\authBranch{z,z'}{v + v'}} \mid \confAuth{\pmvA}{\authBranch{z,z'}{v + v'}} \quad y \; \text{fresh}}
	{ \confG \mid \confDep[z]{\pmvA}{\valV} \mid \confDep[z']{\pmvA}{\valVi}
	  \xrightarrow{{\it join}(x,y)} \confGi \mid
	  \confDep[y]{\pmvA}{\valV + \valVi} 
	}
    \\[\irulespacemid]
    \irule
	{ \confG \text{ contains }  \confDep[z]{\pmvA}{\valV+ \valVi} }
	{ \confG
	  \xrightarrow{{\it auth-divide}(\pmvA,z,\valV,\valVi)} 
	  \confG \mid \confAuth{\pmvA}{\authBranch{z}{v, v'}}
	}
    \\[\irulespacemid]
    \irule
	{\confG = \confGi \mid \confAuth{\pmvA}{\authBranch{z}{v, v'}} 
	  \quad y,y' \; \text{fresh}}
	{ \confG \mid \confDep[z]{\pmvA}{\valV + \valVi}	\xrightarrow{{\it divide}(z,\valV,\valVi)} \confGi \mid	\confDep[y]{\pmvA}{\valV} \mid \confDep[y']{\pmvA}{\valVi}	
	}
    \\[\irulespacemid]
    \irule
	{\confG \text{ contains }\confDep[z]{\pmvA}{\valV}}
	{ \confG \xrightarrow{{\it auth-donate}(\pmvA,z,\pmvB)} 
	  \confG \mid		\confAuth{\pmvA}{\authDonate{z}{\pmvB}}
	}
    \\[\irulespacemid] 
    \irule
	{\confG = \confGi \mid \confAuth{\pmvA}{\authDonate{z}{\pmvB}} 
	  \quad \varY \; \text{fresh}}
	{\confG \mid \confDep[z]{\pmvA}{\valV}
	  \xrightarrow{{\it donate}(z,\pmvB)} 
	  \confGi \mid \confDep[\varY]{\pmvB}{\valV}
	}
  \end{array}
  \]
  \caption{Semantics of \illum deposits.}
  \label{fig:semantics-deposits}
\end{figure}

\paragraph*{Advertisements (general)}
Some actions, in particular the ones related to contracts, require to be advertised before they can be performed, meaning that a participants who wants to execute them has to inform the others by introducing an \emph{advertisement term} in the configuration.
Such terms can be of two kinds: either \emph{complete} or  \emph{incomplete}.
An incomplete advertisement $\Theta$ is only used as a message, while the complete term $\confF$ is also needed by the semantics in order to carry on certain actions, as we will see in the next section. 
As a result, incomplete advertisement have the option of leaving some features unspecified.
The actions that can be advertised are the following: $(i)$ the activation of a new contract; $(ii)$ the continuation of an active contract; $(iii)$ the destruction of a set of deposits.
In the following paragraphs we will describe precisely how each of the three term is structured. The main focus will be the definition of complete advertisement, and, after that, we will mention what are the parts can be left unspecified to obtain the incomplete version.

\paragraph*{Advertisement (initial)}
A complete initial advertisement is a term $\confF = \confAdvInitN{ \clauseCallXParam }{\vec{z}}{w}{h}$, where $\clauseCallXParam$ presents the proposed contract and its parameters, $\vec{z}$ is a non empty list of deposit names (that will be spent to fund the initialization), and $w\in \mathbb{N} \cup \setenum{\star}$ is the amount of destroyed funds that are going to be taken from $\confTaint{}$ and used in the initialization. Here $\star$ is a special symbol that means no currency is taken from the counter: it will be treated as 0 in arithmetical operations \footnote{In a configuration an advertisement term with $w=\star$ behaves identically to one with $w=0$. The only difference between the two terms is that honest participants will only be allowed to produce terms that have $w= \star$. We address the reason behind the use of $\star$ when comparing the symbolic model with the computational one.}. The subscript $h \in \mathbb{N}$ is just a nonce, used to differentiate two otherwise identical terms.
In an initial advertisement the clause $\cVarX$ must have its precondition $p$ equal to $true$. This is a technical requirement, added to simplify the language implementation, but it can also be justified intuitively: since the contract is not yet started, if the participants want the arguments to satisfy some proposition $p$, they can simply choose to initiate a different contract.
In a complete advertisement the special symbol $\ast$ must not appear in the parameters $\dinPar$ passed to the clause.

\paragraph*{Advertisements (continuation)}
When a configuration contains an active contract $\confContrTime[x]{t}{\contrC}{v}$, a participant that wants to execute $\contrD$, the $j$-th branch of $x$, will produce the advertisement term $\confF = \confAdvN{\complD}{\vec{z}}{w}{x}{j}{h}$.
Like in the previous case, $\vec{z}$ and $w$ are used to specify the source of additional funds used in the continuation action (here we also allow $\vec{z}$ to be an empty list); and $h\in \mathbb{N}$ is again a nonce.
The term $\complD$ is called \emph{advertised branch}, and it needs a more detailed presentation. 
$\complD$ is constructed by taking a branch $\contrD$ while replacing the question marks $?$ inside a $\callname$ termination with actual values $b_j$ which can be freely chosen by the participant producing the advertisement term. We will identify $\complD$ and $\complDi$ if one can be obtained from the other by exchanging the decorations' order. Notice that, since every active contract is in closed form, the only expressions appearing inside $\complD$ will be constants. To denote that $\complD$ has been constructed starting from $\contrD$ we write $\complD \approx \contrD$.
The terms $\complD$ have the syntax:
\begin{align*}
  \complD \bnfdef \ &\withdrawC{ (\splitB{v_1}{\pmvA[1]} \cdots \splitB{v_n}{\pmvA[n]} ) } 
  \\
  & \bnfmid \ncadv{(\clauseCallXParam[1] \cdots \clauseCallXParam[n] ) }  
  \\
  & \bnfmid \authC{\pmvA}{\complD} \bnfmid \afterC{t}{\complD} \bnfmid \afterRelC{\delta}{\complD} 
\end{align*}
with $a^i_j \in \mathbb{Z} \cup \Part$ and $b^i_j \in \mathbb{Z}\cup\Part \cup \setenum{\ast}$.

\paragraph*{Advertisements (destruction)} 
A destruction advertisement $\confF = \confAdvDesN{\vec{z}}{w}{h}$ is produced when one wants to remove some deposits from the configuration, adding their value to the destroyed funds counter.
Here $\vec{z}$ is a nonempty list of the deposit names that are going to be destroyed, $w \in \mathbb{N} \cup \setenum{\star}$ is an amount of funds from the counter, $h$ is a nonce that differentiates two otherwise identical terms.

\paragraph*{Incomplete advertisements}
In an incomplete advertisement term $\Theta$ some informations may be left unspecified.
Again, we have three types of advertisements:
\begin{enumerate*}

\item[$(i)$] Initial, with $\Theta=\confAdvInit{\clauseCallXParam}{\vec{z}}{w}$. Here some of the values $\dinPar[i]$ may be equal to $\ast$, the sequence $\vec{z}$ may be empty, and $w$ can also take the value $\ast$.

\item[$(ii)$] Continuation, with $\Theta = \confAdv{\complD}{\vec{z}}{w}{x}{j}$. Similarly to the case above, $w$ can be set equal to $\ast$, and values $b_j$ inside $\callname$ operations in $\complD$ can also be set to $\ast$.  

\item[$(iii)$] Destruction, with $\Theta = \confAdvDes{\vec{z}}{w}$, where we allow $w$ to be equal to $\ast$.

\end{enumerate*}

\paragraph*{Validity of advertisements}
We now define \emph{validity} of an advertisement, which must hold for an advertised operation to be performed. Mainly, validity checks that all the terms in the advertisement actually occur in the configuration.
Remember that the value $\star$ is treated as a 0 in all arithmetic operations.
A complete advertisement $\confF$ is \emph{valid in $\confG$} if one of the following holds:
\begin{itemize}

\item (Initial) $\confF = \confAdvInitN{\clauseCallXParam}{\vec{z}}{w}{h}$, where $\clauseCallXParam \equiv \clauseAdv{v}{}{\contrC}$, and the configuration $\confG$ contains deposits $\confDep[z_j]{\pmvA[j]}{u_j}$ and the counter $\confTaint{w'}$, with $w'\geq w$ and $\sum_j u_j + w\geq v\geq 0$.

\item (Continuation)
  $\confF = \confAdvN{\complD}{\vec{z}}{w}{x}{j}{h}$, and the following conditions hold:
  \begin{enumerate*}
  \item[$(i)$] the configuration contains the deposits $\confDep[z_j]{\pmvA[j]}{u_j}$, the contract $\confContrTime[x]{t}{\contrC}{v}$, and $\confTaint{w'}$ with $w' \geq w$;
  \item[$(ii)$] The $j$-th branch of $\contrC$ is a $\contrD$ such that $\complD\approx \contrD$;
  \item[$(iii)$] the time $t$ in $\confG$ is greater than all $t_i$ appearing in $\afterName$ decorations of $\complD$, and $t-t_0$ is greater than all $\delta_i$ appearing in $\afterRelName$ decorations of $\complD$;
  \item[$(iv)$] if $\complD$ ends in $\withdrawC{(\splitB{v_1}{\pmvA[1]},\cdots \splitB{v_n}{\pmvA[n]})}$ then we must have $\sum_j u_j + w+ v\geq \sum_i v_i$;
  \item[$(v)$] if instead $\complD$ ends in $\ncadv{(\clauseCallXParam[1], \cdots, \clauseCallXParam[n])}$ then there must be $v_i \geq 0$, and $\contrC[i]$ such that for every $i= 1 \cdots  n$ we have $\clauseCallXParam[i]\equiv \clauseAdv{v_i}{}{\contrC[i]}$, and $\sum_j u_j + w+ v \geq \sum_i v_i$;
  \end{enumerate*}

\item (Destruction) $\confF= \confAdvDesN{\vec{z}}{w}{h}$ and the configuration $\confG$ contains the deposits $z_j$, and  $\confTaint{w'}$ with $w'\geq w$.

\end{itemize}

\paragraph*{Authorizations}
Authorization are terms of the form $\confAuth{\pmvA}{\chi}$, where $\pmvA$ is the authorizer, and $\chi = \authBranch{\cdots}{\cdots}$ has a LHS that denotes what is being authorized, and a RHS denoting the authorized action.
Authorizations are required for all deposit actions (joining, dividing or donating deposit), and for spending the deposits in an advertised action. Lastly, some contract branches may require a participant authorization:
\begin{enumerate}

\item $\authBranch{z}{\confF}$, where $z$ is a deposit, is used to authorize the use of $z$ to fund the action advertised by $\confF$.

\item $ \authBranch{x}{\confF}$, where $x$ is a contract and  $\confF = \confAdvN{\authC{\pmvA}{\complD}}{\vec{z}}{w}{x}{j}{h}$, is used to authorize the execution of the $j$-th branch of $x$, satisfying the decoration $\authC{\pmvA}{\cdots}$.

\item $\authBranch{z_1,z_2,i}{v_1 + v_2}$ is used to authorize the use of deposit $\confDep[{z_i}]{\pmvA}{v_i}$ in a join operation with another deposit $z_j$ of value $v_j$ (we have $i\neq j$ and $i,j \in \setenum{1,2}$).

\item $\authBranch{z}{v, v'}$ is used to authorize the splitting of deposit $\confDep[{z}]{\pmvA}{v+v'}$ into two, of value $v$ and $v'$ respectively.

\item $\authDonate{z}{\pmvB}$ is used to authorize the transfer of deposit $\confDep[{z}]{\pmvA}{v}$ to a participant $\pmvB$.
If $\chi = \authBranch{z}{\confF}$, the authorization allows to use the deposit $z$ to fund the action advertised by $\confF$.

\end{enumerate}

\paragraph*{Time}
A configuration $\confG$ keeps track of time by simply including an integer $t$. This term is used to check whether a branch of an active contract decorated by $\afterName$ or $\afterRelName$ can be executed.

\begin{defn}[Semantics]
The operational semantics of \illum is a labelled transition system between configurations, defined by the rules in \Cref{fig:semantics-contracts}.
\end{defn}

\begin{figure}[b!]
\centering
\small
  \[
  \begin{array}{c}
    \irule{
      \confG \text{ does not contain } \Theta
    }{
      \confG \xrightarrow{{\it msg}(\Theta)} \confG \mid \Theta
    }
    \\[\irulespacemid]
    \irule{
      \confF \text{ valid in } \confG  \quad \confG \text{ does not contain } \confF
    }{
      \confG \xrightarrow{{\it adv}(\confF)} \confG \mid \confF
    }
    \\[\irulespacemid]
    \irule{
      \begin{array}{c}
        \confG \text{ contains } \confF  \text{ but not } \confAuth{\pmvB}{\authBranch{z}{\confF}}
        \\[\irulespacepre]
        \confF \text{ valid in } \confG \text{ and funded with deposit } z
      \end{array}
    }{
      \confG \xrightarrow{{\it auth-in}(\pmvB,z,\confF)} \confG \mid \confAuth{\pmvB}{\authBranch{z}{\confF}}
    }
    \\[\irulespacemid]
    \irule{
      \begin{array}{c}
        \confG \text{ contains } \confF  \text{ but not } \confAuth{\pmvA}{\authBranch{x}{\confF}}
        \\[\irulespacepre]
        \confF = \confAdvN{\complD}{\vec{z}}{w}{x}{j}{h} \text{ valid in } \confG \quad  \complD \approx \authC{\pmvA}{\contrDi}
      \end{array}
    }{
      \confG \xrightarrow{{\it auth-act}(\pmvA,\confF)} \confG \mid \confAuth{\pmvA}{\authBranch{x}{\confF}}
    }
    \\[\irulespacemid]
    \irule{	
      \begin{array}{c}
        \confF = \confAdvInitN{\clauseCallXParam}{\vec{z}}{w}{h}  \text{ valid in }\confGi 
        \quad
        \clauseCallXParam \equiv \clauseAdv{v}{}{\contrC}
        \\[\irulespacepre]
        \confD[pre]= \confF\mid \big( \mmid_j \confDep[z_j]{\pmvB[j]}{u_j} \big)  \mid  \big(\mmid_j \confAuth{\pmvB[j]}{\authBranch{z_j}{\confF}} \big) 
      \end{array}
    }{
    \begin{array}{ll}
      & \confGi = (\confG \mid \confD[pre] \mid \confTaint{w'} \mid t) 	
      \\
      \xrightarrow{{\it init}(\confF,x)}
      & \confG \mid \confContrTime[x]{t}{\contrC}{v} \mid \confTaint{(w'-w)} \mid t
    \end{array}
    }
    \\[\irulespacemid]
    \\[1pt]
    \irule{
      \begin{array}{c}
        \contrFmt{D}= \authC{\pmvA[1]}{ \decC{\cdots}{ \authC{\pmvA[n]}{ \contrDi }}} 
        \qquad
        \contrDi \not= \authC{\pmvA}{\contrDii}	
        \\[\irulespacepre]
        \confF  = \confAdvN{\complD}{\vec{z}}{w}{x}{j}{h}  \text{ valid in $\confGi$ with } \complD \approx \contrD
        \\[\irulespacepre]
        \complD \text{ ends in } \withdrawname \ (v_1 \rightarrow \pmvC[1] \cdots v_m \rightarrow \pmvC[m])
        \\[\irulespacepre]
        \confD[auth] = \big( \mmid_l \confAuth{\pmvB[l]}{\authBranch{z_l}{\confF}}  \big) \mid \big( \mmid_k \confAuth{\pmvA[k]}{\authBranch{x}{\confF}} \big) 
        \\[\irulespacepre]
        \text{ (if $\pmvA[k]= \pmvA[k']$ the authorization only appears once) }
        \\[\irulespacepre]
        \confD[dep] =  \big( \mmid_l \confDep[z_l]{\pmvB[l]}{u_l} \big)
        \quad
        \confD[post] = \big( \mmid_i \confDep[y_i]{\pmvC[i]}{v_i}\big)
        \\[\irulespacepre]
        \confD[pre] = \confF \mid \confD[auth] \mid \confD[dep]\mid \confContrTime[x]{t}{\contrC}{v}
        \quad
        y_1\cdots y_m \text{ fresh} 
      \end{array}
    }{
    \begin{array}{ll}
    & \confGi=(\confG \mid \confD[pre] \mid \confTaint{w'} ) 
    \\
    \xrightarrow{{\it send}(\confF)} 
    & \confG \mid \confD[post] \mid \confTaint{(w'-w)}
    \end{array}
    }
    \\[\irulespacemid]
    \\[1pt]
    \irule{
      \begin{array}{c}
        \contrFmt{D}= \authC{\pmvA[1]}{ \decC{\cdots}{ \authC{\pmvA[n]}{ \contrDi }}}
        \qquad
        \contrDi \not= \authC{\pmvA}{\contrDii}	
        \\[\irulespacepre]
        \confF  = \confAdvN{\complD}{\vec{z}}{w}{x}{j}{h} \text{ valid in $\confGi$} \text{ with } \complD \approx \contrD
        \\[\irulespacepre]
        \complD \text{ ends in } \ncadv{(\clauseCallXParam [1] \cdots \clauseCallXParam[m])} 
        \\[\irulespacepre]
        \text{and } \forall i. \;\clauseCallXParam[i]\equiv \clauseAdv{v_i}{}{\contrC[i]}
        \\[\irulespacepre]
        \confD[auth] = \big( \mmid_l \confAuth{\pmvB[l]}{\authBranch{z_l}{\confF}}\big) \mid \big( \mmid_k \confAuth{\pmvA[k]}{\authBranch{x}{\confF}} \big) 
        \\[\irulespacepre]
        \text{ (if $\pmvA[k]= \pmvA[k']$ the authorization only appears once) }
        \\[\irulespacepre]
        \confD[dep] =  \big( \mmid_l \confDep[z_l]{\pmvB[l]}{u_l} \big)
        \quad
        \confD[post] = \big( \mmid_i \confContrTime[y_i]{t}{\contrC[i]}{\valV[i]}\big)
        \\[\irulespacepre]
        \confD[pre] = \confF \mid \confD[auth]\mid \confD[dep] \mid \confContrTime[x]{t_0}{\contrC}{v}
        \quad
        y_1 \cdots y_m \text{ fresh }  
      \end{array}
    }{\begin{array}{ll}
      & \confGi= (\confG  \mid \confD[pre] \mid  \confTaint{w'} \mid t) 
      \\
      \xrightarrow{{\it call}(\confF)} 
      & \confG \mid \confD[post] \mid \confTaint{(w'-w)} \mid t
      \end{array}
    }
    \\[\irulespacemid]
    \\[1pt]
    \irule
    {\delta>0}
    {\confG \mid t \xrightarrow{{\it delay}(\delta)} \confG \mid t+\delta}
  \end{array}
  \]  
  \caption{Semantics of \illum contracts.}
  \label{fig:semantics-contracts}
\end{figure}

\section{Computational Model}
\label{sec:app-computational}

In this appendix we present in more detail low level model of the blockchain that serves as target of compilation for \langname contracts.

\begin{defn}[Transaction]
  A transaction $\txT$ is defined as a 5-uple $(\txIn{}$, $\txWit{}$, $\txOut{}$, $\txAfterAbs{}$, $\txAfterRel{})$ where
  \begin{itemize}

  \item $\txIn{}$ is the list of inputs. Each element of $\txIn{}$ is a pair $(\txTi, i)$, where $\txTi$ is a transaction and $i$ is an integer.

  \item  $\txWit{}$ is the list of witnesses. It has the same length as $\txIn{}$, and each element of $\txWit{}$ is a list of integers.

  \item  $\txOut{}$ is the list of outputs. Each output is a triple  $(\txval, \txscript, \txarg)$, where $\txarg$ is a list of integers, $\txscript$ is a script (its syntax will be specified in the next paragraphs), and $\txval$ is an integer.

  \item  $\txAfterAbs{}$ is the absolute timelock, and it is a non negative integer.

  \item $\txAfterRel{}$ is the list of relative timelocks. It has the same lenth as $\txIn{}$ and each of its elements is a non negative integer.

  \end{itemize}
  Given $\txTag{}{l}\in \setenum{\txIn{},\txWit{},\txOut{},\txAfterRel{}}$, we will use $\txTag[j]{}{l}$ to denote its  $j$-th element.
  The lists $\txIn{}$, $\txWit{}$, and $\txAfterRel{}$ may be empty; if that is the case we denote them with $\bot$ and we talk about an \emph{initial} (or \emph{coinbase}) transaction.
\end{defn}

\begin{figure}
  \small
  \begin{align*}
    e &\bnfdef v  && \text{integer constant} 
    \\
    &\bnfmid e \circ e' &&\text{binary operations $(\circ \in \setenum{ + , -, =, < }) $} 
    \\
    &\bnfmid e.n &&\text{$n$-th element of a list}
    \\
    & \bnfmid \rtxw &&\text{witnesses of the redeeming tx input}
    \\
    & \bnfmid |e| && \text{size (in bytes)}
    \\
    & \bnfmid H(e) && \text{hash \footnotemark}
    \\
    & \bnfmid \ifE{e}{e'}{e''}&& \text{conditional check}
    \\
    & \bnfmid \versig{e}{e'} &&\text{signature verification}
    \\
    &\bnfmid \afterAbs{e}{e'} &&\text{absolute time constraint}		 
    \\
    &\bnfmid \afterRel{e}{e'} &&\text{relative time constraint}	
    \\
    &\bnfmid \txof{\txo}{\txf}  && \text{field $\txf \in \setenum{\txarg, \txval}$}
    \\
    &&&\text{of $\txo \in \setenum{\ctxo{}, \rtxo{}{e}}$}
    \\
    &\bnfmid \inidx &&\text{index of redeeming tx input}
    \\
    & \bnfmid \inlen{{\sf tx}} && \text{number of inputs of ${\sf tx} \in \setenum{\rtx, \ctx} $ }
    \\
    & \bnfmid \outlen{{\sf tx} } && \text{number of outputs of ${\sf tx} \in \setenum{\rtx, \ctx} $ }
    \\
    &\bnfmid \verscript{e}{e'} &&\text{checks if $\rtxo{\txscript}{e'} = e$ } 
    \\
    &\bnfmid \verrec{e} && \text{checks if $\rtxo{\txscript}{e} = \ctxo{\txscript}$}
  \end{align*}
  \caption{Syntax of scripts.}
  \label{fig:scripts:syntax}
\end{figure}

\begin{defn}[Syntax of scripts]
  The $\txscript$ field of a transaction output has the syntax
  in~\Cref{fig:scripts:syntax}.
  There, the terms $\ctx$ and $\rtx$ are used to denote the current and redeeming transaction respectively. Moreover, the term $\ctxo{}$ is used by a script to refer to the current output (i.e. the one of which it is the script); and the term $\rtxo{}{n}$, refers to the $n$-th output of the redeeming transaction.
  In $\txscript$ use some shorthands for common logical operations, setting $(i)$ $true \eqdef 1$, $(ii)$ $false \eqdef 0$, $(iii)$ $e \andE e' \eqdef \ifE{e}{e'}{false}$, $(iv)$ $\notE{e}\eqdef \ifE{e}{false}{true}$, $(v)$ $e \orE e' \eqdef \ifE{e}{true}{e'}$.
\end{defn}

\paragraph*{Script evaluation}
In order to determine if a transaction can redeem an output, its script must be executed. For this reason, we need to define an evaluation semantics for scripts. We let $\sem[(\txT,i)]{\cdot}$ be the evaluation operator, where $\txT$ is the redeeming transaction and $i$ index of the redeeming input.
For ease of notation, in the following paragraphs we will shorten  $\sem[(\txT,i)]{\cdot}$ to $\sem{\cdot}$, and implicitly assume that $\txT.\txIn[i]{}$ is the redeeming input, unless specified otherwise. We will also assume that the transaction that is being redeemed is named $\txTi$, and that the redeemed output is its $j$-th. Note that $\txTi$ and $j$ can be determined from $\txT$ and $i$, since we have that
\[
\txTi = {\it first}(\txT.\txIn[i]{}) \quad \text{ and } \quad j = {\it second}( \txT.\txIn[i]{} ).
\]
The evaluation yield $\bot$ when a failure occurs (\ie trying to access the $n$-th element of a list with less than $n$ terms).
All the operators in the script's syntax are \emph{strict}, meaning that their evaluation yields $\bot$ if one of their arguments is $\bot$. 
Here we highlight the behaviour of the evaluation semantics on the non-trivial terms:
\begin{itemize}
	\item $\sem{\rtxw}$ evaluates to $\txT.\txWit[i]{}$, which is the sequence of witnesses associated to the redeeming input.
	\item $\sem{\versig{e}{e'}}$ evaluates to $true$ if the signature $\sem{e'}$ is correctly verified on the hash of $\txFmt{T^*}$ (which denotes the transaction obtained by replacing the $\txWit{}$ field in $\txT$ with $\bot$) against the key $\sem{e}$, and to $false$ otherwise.
	\item  $\sem{\afterAbs{e}{e'}}$ evaluates to $\sem{e'}$ if the field $\txAfterAbs{}$ of the redeeming transaction $\txT$ is greater or equal to $\sem{e}$, otherwise it evaluates to $\bot$. Similary, $\sem{\afterRel{e}{e'}}$ evaluates to $\sem{e'}$ if the field $\txAfterRel[i]{}$ of $\txT$ is greater or equal to $\sem{e}$, otherwise it fails, yielding $\bot$. 
	\item 	$\sem{\verscript{e}{e'}}$ evaluates to true when the script of the $\sem{e'}$-th output of the redeeming transaction is equal to $\sem{e}$, otherwise it returns $false$.
	\item  $\sem{\verrec{e}}$ evaluates to true when the $\sem{e}$-th output of the redeeming transaction is equal to the output that is being redeemed, otherwise it returns $false$. 
\end{itemize}
Until now we have discussed a lot about a transaction's input \say{redeeming} a certain output, without giving a proper definition. Now that we have a semantics of script, we can be more precise.
\begin{defn}[Redeeming an output]
	We say that the $k$-th input of a transaction $\txT$ published at time $t$ can redeem the $k'$-th output of $\txTi$ published at time $t'$, and we write $(\txTi, k', t')\rightsquigarrow(\txT, k, t)$, if the following conditions are verified:
	\begin{enumerate}
		\item $\txT.\txIn[k]{} = (\txTi, k' )$.
		\item $\sem[(\txT,k)]{\txTi.\txOut[k']{}.\txscript} = true$.
		\item $t\geq \txT.\txAfterAbs{}$
		\item $t - t' \geq  \txT.\txAfterRel[i]{}$.
	\end{enumerate}
\end{defn}

Finally, we may define the structure that keeps track of all the transactions: the blockchain.

\begin{defn}[Blockchain]
	A \emph{blockchain} $\bcB$ is a sequence of pairs $(\txT[i], t_i)$, such that the sequence of $t_i$ is nondecreasing. 
	If $(\txT, t)$ is an element of $\bcB$ we say that the transaction \emph{$\txT$ appears at time $t$ in $\bcB$}.
	A blockchain is said to be \emph{consistent} if the following conditions hold:	
	\begin{enumerate}
		\item The first pair in $\bcB$ is $(\txT[0], 0)$ with $\txT[0]$ initial, and this is the only transaction appearing in the $\bcB$.
		\item If $\txT$ appears in the blockchain at time $t$, and it is not the first transaction, then each of its inputs redeems an output of some transaction $\txT[j]$, appearing in $\bcB$ at an earlier time.
		\item Every output of a transaction in $\bcB$ is referenced at most once by an input of a later transaction.
		\item If $\txT$ is in $\bcB$, and $\txo_{\txTag{}{j}}$ denotes the output referenced by $\txT.\txIn[j]{}$, then we have
		\[
		\sum_{i} \txT.\txOut[i]{}.\txval \leq \sum_j \txo_{\txTag{}{j}}.\txval
		\]
		Meaning that the sum of $\txT$ outputs' values must not exceed the sum of its inputs' values.
	\end{enumerate} 
	If $\bcB  (\txT, t)$ is consistent then we say that $(\txT,t)$ is a \emph{consistent update} to $\bcB$.
	An output of a transaction appearing in $\bcB$ is said to be \emph{spent} if in the blockchain there appears a transaction that redeems it. The set of unspent outputs is denoted by $\utxo{\bcB}$, (and in general we will abbreviate the expression \say{unspent transaction output} with UTXO).
\end{defn}
Notice how, in this model, a consistent blockchain presents only one initial transaction, meaning that mining is not included in our computational model. 
We conclude this section by defining deposit outputs: these are outputs with a certain structure, and they will be useful to represent symbolic deposits in the computational model.
\begin{defn}[Deposit output]\label{def:deposit-output}
	An output $\txo$ is said to be a \emph{deposit output owned by $\pmvA$} when it has exactly three arguments, the following script
	\[
	\txscript =  \versig{\compileFmt{key}(\ctxo{\txarg[3])} }{\rtxw.1},
	\]
	and the third argument is equal to $\pmvA$.
\end{defn}

\section{The \langname compiler}
\label{sec:app-compiler}

In this appendix we give the full definition of the compiler, which is will allow us to encode symbolic contracts as transaction outputs in the computational model.
The compiler will be formalized as a function that takes an initial or continuation advertisement $\confF$ and constructs a transaction $\txT[\confF]$.
First we will focus on the construction of the output(s) of $\txT[\confF]$; then we will spend a few words in order to describe the various auxiliary inputs taken by the compiler, which are used to construct the other fields of $\txT[\confF]$. We will then conclude with the full definition of the compiler $\compile{}$.
\paragraph*{Constructing the outputs (general)}
As we mentioned, we will be encoding contracts as transaction outputs. Moreover, in our implementation, all contracts descending from the same initial advertisement will share a common script. We will be able to discern what specific contract is being encoded in an output by looking at its arguments $\txarg$, which the script will access in order to enforce the correct execution path.
If $\confF$ is initial, then $\txT[\confF]$ will have a single output that represents the newly activated contract; otherwise, if $\confF$ is a continuation advertisement $\confAdvN{\complD}{\vec{z}}{w}{x}{j}{h}$, then $\txT$ will have multiple outputs, either representing deposits (if $\complD$ ends with a $\withdrawname$), or contracts (if $\complD$ ends with a $\callname$).

\paragraph*{Constructing the outputs (value)}
The value of each output of $\txT[\confF]$ is easily determined.
If $\confF = \confAdvInitN{\clauseCallXParam}{\vec{z}}{w}{h}$ is a valid initial advertisement, with $\clauseCallXParam \equiv \clauseAdv{v}{}{\contrC}$, then the single output of $\txT[\confF]$ will have value $v$.
If $\confF = \confAdvN{\complD}{\vec{z}}{w}{x}{j}{h}$ is a valid continuation advertisement, with $\complD$ ending in $\ncadv{(\clauseCallXParam[1], \cdots, \clauseCallXParam[n])}$, and for each $i= 1\cdots n$ we have $\clauseCallXParam[i]\equiv \clauseAdv{v_i}{}{\contrC[i]}$, then $\txT[\confF]$ will have $n$ outputs, and the $i$-th output will have value $v_i$.
Lastly, if $\confF$ is a valid continuation advertisement, with $\complD$ ending in $\withdrawC{(\splitB{v_1}{\pmvA[1]},\cdots \splitB{v_n}{\pmvA[n]} )}$, then $\txT[\confF]$ will have $n$ outputs, and the $i$-th output will have value $v_i$.

\paragraph*{Constructing the outputs (arguments)}
The value of the $\txarg$ field of $\txT[\confF]$ will be determined differently depending on the type of advertisement $\confF$. Here, we also present the various names that we will give to arguments to avoid having to refer to them only by their position in the sequence.

If $\txT[\confF]$ is compiled from an initial advertisement $\confF = \confAdvInitN{\clauseCallXParam}{\vec{z}}{w}{h}$, with then the first three element of its $\txarg$ field will be called $\txnonce$, $\txbranch$ and $\txname$. 
The first is just a computational counterpart to the symbolic nonce $h$ that appears in $\confF$, and it gives us a way to force two otherwise identical transactions to be distinct; the second is just a dummy argument, used to make this case more similar to the continuation case, and we will set it to 0; the third is equal to $\cVarX$, the name of the clause that this output will encode.

After those, there are two sequences of arguments $\txalpha[i]$ and $\txbeta[i]$, one for each parameter of the clause $\clauseDefX{\alpha}{\beta}$: they store the values $a_i, b_i$ specified in $\confF$.

If instead the transaction is compiled from a continuation advertisement $\confF  = \confAdvN{\complD}{\vec{z}}{w}{x}{j}{h}$, with $\complD$ ending in $\ncadv{ ( \clauseCallX[1]{a^1}{b^1}, \cdots, \clauseCallX[n]{a^n}{b^n} )}$, then we need to specify the sequence of arguments for each of its $n$ outputs. 
For each $k\in \setenum{1,\cdots n}$, the first three elements of $\txT[\confF].\txOut[k]{}.\txarg$ are again denoted with $\txnonce$, $\txbranch$ and $\txname$. $\txnonce$ will again be the counterpart of $h$; $\txbranch$ will be set to $j$, which denotes the branch of the \say{parent} contract that this transaction is continuing; and $\txname$ will be $ \cVarX[k]$.
Then, we have the $\txalpha[i]$ and $\txbeta[i]$ arguments, referring to the parameters of the clause $\cVarX[k]$. These will take the values $a_i^k$ and $b_i^k$ specified in $\complD$.

Lastly, if a transaction is compiled from a continuation advertisement $\confAdvN{\complD}{\vec{z}}{w}{x}{j}{h}$, with $\complD$ ending in $\withdrawC{(\splitB{v_1}{\pmvA[1]}, \cdots, \splitB{v_n}{\pmvA[n]})}$, then each of its $n$ outputs will only need three arguments: $\txnonce$, $\txbranch$, and $\txowner$.
The value of the first two is set like in the previous case, while the $\txowner$ argument of the $k$-th output is set to $\pmvA[k]$.

\paragraph*{Notation for arguments and expressions}
In the construction of the script we will need to replace the parameters $\alpha_i,\beta_i$ appearing in the contract expressions $\sexp$ with the respective arguments $\ctxo{\txalpha[i]}$, $\ctxo{\txbeta[j]}$. 
In order to make the script more readable we denote this substitution with $\ctxo{\sexp}$.
For example if the contract has a term $\afterC{t}{\cdots}$ with $ t = \alpha_1+ \alpha_2 + \beta_1 + 3$, we will write $\ctxo{t}$ instead of $\ctxo{\txalpha[1]}+ \ctxo{\txalpha[2]} + \ctxo{\txbeta[1]} + 3$.
Similarly, whenever an expression uses the arguments of a redeeming transaction's output, we denote it as $\rtxo{\sexp}{j}$. This is less frequent, but needed when dealing with the preconditions of a called clause.

\paragraph*{Constructing the outputs (script)}
The construction of the script is fully detailed in \Cref{sec:compiler} of the main text.
\paragraph*{Inputs of the compiler}
What we have shown until now is how the outputs are constructed starting from a given advertisement term $\confF$. However, the compiler that we are going to define does not only create the outputs, but an entire transaction, which is the computational counterpart of the symbolic advertisement. In order to do so, the compiler will need to take some additional inputs.
First, we take two auxiliary parameters that give us information about the state of the relationship between the symbolic and the computational model, and help us check that the transaction respects certain constraints. 
These are $\keyMap$, which maps symbolic participants to their public key; and $\txMap$, which maps names of contracts or deposits in the current symbolic configuration to outputs $(\txT, j)$ in the blockchain.
Moreover, we take four additional parameters that will be used to construct certain fields of the transaction.
These are $\inCmp$, which is a non empty list of outputs $(\txT, j)$ that will be used to construct the inputs of the transaction; $\tAbsCmp$, which is an integer used to construct the absolute timelock; $\tRelCmp$, which is a list of integers with the same length of $\inCmp$ that will be used to construct the relative timelocks; and $\nonceCmp$, a non empty list of integers that will be used to construct the $\txnonce$ argument in each output.

\begin{defn}[Compiler]
	Below we define the function $\compile{}$, also known as \emph{compiler}.
	This definition is structured in two phases: first we show how to construct a transaction $\txT$ that is the \say{candidate output} of 
	\[
	\compile{\confF, \txMap, \keyMap, \inCmp, \tAbsCmp,\tRelCmp, \nonceCmp}
	\]
	and then we check that it satisfies certain constraints. If it does then $\txT$ is the actual output, otherwise the compilation fails and we output $\bot$. 
	\paragraph*{Construction}
	\begin{itemize}
		\item (Inputs) The $k$-th input of $\txT$ is set to be equal to $\inCmp[k] = (\txT[k], j_k)$.
		\item (Timelocks) The absolute timelock of $\txT$ is set to be equal to $\tAbsCmp$, while the $k$-th relative timelock is set to be equal to $\tRelCmp[k]$.
		\item (Output - initial) If $\confF =  \confAdvInitN{\clauseCallX{a}{b}}{\vec{z}}{w}{h}$, then $\txT$ will only have one output.
		The output will have $3 + |\vec{a}|+ |\vec{b}|$ arguments: we have $\txname= \cVarX$, $\txalpha[i]= a_i$, $\txbeta[i]=b_i$, and the $\txnonce$ argument is given by $\nonceCmp[1]$.The output's script is constructed as detailed in the above paragraphs, and the output's value is $v$, where $\clauseCallX{a}{b}= \clauseAdv{v}{}{\contrC}$.
		\item (Outputs - call)  If $\confF =  \confAdvN{\complD}{\vec{z}}{w}{x}{j}{h}$, where $\complD$ ends in
		$\ncadv{(\clauseCallXParam[1], \cdots, \clauseCallXParam[n])}$, with $\clauseCallXParam[k] \equiv \clauseAdv{v_k}{}{\contrC[k]}$, then $\txT[\confF]$ has $n$ outputs.
		Let $\inCmp[1]= (\txT[1],j_1)$: each of the $n$ outputs of $\txT$ will have the same script as the $j_1$-th output of $\txT[1]$. Moreover, for each $k$, the $k$-th output of $\txT$ will have arguments $\txnonce= \nonceCmp[k]$, $\txname = \cVarX[k]$, $\txbranch = j$, $\txalpha[i]= a_i^k$, $\txbeta[i]= b_i^k$, and value $v_k$.
		\item (Outputs - send) If $\confF =   \confAdvN{\complD}{\vec{z}}{w}{x}{j}{h}$, where $\complD$ ends with $\withdrawC{(\splitB{v_1}{\pmvA[1]}, \cdots, \splitB{v_n}{\pmvA[n]})}$, then $\txT$ will have $n$ outputs. For each $k$, the $k$-th output will be a deposit output of value $v_k$ owned by $\pmvA[k]$ in the sense of \Cref{def:deposit-output}. Its $\txnonce$ argument will be equal to $\nonceCmp[k]$, and $\txbranch$ will be set to $j$. 
	\end{itemize} 
	\paragraph*{Conditions} If one of the following is not satisfied then the return value of  $\compile{}$ is set to be $\bot$, otherwise it is $\txT$:
	\begin{itemize}
		\item If $\confF$ is a continuation advertisement, then the first input of $\txT$ must be $\txMap(x)$. Aside from that, regardless of the type of advertisement, if $\confF$ takes as input the deposits $\vec{z}$ then  among the inputs of $\txT$ there must appear (in order) the outputs $\txMap(z_j)$, for all $j = 1 \cdots |\vec{z}|$.   
		If $w=\star$ then $\txT$ has no other inputs. Otherwise, removing the inputs listed above leaves a list of inputs without a symbolic counterpart (i.e. not in $\ran{\txMapC}$) such that the amount of their values is equal to $w$, the sum that $\confF$ takes from the destroyed funds counter.
		\item If $\confF$ is a continuation advertisement, then $\txT.\txAfterAbs{}$ must be greater than all $t$ appearing in any $\afterName$ decoration in $\complD$. Similarly $\txT.\txAfterRel[1]{}$ must be greater than all $\delta$ appearing in any $\afterRelName$ decoration.
	\end{itemize}
\end{defn}

It's important to notice that the compiler $\compile{}$ does not constructs a new transaction starting from an active contract, but from that the advertisement that propose it. This means that two active contracts that are equal in the symbolic model, may correspond to transactions with very different outputs. We can say that a transaction will remember the whole \say{history} of a contract, while a symbolic configuration only sees active contracts in the \say{present}. In order to be able to talk, about the actual contract that is being represented in a transaction output, we give the following definition.
\begin{defn}[Contract encoded by an output]
	Take an output $\txo$ of a compiler generated transaction and a clause $\cVarX$ such that $\clauseCallXParam \equiv \contrAdv{v}{}{\contrC}$, and let $k_{\alpha}$, $k_{\beta}$ be equal to the lengths of $\vec{a}$, $\vec{b}$ respectively.
	We say that the output $\txo$ \emph{encodes the contract $\confContr{\contrC}{v}$}, if it has a total of $3+k_{\alpha} + k_{\beta}$ arguments, and the following equivalences hold:
	\begin{enumerate}[$(i)$]
		\item $\txo.\txarg[3]= \cVarX$;
		\item $\forall i \in \setenum{1 \cdots k_{\alpha}}. \ \txo.\txarg[3+i] = a_i$;
		\item $\forall i \in \setenum{1 \cdots k_{\beta}}. \ \txo.\txarg[3+k_{\alpha}+ i] = b_i$;
	\end{enumerate}
\end{defn}
Notice, that just like deposit outputs, it may be that an output \say{encoding a contract} does not actually correspond to any active contract in the configuration. However, we will show that if the computational run is coherent to the symbolic run, then at every active contract in the configuration will correspond  a unique contract output in the blockchain.

\paragraph*{Destroyed funds} 
At this point, we have talked about how contracts and deposits are represented in the computational model, but there is a third term used by the symbolic model to store currency: the destroyed funds counter $\confTaint{w}$, which must be handled in a different way.
There will not be a precise correspondence between $\confTaint{w}$ and some specific outputs: the value $w$ will instead be an approximate representation of every output that does not encode a deposit nor a contract\footnote{More precisely, this is an approximation from above, as in the next section will prove that the amount of currency stored in outputs that do not encode deposits or contracts is lower or equal to the value $w$ specified by the counter.}.

\paragraph*{The $\star$ symbol}
Now that the concept of destroyed funds has been clarified, we can address the difference between choosing  $w= \star$ and $w = 0$ in an advertisement term, and show how the computational model justifies using these two different terms in the symbolic setting.

By looking at the compiler's definition (more precisely at the first condition that the constructed transaction needs to satisfy), we can see that if $\confF$ has $w = \star$, then all the inputs of $\txT[\confF]$ will belong to the image of $\txMap$, meaning that they all correspond to some symbolic structure (deposits in general, and a contract if $\confF$ is a continuation advertisement).
Instead if $\confF$ has $w \not= \star$ then there is at least one input that does not belong to $\ran{\txMap}$. In particular, if $w=0$ this means that those additional inputs have all value 0.
This difference can be used to justify the fact that if the symbolic strategy of an honest participants may only choose an action that produces or consumes a symbolic advertisement $\confF$ if the term $w$ inside of $\confF$ is equal to $\star$ (see the third condition of Definition \ref{def:symbolic-participant-strategies}).

This requirement is tied to the assumption that whenever a honest participant proposes an action, they want to be sure that the currency it is funded with can actually be spent. In this regard, deposit outputs do not pose any problem: from the structure of the script a participant knows that the output only needs the owner's signature in order to be spent.
Instead, outputs that do not have a symbolic counterpart may have an irredeemable script, that prevents them from being spent. In the computational model it might actually be very difficult to confirm if that is indeed the case, but in the symbolic context it is outright impossible. So, we simply assume that honest participants will ignore these funds when proposing a contract.

\section{Adversary model}
\label{sec:app-adversary-model}

We will now formalize the adversary model for \langname, defining symbolic runs, strategies and conformance.
We denote with $\PartT\subseteq \Part$ the set of honest participants, and with $\Adv\notin \Part$ the \emph{symbolic adversary}. We will assume that the adversary is able to control the choices of all dishonest participants. 

\paragraph*{Randomness in symbolic strategies}
As we will soon see, in we will model the choices of participants as probabilistic algorithms, which we construct by giving as considering a deterministic algorithm that takes a random sequence of bits as an additional input.
When defining strategies, we will assume that each honest participant  $\pmvA$ will have access to their own seed $\nonce{\pmvA}$, and that the adversary has access to $\nonce{\Adv}$.
We define a function $\rndR$ called \emph{randomness source} that associates to each participant (and to the adversary) their random seed.
Once the randomness source is assigned, every probabilistic algorithm can be seen as a deterministic one.
With a slight notational abuse, the random seed will be listed among the algorithms' inputs only when we explicitly need it.
\begin{defn}[Symbolic run]
	A \emph{symbolic run} $\runS$ is a (possibly infinite) sequence of configurations $\confG[i]$ connected by transition labels $\labS[i]$. The first configuration in the sequence is called \emph{initial configuration} and it is  $\confG[0]= \confD[dep] \mid t_0 \mid \confTaint{0}$, where $\confD[dep]$ only contains deposits and $t_0=0$.
	If $\runS$ is finite we denote with $\confG[\runS]$ its last configuration. A run is written as $\runS = \confG[0] \xrightarrow{ \labS[0]} \confG[1] \xrightarrow{\labS[1]} \cdots$. Without loss of generality, we will assume that if an action $\labS[i]$ introduces a new symbolic name, then that name is never used: not only in $\confG[i]$ (as the semantics requires), but also in all the other previous configurations.
\end{defn}
\begin{defn}[Symbolic strategies] \label{def:symbolic-participant-strategies}
	A \emph{symbolic strategy} $\stratS{\pmvA}$ is a probabilistic polynomial time  algorithm that takes as input the  symbolic run and outputs a finite set of transition labels $\labS$, representing the actions that $\pmvA$ wants to perform in order to advance the run.
	The output of $\stratS{\pmvA}$ (i.e. the set of $\pmvA$'s choices) is subject to the following constraints: 
	\begin{enumerate}
		\item $\pmvA$ must chose labels that are enabled by the semantics.
		Formally this is expressed by saying that if $\labS \in \stratS{\pmvA}(\runS)$,  then it exists $\confG$  such that  $\confG[\runS] \xrightarrow{\labS} \confG$.
		\item $\pmvA$ can not impersonate a different participant $\pmvB$ and forge their authorizations.
		In order to express this formally, we first define $\mathcal{A}_{\pmvB}$, the set of labels that express authorizations given by participant $\pmvB$, as 
		\begin{align*}
			\mathcal{A}_{\pmvB} = \{ \ &auth-in(\pmvB, \cdot, \cdot), \ auth-act(\pmvB, \cdot, \cdot), 
			\\
			& auth-join(\pmvB,\cdot,\cdot,\cdot), \ auth-divide(\pmvB, \cdot, \cdot,\cdot),
			\\  &auth-donate(\pmvB, \cdot,\cdot) \ \}.
		\end{align*}
		Then, we have that if
		$\labS \in \stratS{\pmvA}(\runS) \cap \mathcal{A}_{\pmvB}, \text{ then } \pmvB = \pmvA $.
		\item $\pmvA$ can only choose $adv(\confF)$ if the term $w$ inside $\confF$ is equal to $\star$. Similarly, $\pmvA$ can only choose an $init$, $call$,$send$, or $destroy$ action only if the consumed term $\confF$ has $w = \star$.
		\item The strategy is \emph{persistent}, meaning that if at a certain point $\pmvA$ chooses an action $\labS$ (that is not a delay), then at the next step of the run that same action $\labS$ must be chosen again, if it is still enabled. Formally we can say that if $ \labS \in \stratS{\pmvA}(\runS )$, $\labS \not = delay(\delta)$, $\runS \xrightarrow{\labSi} \runSi$, and it exists $\confG$ such that $ \confG[\runSi] \xrightarrow{\labS} \confG$;  then $\labS \in \stratS{\pmvA}(\runSi)$
	\end{enumerate} 		
\end{defn}
\begin{defn}[Symbolic adversarial strategy]
	An adversarial strategy $\stratS{\Adv}$ is a probabilistic polynomial algorithm that takes as input the run $\runS$, together with the set of transition labels $\LabS_i$ chosen by each honest participant $\pmvA[i]$.
	The strategy returns as output a single label $\labAdv$ that will be used to update the symbolic run. If $ \labAdv= \stratS{\Adv}(\runS,\vec{\LabS})$, then one of the following cases holds:
	\begin{enumerate}
		\item $\labAdv$ is neither an authorization nor a delay, and it is enabled by the semantics.
		Formally we can say that $\labAdv=\labS$ where: (i) $\labS \not= delay(\delta)$; (ii) there is no $\pmvB$ for which $\labS\in \authSet{\pmvB}$; (iii) there exists $\confG$ such that  $\confG[\runS] \xrightarrow{\labS} \confG$.
		\item $\labAdv$ is an authorization given by a honest participant, and it is chosen by the strategy of the corresponding participant.
		In short we can say that if $\labAdv \in \authSet{\pmvA[i]}$ then we also must have $\labAdv \in \LabS_i$.
		\item  $\labAdv$ is an authorization given by a dishonest participant $\pmvB \not \in \PartT = \setenum{\pmvA[1],\dots, \pmvA[n]}$ and it is enabled by the semantics.
		\item  $\labAdv$ is a delay and it is chosen by the strategies of all honest participants. Symbolically this is $\labAdv = delay(\delta)$ and $\forall i \in  \setenum{1\dots k}.$ $(\LabS_i = \emptyset$ or $ delay(\delta_i) \in \LabS_i$ with  $\delta_i \geq \delta )$
	\end{enumerate}
\end{defn}
\begin{defn}[Symbolic conformance]
	Given a random source $\rndR$ and a set of strategies $\stratSSet$ that includes those of honest participants $\pmvA[1],\dots, \pmvA[k]$ and the adversarial strategy $\stratS{\Adv}$, it is possible to uniquely determine a run $\runS$. We say that this run \emph{conforms} to the pair $(\stratSSet, r)$.
	More precisely, the conformance relations between $\runS$ and $(\stratSSet, r)$ holds if and only if one of the two following conditions is verified
	\begin{enumerate}
		\item $\runS = \confG[0]$ where $\confG[0] = (\mmid_j \confDep[x_j]{\pmvA[j]}{\valV[j]})  \mid 0\mid \confTaint{0}$  is an initial configuration.
		\item $\runSi\xrightarrow{\labS} \runS$ where $\runSi$ conforms to $(\stratSSet, r)$ and, given $ \forall i. \ \LabS_i = \stratS{\pmvA[i]}(\runSi, r_{\pmvA[i]})$, we have $\stratS{\Adv}(\runSi, \vec{\LabS}, \nonce{\Adv})= \labS$
	\end{enumerate}
	If we are considering a set $\stratSSet$ that does not include an adversarial strategy, then we say that a run $\runS$ conforms to $(\stratSSet, r)$ if there exists an adversarial strategy $\stratS{\Adv}$ such that $\runS$ conforms to $(\setenum{\stratS{\Adv}} \cup \stratSSet, r)$.
\end{defn}

\paragraph*{Computational adversary model}
Below, we model adversaries at the computational level. We will follow a structure similar to what we did above for the symbolic model.

\paragraph*{Randomness and keys} 
The choices of participants will again be modelled as probabilistic algorithms. For this reason, we provide a randomness source $\nonce{\pmvA}$ to each honest participant, and one to the adversary.
In the computational model, these randomness sources are not only passed as an additional input to the strategies in order to make then a probabilistic, but they are also used to generate the participants' keys. 
Given a security parameter $\eta$, we have that each honest participant will use the first $\eta$ bits of their random sequence to produce an asymmetric key pair, $\key{\pmvA}{\nonce{\pmvA}}$. This key will be used to produce a witness by signing a transaction, whenever the script requires it. The public and secret part of the key $\key{}{}$ are denoted respectively with $\pubkey{}{}$ and $\privkey{}{}$.
Similarly, the adversary will use their random source $\nonce{\Adv}$ to generate the keys of all dishonest participant, using $\eta$ bits for each key pair.
With a slight notational abuse, the random seed will not be listed among the inputs of the algorithms that use it, unless it is explicitly needed. 

\begin{defn}[Computational run]
	A \emph{computational run} $\runC$ is a (possibly infinite) sequence of labels $\labC$, each encoding one of these possible actions.
	\begin{align*}
		& \txT
		&& \text{appending transaction $\txT$ to the blockchain}
		\\
		& \delta
		&& \text{performing a delay}
		\\
		& \pmvA \rightarrow \ast:m
		&& \text{broadcasting of message $m$ from $\pmvA$}
	\end{align*} 	
	Every computational run starts with the initial transaction $\txT[0]$ that distributes a certain quantity of currency to each participant. We will assume that all of $\txT[0]$'s outputs are deposit outputs. Immediately after that, every participant broadcasts the public part of the  key that they have generated using their random source.
	So we have
	\begin{align*}
		\runC[0] = \txT[0]
		&\cdots  \pmvA[i] \rightarrow * : (\pubkey{\pmvA[i]}{\nonce{\pmvA[i]}}) \cdots
		\\
		&\cdots \pmvB[j] \rightarrow * : (\pubkey{\pmvB[j]}{\nonce{\Adv}})\cdots
	\end{align*}
	where  all outputs of $\txT[0]$ are standard, $\pmvA[i]$ are all the participants in $\PartT$, and $\pmvB[j]$ are all the others.
\end{defn}

\begin{defn}[Blockchain of computational runs]
	In order to keep track of the transactions that occur in $\runC$, while ignoring the messages, we define the \emph{blockchain of the computational run} $\bcB[\runC]$ as follows:
	\[
	\bcB[{\txT[0]}] = (\txT[0],0) \qquad \bcB[\runC \labC]= 
	\begin{cases}
		\bcB[\runC] (\txT, \delta_{\runC}) &\text{ if } \labC = \txT
		\\
		\bcB[\runC] & \text{otherwise}
	\end{cases} 
	\]
	where $\delta_{\runC}$ denotes the sum of the delays present in the run up until that point.
\end{defn}

We can use the notion of consistency for a blockchain, to define consistency for the runs.
\begin{defn}[Consistent computational run]
	A computational run $\runC$ is said to be \emph{consistent} if:
	\begin{enumerate}
		\item Its blockchain $\bcB[\runC]$ is consistent.
		\item If $\runC= \runCi \, \txT  \cdots $, then among the labels of $\runCi$ we can find, in this order: a message $\pmvB \rightarrow \ast : \txT$ that encodes the transaction sent after all $\txT$'s timelocks have expired\footnote{
			In order to formalize what is meant by \say{all $\txT$'s timelocks have expired}, assume the following: $(i)$ the $j$-th input of $\txT$ is an output of a transaction $\txT[j]$ which appears in $\bcB[\runC]$ at time $t_j$; $(ii)$ the sum of delays from the beginning of the run up until the sending of the message is $t$ ; $(iii)$ the transaction has relative timelocks $\txT.\txAfterRel[j]{\delta_j}$ and an absolute timelock $\txT.\txAfterAbs[]{t_0}$.
			We say that all $\txT$'s timelocks have expired if $t\geq t_0$ and $\forall j. \ t-t_j \geq \delta_j$.};
		and all messages $\pmvB \rightarrow \ast: (\txT, j, wit,i)$ where $wit$ is the $i$-th witness of the $j$-th input of $\txT$.
		These messages may be sent from different participants.
	\end{enumerate} 
\end{defn}
The second condition required to have a consistent run amounts to saying that a transaction and its witnesses are broadcast before they are appended to the blockchain. This is a reasonable assumption: the blockchain is public, so in order to include a transaction one also has to broadcast it.
If $\runC \labC$ is a consistent run, then we say that $\labC$ is a \emph{consistent update} to $\runC$. Note that delays and messages are always consistent updates to a run.
We now present the \emph{computational strategies}, the way in which participants can interact with the run, updating it.
\begin{defn}[Computational strategies]
	\label{def:computational-participant-strategies}
	A \emph{computational strategy} for an honest participant $\pmvA$ is a probabilistic polynomial algorithm $\stratC{\pmvA}$, that takes as input the computational run $\runC$ and outputs a finite set of computational labels $\LabC$, whose elements $\labC$ consistently update $\runC$.
	Moreover, if $\labC$ is a message, then it is sent from $\pmvA$.
	We require strategies to be persistent: if $\labC \in \stratC{\pmvA}(\runC)$ is not a delay, and $\dot{\labC} \not= \labC$ is such that both $\runC \dot{\labC}$ and $\runC \dot{\labC} \labC$ are consistent, then we must have $\labC \in \stratC{\pmvA}(\runC\dot{\labC})$.
\end{defn}

\begin{defn}[Computational adversarial strategies]
	The \emph{computational adversarial strategy} is a polynomial algorithm $\stratC{\Adv}$ that takes as input the computational run and the set of choices taken by each honest participant, and returns as output a single computational label used to update the run.
	If $\labC = \stratC{\Adv}(\runC, \vec{\LabC})$ then $\labC$ is a consistent update to $\runC$, with the following constraint:
	if $\labC= \delta$ is a delay, then it must have been chosen by all the honest participants' strategies. Formally, we can express this as  $\forall i \in \setenum{1,\dots,k}.$ $(\LabC_i = \emptyset$ or $ \delta_i \in \LabC_i$ with $\delta_i \geq \delta)$.
\end{defn}
Notice that we are allowing the adversary to impersonate every participant $\pmvB$, by introducing in the run messages $\pmvB \rightarrow * : m$. However, since they do not have direct access to $\pmvB$'s random source, they will be, with overwhelming probability, unable to forge $\pmvB$'s signature, since they are restricted to using a polynomial time strategy. 
\begin{defn}[Computational conformance]
	Take a randomness source $r$, and a set of strategies $\stratCSet$ containing those of all honest participants $\setenum{\pmvA[1],\cdots, \pmvA[k]}$ as well as the adversarial strategy. We say that a run $\runC$ \emph{conforms} to $(\stratCSet,r)$ if one of the two following conditions holds
	\begin{enumerate}
		\item $\runC$ is initial, with keys derived from the randomness source $r$.
		\item $\runC = \runCi \labC$ with $\runCi$ conforming to $(\stratCSet,r)$, and, given $\LabC_i = \stratC{\pmvA[i]}(\runCi, r_{\pmvA[i]}) $, we have $\stratC{\Adv}(\runCi,\vec{\LabC}, r_{\Adv})= \labC$. 
	\end{enumerate}
\end{defn}

\section{Coherence}
\label{sec:app-coherence}

In this appendix we formally define the \emph{coherence} relation between symbolic and computational runs, and we prove some of its properties.

Before we begin with the definition, it is important to note that the coherence relation does not only involve the two runs, but also three auxiliary functions, $\txMapC$, $\keyMapC$, and $\advMapC$.
The first two are essentially the same functions that we used in the previous chapter to provide context to the compiler, and they are used to map symbolic names to transaction outputs; to map participants to their public deposit's key.
The third function $\advMapC$ maps a symbolic advertisement term $\confF$ to a transaction $\txT[\confF]$ that encodes it. It will be used to keep track of previous advertisements to avoid repeating them.

\begin{defn}[Coherence]\label{def:coherence}
  The definition of the coherence relation 
  \[
  \coher{\runS}{\runC}{\txMapC}{\keyMapC}{\advMapC}
  \] 
  is split into three sections, each one consisting of one or more inductive cases.
  \paragraph*{Base case}
  The relation:
  \[
  \coher{\runS}{\runC}{\txMapC}{\keyMapC}{\advMapC}
  \]
  holds if the following conditions are verified:
  \begin{enumerate}
  \item $\runS = \confG[0]$ it is an initial configuration; 
  \item $\keyMapC$ maps each owner of a deposit in $\runC$ to a public key; 
  \item $\runC $ is initial, and the keys broadcast after the first transaction are the ones in the image of $\keyMapC$;
  \item $\txMapC$ maps exactly the name $x$ of each deposit $\confDep[x]{\pmvA}{\valV}$ in $\confG[0]$ to a different deposit output of $\txT[0]$ of value $\valV$ owned by $\pmvA$%
  \footnote{Note that we may not have a unique way to determine $\txMapC$: for example this happens if in the initial configuration a participant owns two deposits of the same value. We do not give a detailed explanation on how to handle this edge cases: the key fact here is that even in those case it is still possible to construct $\txMapC$ so that it is injective.};
  \item $\advMapC$ is the empty function.
  \end{enumerate}
  
  \paragraph*{Inductive case 1}
  The relation:
  \[
  \coher{\runS \xrightarrow{\labS} \confG}{ \runC \labC}{\txMapC}{\keyMapC}{\advMapC}
  \]
  holds if $\coher{\runS}{\runC}{\txMapCi}{\keyMapCi}{\advMapCi}$ holds, in addition to one of the following conditions.
  
  In all cases where explicit changes are not mentioned we will assume that $\txMapC = \txMapCi$, $\keyMapC = \keyMapCi$, and $\advMapC= \advMapCi$.
  \begin{enumerate}
  \item $\labS = msg(\Theta)$, $\labC = \pmvA \rightarrow \ast: \mathcal{C}$,
    where $\Theta$ is an incomplete advertisement and $\mathcal{C}$ is a bitstring that encodes it. In $\mathcal{C}$ every deposit name $z_j$ is represented by the transaction output $\txMapC(z_j)$.
    
  \item $\labS = adv(\confF)$,  $\labC = \pmvA \rightarrow \ast:\txT[\confF]$, where
    $\confF= \confAdvInitN{\clauseCallXParam}{\vec{z}}{w}{h} $ is a valid initial advertisement in $\runS$ and
    \[\txT[\confF] = \compile{\confF, \txMapCi,\keyMapCi, \inCmp, \tAbsCmp,\tRelCmp,\nonceCmp}\]
    for some $\inCmp$, $\tAbsCmp$, $\tRelCmp$ and $\nonceCmp$.
    Moreover, we require this to be is the first time that $\labC$ appears in $\runC$ after all timelocks in $\txT[\confF]$ are expired. The function
    $\advMapC$ extends $\advMapCi$ by mapping $\confF$ to $\txT[\confF]$.
    
  \item $\labS = adv(\confF)$,  $\labC = \pmvA \rightarrow \ast:\txT[\confF]$, where
    $\confF=\confAdvN{\complD}{\vec{z}}{w}{x}{j}{h}$ is a valid continuation advertisement in $\confG[\runS] $ and
    \[\txT[\confF] = \compile{\confF, \txMapCi,\keyMapCi, \inCmp, \tAbsCmp,\tRelCmp ,\nonceCmp}\]
    for some $\inCmp$, $\tAbsCmp$,$\tRelCmp$, and $\nonceCmp$.
    We have the same additional restriction of $(2)$, and again $\advMapC$ extends $\advMapCi$ by mapping $\confF$ to $\txT[\confF]$.
    
  \item $\labS = init(\confF,x)$, $\labC = \txT[\confF]$, where $\txT[\confF]= \advMapCi(\confF)$. In the symbolic setting this step produces $\confContr[x]{\contrC}{\valV}$, and so we extend $\txMapCi$ to $\txMapC$ by mapping $x$ to the single output of $\txT[\confF]$.
    
  \item $\labS = call(\confF)$, $\labC = \txT[\confF]$, where $\txT[\confF]= \advMapCi(\confF)$.
    With this action the symbolic run produces the active contracts
    $\confContrTime[y_1]{t}{\contrC[1]}{\valV[1]} \cdots \confContrTime[y_k]{t}{\contrC[k]}{\valV[k]}$, so we extend $\txMapCi$ to $\txMapC$ by mapping each $y_i$ to the $i$-th output of $\txT[\confF]$.
    
  \item $\labS = send(\confF)$, $\labC = \txT[\confF]$,  where $\txT[\confF]= \advMapCi(\confF)$. This time the symbolic action doesn't produce any contract, but deposits $\confDep[y_1]{\pmvA[1]}{\valV[1]}$, $\cdots$, $ \confDep[y_k]{\pmvA[k]}{\valV[k]}$. For this reason $\txMapCi$ is  extended by mapping $\txMapC(y_i)$ to be equal to the $i$-th output of $\txT[\confF]$.
    
  \item  $\labS = auth-action(\pmvA,\confF)$, $\labC = \pmvB \rightarrow \ast: m$, where $m$ is a quadruple $(\txT[\confF],1, wit, i)$.
    We have that $wit$ is a the signature with $\pmvA$'s key on the first input of $\txT[\confF]= \advMapCi(\confF)$. The index $i$ is the position of $wit$ as witness, and can be deduced by looking at the script of  $\advMapCi(\confF)$.
    Moreover, we want $\labC$ to be is the first instance of this signature being sent after a broadcast of $\txT[\confF]$ that, in turn, appears after all of $\txT[\confF]$'s timelocks have expired.
    
  \item  $\labS = auth-in(\pmvA,z,\confF)$, $\labC = \pmvB \rightarrow \ast: (m,i)$, where $m$ is a quadruple $(\txT[\confF],j, wit, 1)$,  with being a signature with $\pmvA$'s key on the $j$-th input of the transaction $\txT[\confF]= \advMapCi(\confF)$.
    As in (8), we want this to appear in the run after a message encoding $\txT[\confF]$ (which, again, has been sent after all $\txT[\confF]$'s timelocks have expired).
    Moreover we ask that the $j$-th input is exactly $\txMapCi(z)$, or equivalently that $z$ is the $j$-th deposit specified in $\confF$.
    Differently from (8) this case may happen even with advertisements of any form.
    
  \item $\labS = delay(\delta)$, $\labC = \delta$.
    Symbolic delays trivially translate to computational ones, without changing the mapping between outputs and names. 
    This means that two coherent runs always have the same time.
    
  \item $\labS = auth-join(\pmvA,x,x',i)$,  $\labC = \pmvB \rightarrow \ast: m$, where $m$ is a quadruple $(\txT, i, wit, j)$ such that $wit$ signature on input $\txMapCi(x)$ for a previously broadcast transaction $\txT$ that takes two inputs ($\txMapCi(x)$ and $\txMapCi(x')$, in order) and has a single output that encodes a deposit $\confDep{\pmvA}{\valV+\valVi}$.
    Moreover $\labC$ is the first instance of such signature being broadcast in the computational run (from the broadcast of $\txT$ onward).

  \item $\alpha = join(x,x')$, $\labC = \txT$, where $\txT$ has exactly two inputs given by $\txMapCi(x)$ and $\txMapCi(x')$, of value $v$ and $v'$ respectively, and a single deposit output owned by $\pmvA$ of value $\valV + \valVi$. In the symbolic run the action $\alpha$ consumes the two deposits $x$ and $x'$ in order to produce $\confDep[y]{\pmvA}{\valV+\valVi}$.
    We extend $\txMapCi$ by mapping $\txMapC(y)$ to the output of $\txT$.
  \item $\alpha = auth-divide(\pmvA,x,\valV, \valVi)$, continues like (10).
  \item $\alpha = divide(x,\valV,\valVi)$, continues like  (11)
  \item $\alpha = auth-donate(\pmvA,x,\pmvB)$, continues like (10)
  \item $\alpha = donate(x,\pmvB)$, continues like (11)
    
  \item $\labS = adv(\confF)$,  $\labC = \pmvA \rightarrow \ast:\txT$, where $\confF$ is a destroy advertisement  $\confAdvDesN{\vec{z}}{w}{h}$; the transaction $\txT$ is not compatible with either one of the two previous $adv()$ cases, nor it represents a join, divide or donate operation.
    Moreover, among the inputs of $\txT$ there appear (in order) the outputs $\txMap(z_j)$, for all $j = 1 \cdots |\vec{z}|$. If $w=\star$ then these are all the inputs, otherwise removing those leaves a list of inputs outside of $\ran{\txMapCi}$, such that the sum of their values is $w$.
    Lastly, we have the same additional restrictions of $(2)$, and $\advMapC$ extends $\advMapCi$ by mapping $\confF$ to $\txT$.
    
  \item $\alpha = destroy(\confF)$, $\labC = \txT$, where $\txT= \advMapCi(\confF)$. Notice that from the advertisement of $\confF$ we know that $\labC$ cannot correspond to any of the already mentioned cases.
  \end{enumerate}

  \paragraph*{Inductive case 2}
  The predicate:
  \[
  \coher{\runS}{ \runC \labC}{\txMapC}{\keyMapC}{\advMapC}
  \]
  holds if $\coher{\runS}{\runC}{\txMapC}{\keyMapC}{\advMapC}$ holds, in addition to one of the following conditions:
  \begin{enumerate}
  \item $\labC = \txT$ with no input of $\txT$ belonging to the image of $\txMapC$.
  \item $\labC = \pmvA \rightarrow \ast : m$ that does not correspond to any of the symbolic moves described in the first inductive case of the definition.
  \end{enumerate}
\end{defn}

Now that we have formalized the concept of coherence, we can establish some results about the relation between the two models.
Notice that, for most of the following propositions, a rigorous proof by induction would need to examine all 20 inductive rules appearing in the definition.
For the sake of brevity we will focus only on the cases that are relevant to each proof.
In most cases we will only be interested in the runs and the $\txMapC$ map, so we will write $\coherRel{\runS}{\runC}{\txMapC}$.

We start with a lemma regarding the $\txMapC$ map.
\begin{lem} \label{lem:coher-txout-injective}
  Assuming $\coher{\runS}{\runC}{\txMapC}{\keyMapC}{\advMapC}$ holds, the map $\txMapC$ is injective.
\end{lem}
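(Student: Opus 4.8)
The plan is to prove the statement by induction on the derivation of the coherence relation, following the inductive structure of~\Cref{def:coherence}. In the base case, condition~(4) stipulates that $\txMapC$ maps the name of each deposit in the initial configuration $\confG[0]$ to a \emph{distinct} deposit output of the coinbase transaction $\txT[0]$; as noted in the accompanying footnote, this mapping can always be chosen to be injective, even when several deposits share the same value and owner. This settles the base case.

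For the inductive step, I would assume $\coher{\runS}{\runC}{\txMapCi}{\keyMapCi}{\advMapCi}$ with $\txMapCi$ injective (the induction hypothesis) and examine each rule of Inductive case~1 and Inductive case~2. The rules split into two groups. The first group comprises the message, advertisement, authorization, delay, and destruction rules, together with both rules of Inductive case~2: in all of these we have $\txMapC = \txMapCi$, so injectivity is immediate from the induction hypothesis. The interesting group consists of the rules that genuinely extend the map, namely those for $\mathit{init}$, $\mathit{call}$, $\mathit{send}$, $\mathit{join}$, $\mathit{divide}$, and $\mathit{donate}$. In each of these the computational label appends a single transaction $\txT$, and $\txMapCi$ is extended by sending one or more \emph{fresh} symbolic names $y_1, \dots, y_k$ to the outputs of $\txT$, with $y_i$ mapped to the $i$-th output $(\txT, i)$.

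To establish injectivity of the extended map $\txMapC$ in this group, I would combine three observations: (i) the names $y_1, \dots, y_k$ are fresh, hence pairwise distinct and distinct from every name already in $\dom{\txMapCi}$, so the extension introduces no clash on the domain side; (ii) the outputs $(\txT, 1), \dots, (\txT, k)$ are pairwise distinct, since they carry distinct indices, so the fresh names are not collapsed among themselves; and (iii) each $(\txT, i)$ differs from every output in $\ran{\txMapCi}$. Granting (i)--(iii), no two names of $\txMapC$ are sent to the same output, so $\txMapC$ is injective, and the induction goes through.

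The main obstacle is establishing~(iii): that the outputs of the freshly appended transaction do not already occur in $\ran{\txMapCi}$. Since an output is a pair consisting of a transaction and an index, it suffices to show that $\txT$ is a genuinely new transaction, i.e.\ that it does not already appear in the blockchain $\bcB[\runC]$ underlying $\runC$ (every transaction contributing an output to $\ran{\txMapCi}$ having been appended earlier). I expect this to follow from consistency of the computational run: appending $\txT$ is a consistent update, so each input of $\txT$ redeems a currently unspent output; were $\txT$ already present in $\bcB[\runC]$, its inputs would have been spent by that earlier occurrence, contradicting the no-double-spending condition of blockchain consistency. The nonce argument $\txnonce$ emitted by the compiler gives an independent guarantee of uniqueness for compiler-generated transactions. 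Hence $\txT$ is new, its outputs are fresh pairs disjoint from $\ran{\txMapCi}$, and~(iii) holds, completing the induction.
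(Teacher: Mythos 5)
Your proof is correct and follows essentially the same route as the paper's: induction on the coherence derivation, dismissing the cases where the map is unchanged, and observing that the extending cases send fresh names to distinct outputs of a newly appended transaction. Your extra justification of point~(iii) via the no-double-spending condition of blockchain consistency merely spells out a step the paper asserts without elaboration, and it is sound since every such transaction has at least one input.
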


\begin{proof}[Proof]
  By induction.
  In the base case we are mapping every deposit to a different output of $\txT[0]$, so $\txMapC$ is injective.
  In all of the inductive cases we can assume the injectivity of $\txMapCi$, and we will need to check that $\txMapC$ remains injective. Obviously, we will only need to look at the cases that have $\txMapC \not= \txMapCi$.
  
  For this reason, we are only concerned with cases corresponding to $init$, $call$, $send$, $join$, $divide$, and $donate$ actions. 
  All of these create new symbolic names, which are then mapped to outputs of a newly created transaction that is appended to the blockchain in that very step.
  This means that all these outputs are different from all the ones in $\ran{\txMapC}$ (since the transaction they belong to was not present on the blockchain in previous steps).
  This, in conjunction with the fact that each new name is mapped to a different output of the new transaction, proves that the new map $\txMapC$ is still injective.
\end{proof}

Next we have a proposition that clarifies the relationship between unspent outputs in the computational model's blockchain and active contract or deposit in the symbolic configuration. The proposition will also shows that if coherence holds, then the $\txMapC$ map does what it is intuitively expected to do: it associates deposits and contracts to transaction outputs of the same value.

\begin{prop}
  \label{prop:coher-output-correspondence}
  Assume that:
  \[
  \coher{\runS}{\runC}{\txMapC}{\keyMapC}{\advMapC}
  \]
  and let $x$ be the name of a deposit or a contract in the symbolic run $\runS$. 
  If $x$ is the name of a deposit or of an active contract in $\confG[\runS]$, then the output $\txMapC(x)$ is unspent in $\bcB[\runC]$.
  If instead $x$ is the name of a deposit or contract in $\runS$, appearing in some previous configuration but not in $\confG[\runS]$, then the output $\txMapC(x)$ is spent $\bcB[\runC]$.
  Moreover, the map $\txMapC$ preserves the value, so that deposits (resp. active contracts) of value (resp. balance) $v$ are always mapped to outputs of value $v$.
\end{prop}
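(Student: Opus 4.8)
The plan is to prove the proposition by induction on the derivation of coherence, treating all three claims (unspent-iff-current, spent-iff-consumed, and value preservation) as a single joint invariant carried through the induction. The structural fact that makes the first two claims mutually consistent is the standing assumption that symbolic names are never reused: once a name $x$ leaves the current configuration it never reappears, so ``name in $\confG[\runS]$'' and ``name in some earlier configuration but not in $\confG[\runS]$'' partition the names occurring in $\runS$. Throughout I would freely invoke \Cref{lem:coher-txout-injective}, which gives injectivity of $\txMapC$, so that distinct symbolic names map to distinct outputs and spending one output cannot change the status of another.

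For the base case, $\runS=\confG[0]$ and $\runC$ is initial: by the base clause of coherence $\txMapC$ sends each deposit name to a distinct deposit output of $\txT[0]$ of the matching value and owner. Since $\txT[0]$ is the only transaction, none of its outputs is spent, so every mapped name is current-and-unspent, there are no consumed names, and value is preserved by construction. For the inductive step I would split the twenty cases by their effect on the pair $(\txMapC,\bcB[\runC])$. A large group of cases — messages, the advertisement cases ($msg$ and the $adv$ variants), the authorization cases ($auth\text{-}action$, $auth\text{-}in$, $auth\text{-}join$, and the like, which only broadcast signatures), and delays — neither modify $\txMapC$ nor append a transaction. For these, the symbolic deposits and active contracts in the current configuration are unchanged, the set of spent outputs of $\bcB[\runC]$ is unchanged, and the invariant is inherited verbatim from the induction hypothesis.

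The substantive cases are those that append a transaction: $init$, $call$, $send$, $join$, $divide$, $donate$, and $destroy$. In each, the appended transaction $\txT$ (equal to $\advMapCi(\confF)$, or to the join/divide/donate transaction) does exactly two things. First, for the producing cases it creates fresh outputs, and coherence extends $\txMapCi$ by mapping the newly produced names $y_i$ to these outputs; being outputs of a transaction only now appended, they are unspent in $\bcB[\runC \labC]$, matching the presence of the $y_i$ in the new configuration, and their values agree with the symbolic balances by the compiler's value-construction rules recalled in \Cref{sec:app-compiler}. Second, the inputs of $\txT$ are precisely $\txMapCi$ of the consumed deposits $z_j$ and, in the continuation and join/divide/donate cases, of the consumed contract or deposits $x,x'$; these outputs therefore become spent, matching the disappearance of the consumed names from the configuration. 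For every other name $u$ the symbolic term is untouched, and by injectivity together with the compiler's input conditions — which force the inputs lying in $\ran{\txMapCi}$ to be exactly the consumed ones — the output $\txMapCi(u)=\txMapC(u)$ retains its status. The case $destroy$ is identical except that it produces no new mapped names: it merely spends the consumed deposits and adds outputs outside $\ran{\txMapC}$. Value preservation in $join$/$divide$/$donate$ follows from the arithmetic built into those clauses (respectively $v+v'$, the split of $v+v'$, and the unchanged value).

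Finally, the two clauses of the second inductive case ($\labC=\txT$ with no input in $\ran{\txMapC}$, and a message with no symbolic counterpart) leave both the configuration and $\txMapC$ unchanged; the appended $\txT$, if any, spends only outputs outside $\ran{\txMapC}$ and contributes only fresh outputs, so no member of $\ran{\txMapC}$ changes status. I expect the main obstacle to be the bookkeeping in the appending cases: one must verify that $\txT$ spends \emph{exactly} the intended members of $\ran{\txMapCi}$ and no others. This hinges on the compiler's input-construction conditions and on the treatment of destroyed funds — the inputs drawn from the counter $\confTaint{w}$ are required to lie outside $\ran{\txMapCi}$ (the $\star$ versus non-$\star$ discipline), so they neither spend nor coincide with mapped outputs — and it is precisely here that the injectivity of \Cref{lem:coher-txout-injective} is indispensable.
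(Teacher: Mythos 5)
Your proposal is correct and follows essentially the same route as the paper's proof: an induction over the coherence derivation that isolates the cases extending $\txMapC$ (where fresh, unspent outputs of the right value are created, by the compiler's construction or by direct specification for the deposit operations) and the cases spending members of $\ran{\txMapC}$ (which coincide with the symbolic consumptions). Your additional observations — the name-freshness partition and the explicit appeal to \Cref{lem:coher-txout-injective} — only make explicit bookkeeping that the paper leaves implicit.
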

\begin{proof}
  By looking at the previous proof, we can see that once $\txMapC(x)$ is determined, it does not change. So, we just need to prove two statements: 
  $(i)$ whenever a new contract or deposit is inserted in the configuration, the domain of $\txMapC$ is extended, and the image of that new name is a newly created output (which is obviously unspent) of correct value; 
  $(ii)$ a UTXO belonging to $\ran{\txMapC}$ is spent in the computational setting if and only if its pre-image is consumed in the symbolic run.
  
  To prove $(i)$ notice that in Definition \ref{def:coherence} the cases in which the symbolic action creates a new deposit or a new contract are exactly the cases in which the domain of $\txMapC$ is extended. These all happen in the first set of inductive cases, in the items related to the following operations: $send$, $join$, $divide$ and $donate$ (for deposits), $init$ and $call$ (for contracts).
  We can immediately see from the coherence definition that all these cases
  append a transaction $\txT$ to the computational run. Moreover, this $\txT$ has the correct number of outputs; and $\txMapC$ is updated accordingly. 
  The fact that the output's value matches the symbolic value is ensured either by a direct specification in the coherence definition (for $join$, $divide$, and $donate$ operations), or by the script's covenant \footnote{Later, in Proposition \ref{prop:coher-encoding-outputs} we will give a more profound justification on why the covenant really forces the value to be correct} in cases that append a compiler generated transaction (which happens in $send$, $init$ and $call$ operations).
  
  To prove the second condition we need to only check the inductive steps that spend a UTXO associated to some symbolic name, or the ones that remove some symbolic name from the configuration. 
  By looking at the definition of coherence we can see that these two cases coincide. They happens only in the first set of inductive cases, and specifically in the cases related to $init$, $call$, $send$, $join$, $divide$, $donate$ and $destroy$ operations.
  The semantic transition rule for each of those actions consumes some deposit or contract, and we can see (either thanks to a direct specification in the coherence definition, or to the definition of the compiler) that the corresponding transaction appended to the computational run always spends the corresponding UTXOs. 
\end{proof}

The next result further refines the above proposition, by showing that we can always determine the structure of an output associated to a symbolic term.
It also serves as a justification for the definitions of deposit output and of output encoding a contract that were given in the previous sections.

\begin{prop} \label{prop:coher-encoding-outputs}
  Assume that:
  \[
  \coher{\runS}{\runC}{\txMapC}{\keyMapC}{\advMapC}
  \]
  If an output in $\runC$ is the image of a symbolic name, then we can fully determine its structure
  \begin{itemize}
  \item If $\confDep[x]{\pmvA}{v}$ is in $\runS$, then $\txMapC(x)$ is a deposit output owned by $\pmvA$.
  \item If $\confContrTime[x]{t}{\contrC}{v}$ is in $\runS$, then $\txMapC(x)$ is the output of a compiler generated transaction.
    Moreover $\txMapC(x)$ is a contract output encoding $\contrC$.
  \end{itemize} 
\end{prop}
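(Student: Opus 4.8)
The plan is to proceed by induction on the derivation of the coherence relation, exactly as in the proofs of \Cref{lem:coher-txout-injective,prop:coher-output-correspondence}. As already observed in the proof of \Cref{prop:coher-output-correspondence}, once $\txMapC(x)$ is defined for a name $x$ it never changes, and the outputs of already-appended transactions are immutable; hence it suffices to establish the claimed structure at the instant each name enters $\dom{\txMapC}$, and it will persist for the remainder of the run (even after the corresponding deposit or contract is consumed). Thus I would only inspect the base case together with the inductive cases of \Cref{def:coherence} that extend the domain of $\txMapC$, namely those matching the symbolic actions \emph{send}, \emph{join}, \emph{divide}, \emph{donate} (creating deposits) and \emph{init}, \emph{call} (creating active contracts); all other cases leave $\txMapC$ unchanged and are handled by the induction hypothesis.

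For the base case, condition $(4)$ of the base coherence clause maps each name of a deposit $\confDep[x]{\pmvA}{v}$ in $\confG[0]$ to a deposit output of $\txT[0]$ of value $v$ owned by $\pmvA$, and by the definition of a computational run all outputs of the coinbase transaction are deposit outputs, so the claim holds immediately. Among the deposit-creating inductive cases, the \emph{join}, \emph{divide}, and \emph{donate} cases are direct: \Cref{def:coherence} explicitly requires the appended transaction to carry a single deposit output owned by the correct participant, to which the new name is mapped. The \emph{send} case instead maps the new names to the outputs of the compiler-generated transaction $\txT[\confF]=\advMapCi(\confF)$; here I would unfold the \emph{Outputs -- send} clause of the compiler, which forces each such output to have exactly the three arguments $\txnonce,\txbranch,\txowner$ and the owner-signature script $\versig{\keyMap(\ctxo{\txowner})}{\rtxw.1}$ via the $\verscript{}{}$ covenant, so that it is a deposit output owned by $\pmvA[k]$ in the sense of \Cref{def:deposit-output}.

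For the contract cases, the new names are again mapped to outputs of $\txT[\confF]=\advMapCi(\confF)$, which is compiler-generated because the matching $adv$ case of \Cref{def:coherence} sets $\advMapC(\confF)=\compile{\confF,\ldots}$ and the successful appending of $\txT[\confF]$ guarantees the compilation did not return $\bot$. For \emph{init}, the \emph{Output -- initial} clause of the compiler produces a single output carrying the shared contract script, with $3+|\vec a|+|\vec b|$ arguments satisfying $\txname=\cVarX$, $\txalpha[i]=a_i$, $\txbeta[i]=b_i$. Comparing this with the definition of the contract encoded by an output, and using that validity of $\confF$ yields $\clauseCallXParam\equiv\clauseAdv{v}{}{\contrC}$ (so that $\contrC$ is recovered from the stored clause name and parameters by instantiation), shows that the output encodes $\contrC$. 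The \emph{call} case is analogous applied output-by-output to the clauses $\cVarX[1]\cdots\cVarX[n]$: the $k$-th output stores $\txname=\cVarX[k]$, $\txalpha=a^k$, $\txbeta=b^k$ and hence encodes $\contrC[k]$ where $\clauseCallXParam[k]\equiv\clauseAdv{v_k}{}{\contrC[k]}$. Here one additionally uses the first-input condition of the compiler together with the $\verrec{}$ covenant: each new output copies the script of the parent output $\txMapC(x)$, which by the induction hypothesis is itself a compiler-generated contract output carrying the shared script, so the descendants remain compiler-generated contract outputs.

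The main obstacle I anticipate is precisely the \emph{call} case, and specifically matching the compiler's output against the definition of the contract encoded by an output in full detail: tracking that the stored arguments are exactly the instantiation values $a^k_i,b^k_i$, that their count $3+k_\alpha+k_\beta$ is correct, and, crucially, that the script is the shared compiler script preserved along the chain of transactions. This last point is where the $\verrec{}$ covenant does the decisive work, since it is what lets the induction on the parent output carry the script-preservation property forward to every descendant. Once this structural determination is in place, it also supplies the deeper justification promised in the footnote to \Cref{prop:coher-output-correspondence}: because the (now fully determined) parent contract script contains the value-equations of the form $\rtxo{\txval}{i}=\cdots$ guarded by the covenants, any transaction redeeming such an output is forced to reproduce the correct output values, so value preservation is genuinely enforced by the covenant rather than merely assumed.
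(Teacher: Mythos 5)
Your proposal is correct and follows essentially the same route as the paper's proof: induction on the coherence derivation, restricted to the cases that extend $\dom{\txMapC}$, with \emph{join}/\emph{divide}/\emph{donate} handled by the explicit requirements in the coherence definition and \emph{send}/\emph{init}/\emph{call} handled by first observing that $\advMapCi(\confF)$ is compiler generated and then unfolding the compiler's output construction. The only (harmless) divergence is your emphasis on the $\verrec{}$ covenant in the \emph{call} case, which is not really needed here since the compiler copies the parent script by construction; the covenant only becomes essential when arguing about arbitrary redeeming transactions, as in Theorem~\ref{thm:coher-compiler-generated}.
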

\begin{proof}
  Again, we proceed by induction.
  In the base case there is no contract in the configuration, and the symbolic deposits are all mapped by $\txMapC$ to deposit outputs of $\txT[0]$, so both statements hold.
  Among the inductive cases we only need to check those that introduce a deposit or a contract in the symbolic configuration, and hence extend $\txMapCi$.
  We start with the deposit operations $join$, $divide$, and $donate$. In those cases, the definition of coherence explicitly states that $\txMapCi$ is extended by mapping the newly created deposit to a deposit output, with the correct value and owner.
  
  This only leaves us with the $send(\confF)$ case for deposits, and the  $init(\confF)$ and $call(\confF)$ cases for contracts.
  In those cases the inductive step adds the transaction $\txT[\confF] = \advMapCi(\confF)$ to the computational, and extends the  $\txMapC$ relation by mapping the newly created symbolic terms (deposits or contracts) to $\txT[\confF]$'s outputs.
  
  Notice that if $\confF$ is a valid initial or continuation advertisement term in the symbolic run, then the transaction $\advMapCi(\confF)$ must be compiler generated. 
  This fact can easily be proved by induction on the coherence definition: in the base case $\advMapCi$ is the empty map, and the only inductive cases that we need to check are the ones that modify $\advMapC$, extending its domain to a new initial or continuation advertisement $\confF$. Those cases are the one corresponding to a symbolic $adv(\confF)$ operation, and the conditions that they need to satisfy directly imply that $\advMapC(\confF)$ must be compiler generated.
  
  But now, if we look at how the compiler generates the output(s) of $\txT[\confF]$, we see that if $\confF=\confAdvN{\complD}{\vec{z}}{w}{x}{j}{h}$, with $\complD$ ending in $\withdrawC{(\splitB{\pmvA[1]}{v_1}, \cdots \splitB{\pmvA[n]}{v_n})}$, then (by the compiler definition) the $i$-th output of $\txT[\confF]$ will be a deposit output of value $v_i$ owned by $\pmvA[i]$, and (by definition of coherence) it will be the image under $\txMapC$ of $i$-th deposit created by $send(\confF)$, which proves our claim for this inductive case.
  If instead $\confF = \confAdvN{\complD}{\vec{z}}{w}{x}{j}{h}$ with $\complD$ ending in $\ncadv{(\clauseCallX[1]{a^1}{b^1}, \cdots , \clauseCallX[n]{a^n}{b^n})}$, with $\clauseCallX[i]{a^i}{b^i} \equiv \clauseAdv{v_i}{}{\contrC[i]}$; then (by the compiler definition) the $i$-th output of $\txT[\confF]$ will be an output of value $v_i$ encoding the contract $\contrC[i]$, and (by definition of coherence) it will be the image under $\txMapC$ of $i$-th contract created by $call(\confF)$,  which proves our claim for this inductive case.
  The initial advertisement case is identical to the $call(\confF)$ seen above, with only one output.
\end{proof}

The above propositions state that it is possible to \say{keep track} of symbolic deposits and active contracts by seeing them as computational outputs. However, there is another term in the symbolic configuration that is used to store an amount of currency usable by the participants: the destroyed fund counter $\confTaint{}$.
In the following proposition we will show how the counter approximates from above the amount of currency contained in output that do not correspond to any other symbolic term.

\begin{prop}
  \label{prop:coher-destroyed-funds}
  Assume that:
  \[
  \coher{\runS}{\runC}{\txMapC}{\keyMapC}{\advMapC}
  \]
  The sum of all values stored in transaction outputs that do not belong to $\ran{\txMapC}$ is smaller or equal to the value $w$ specified by the destroyed fund counter $\confTaint{w}$ in $\confG[\runS]$.
\end{prop}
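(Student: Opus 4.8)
The plan is to prove the statement by induction on the derivation of the coherence relation $\coher{\runS}{\runC}{\txMapC}{\keyMapC}{\advMapC}$, following the three-part structure of~\Cref{def:coherence} (the base case, inductive case~1 with its matched symbolic/computational moves, and inductive case~2 with the purely computational moves). Throughout, I would track the single quantity
\[
N \;=\; \sum \setcomp{\txo.\txval}{\txo \in \utxo{\bcB[\runC]} \text{ and } \txo \notin \ran{\txMapC}},
\]
the total value held in the \emph{unspent} outputs that have no symbolic counterpart, and maintain the invariant $N \le w$, where $\confTaint{w}$ is the counter of $\confG[\runS]$. (As in~\Cref{prop:coher-output-correspondence}, "transaction outputs" here means unspent ones, since spent outputs have already transferred their value; counting all outputs ever created would diverge.) The two facts driving the argument are the consistency of $\bcB[\runC]$ — for every appended transaction the sum of its output values is at most the sum of its input values, and each output is spent at most once — together with the input/output bookkeeping imposed by the compiler.

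In the base case $\runS = \confG[0]$ has counter $\confTaint{0}$, and every output of $\txT[0]$ is a deposit output lying in $\ran{\txMapC}$, so $N = 0 = w$. For the inductive step I would dispatch the cases by the shape of the computational label. The labels that append no transaction — all the message broadcasts ($msg$, the advertisement broadcasts, and the authorization messages) and the delays — leave both $\utxo{\bcB[\runC]}$ and the symbolic counter unchanged, so the invariant passes through by the induction hypothesis. The deposit transfers $join$, $divide$ and $donate$ append a transaction whose inputs and whose outputs all lie in $\ran{\txMapC}$ (by the corresponding coherence clauses and~\Cref{prop:coher-encoding-outputs}); they touch neither $N$ nor $w$, so again the hypothesis suffices.

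The genuinely quantitative cases are the contract operations $init$, $call$, $send$ and the destruction $destroy$. For $init$/$call$/$send$, the compiler condition guarantees that, apart from the mapped deposits and the mapped contract being spent, the only further inputs are non-mapped outputs whose values sum to exactly $w$, while every newly created output is mapped (it is added to $\txMapC$ by the coherence clause, cf.~\Cref{prop:coher-output-correspondence}). Hence spending those inputs lowers $N$ by exactly $w$, matching the symbolic rule, which lowers the counter from $w'$ to $w' - w$; the invariant is preserved. Similarly, in inductive case~2 (a transaction with no input in $\ran{\txMapC}$, and no accompanying symbolic move) the transaction consumes only non-mapped outputs and creates only non-mapped ones, so by consistency $N$ cannot increase while $w$ is unchanged.

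The main obstacle is the $destroy$ case, the only one in which $N$ can grow. Here the transaction spends the mapped deposits $z_1,\dots,z_m$ (of total value, say, $V_z$) together with $w$ worth of non-mapped inputs, and produces fresh non-mapped outputs; by consistency their total value is at most $V_z + w$. Thus removing the spent non-mapped inputs and adding the new ones changes $N$ by at most $V_z$ — precisely the amount that the symbolic destruction rule adds to the counter (the $w$ drawn from the counter is recycled into the destroyed output, so the net effect on the counter is an increase by $V_z$). Combining $N^{\text{new}} \le N^{\text{old}} + V_z$ with the hypothesis $N^{\text{old}} \le w'$ yields $N^{\text{new}} \le w' + V_z = w^{\text{new}}$, closing the induction. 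The care needed is exactly in reconciling the compiler's "$w$ extra inputs" condition and the consistency inequality with the symbolic counter update in this single case.
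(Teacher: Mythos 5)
Your proof is correct and follows essentially the same route as the paper's: an induction on the coherence derivation maintaining the invariant that the total value of unspent non-mapped outputs is at most the counter value, with the same case analysis ($init$/$call$/$send$ decreasing both sides by $w$, $destroy$ bounded via blockchain consistency by the destroyed deposits' total value, and inductive case~2 only decreasing the non-mapped total). The only differences are presentational — you make explicit the restriction to unspent outputs and the dismissal of the deposit-transfer cases, which the paper handles implicitly.
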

\begin{proof}
  By induction.
  In the base case all outputs in $\runC$ are images of symbolic deposits, and $\confG[0]$ contains $\confTaint{0}$, so this proposition holds.
  
  In the definition of coherence there are two sets of inductive cases: in either of them, we will look at the inductive premise and denote with $w_0$ the amount of funds contained in the destroyed funds counter present in the last configuration of the symbolic run, and with $W_0$ the sum of the values of all unspent transaction outputs in the computational blockchain that do not correspond to symbolic deposits or contracts. This means that our inductive hypothesis states that $W_0 \leq w_0$, and, after having defined $W_1$ and $w_1$ in a similar way, we want to prove that $W_1 \leq w_1$.
  
  While looking at the items in the first set of inductive cases in Definition \ref{def:coherence}, we only care about proving our proposition in the cases that either modify the counter with a symbolic action, or insert in the computational blockchain a transaction that spends or produces some inputs outside of $\ran{\txMapC}$.
  
  When the symbolic action is $init(\confF)$, $call(\confF)$, or $send(\confF)$, the corresponding transaction $\txT[\confF]$ may spend some inputs that do not have a symbolic correspondent. Since $\txT[\confF]$ must be compiler generated, we know that the total value of these inputs must amount exactly to the value $w$ that appears in $\confF$ (or to 0 if $w= \star$).
  By spending these outputs we decrease $W_0$ to $W_1 = W_0 - w$.
  The semantics of these actions tells us that, in the symbolic run, the amount $w$ is removed from the counter, giving us $\confTaint{w_1}$ with $w_1= w_0 -w$. This means that the inequality $W_1 \leq w_1$ holds.
  
  Next, we need to check what happens when the symbolic action is $destroy(\confAdvDesN{\vec{z}}{w}{h})$. 
  The semantics of $destroy$ tells us that the value $w_0$ in the counter is increased by $\sum_j u_j$, where each $u_j$ is the value contained in the deposit $z_j$. 
  In the computational run instead we are creating a transaction $\txT$ which spends $w$ from inputs without a symbolic counterpart. Let us denote with $w'$ the sum of the values of $\txT$'s outputs. This value must be smaller or equal to the sum of $\txT$'s input values, which is $\sum_j u_j + w$. Moreover, notice that none of $\txT$'s output will have a symbolic counterpart. For this reason, the total the value stored by outputs without a symbolic counterpart becomes $W_1 = W_0 - w + w'$, and since we have $w' \leq \sum_j u_j + w$, this gives us $ W_1 \leq W_0 +\sum_j u_j$. In turn, this implies $W_1 \leq w_1 = w_0 + \sum_j u_j$, and the inequality is preserved.
  
  Lastly, we look at the second set of inductive definitions, where the first case tells us that we can insert any transaction $\txT$ whose input do not have a symbolic counterpart without performing any action on the symbolic run. $\txT$ may only reduce the total amount of funds stored in outputs that are outside of $\ran{\txMapC}$, since the sum of the values of its inputs must be greater or equal to the sum of the values of its outputs. This means that $W_1 \leq W_0$ while $w_1 = w_0$, and the inequality holds.
\end{proof}

\section{Correctness of the compiler}
\label{sec:app-correctness-of-compiler}
\ricnote{RILEGGERE PER TYPO}

In our implementation of $\langname$, the logic of contracts is only enforced through the script of a compiler-generated transaction. 
By looking at how such scripts are constructed, it is intuitively obvious that they can be redeemed only by following the symbolic contract logic: in this section we will prove two theorems that justify more precisely why that is actually true, showing that the compiler correctly implements the language.

From~\Cref{prop:coher-encoding-outputs} we know that if the two runs are coherent, each active contract corresponds to a transaction's output that encodes it. 
However, we want to make sure that the only transactions that are able to redeem an output that encodes an active contract are the compiler generated ones.
This is important because otherwise it would be really easy to \say{break} the coherence relation, by redeeming the balance of a contract with a transaction that is not compiler generated, and hence does not carry any meaning to the symbolic setting. 
The following theorem proves that this may never happen.

\begin{thm}
  \label{thm:coher-compiler-generated}
  Assume that:
  \[
  \coher{\runS}{\runC}{\txMapC}{\keyMapC}{\advMapC}
  \]
  and let $\confContrTime[x]{t}{\contrC}{v}$ be an active contract in $\confG[\runS]$.
  Take a transaction $\txT$ that consistently updates $\runC$, and has an input that redeems the UTXO $\txMapC(x)$. 
  Then $\txT$ is compiler generated (and possibly completed by including witnesses), meaning that there exists $\confF$, $\inCmp$, $\tAbsCmp$, $\tRelCmp$, $\nonceCmp$ such that 
  \[
  \txT = \compile{\confF, \txMapC, \keyMapC, \inCmp, \tAbsCmp, \tRelCmp, \nonceCmp}
  \]
  for some $\confF$, $\inCmp$, $\tAbsCmp$, $\tRelCmp$, $\nonceCmp$.
  Moreover, the input of $\txT$ that redeems $\txMapC(x)$ is the first, and 
  we have  $\confF = \confAdvN{\complD}{\vec{z}}{w}{x}{j}{h}$, for some values $\vec{z}$, $w$, $x$, $h$, and for $\complD$, $j$, such that $\complD \approx \contrD$ where $\contrD$ is the $j$-th branch of $\contrC$.
\end{thm}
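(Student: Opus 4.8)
The plan is to exploit the rigidity of the compiled script. By \Cref{prop:coher-encoding-outputs}, the output $\txMapC(x)$ is generated by the compiler and encodes $\contrC$, so its $\txscript$ field is exactly the shared script produced by $\compile{}$ for this contract family, and its $\txname$ argument is the clause name $\cVarX$ with $\clauseCallXParam \equiv \clauseAdv{v}{}{\contrC}$. Since $\txT$ redeems $\txMapC(x)$ through its $k$-th input, the redeeming relation $\rightsquigarrow$ requires $\sem[(\txT,k)]{\txscript} = true$. The whole argument then consists of evaluating this script conjunct by conjunct and reading off, from each condition that must hold, a constraint on $\txT$; I would then show that the collection of these constraints pins $\txT$ down to be exactly a compiler output for a suitably reconstructed advertisement.

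First I would descend through the top-level structure of $\txscript$. The conjunct $\inidx = 1$ forces $k = 1$, giving the ``input is the first'' part of the statement. The following switch on $\ctxo{\txname}$ selects the sub-script $\scr_X$, since $\ctxo{\txname} = \cVarX$. Inside $\scr_X = \ifE{\mathsf{B}_1}{\scr_{\contrD[1]}}{\ifE{\cdots}{\cdots}{\false}}$, the guards $\mathsf{B}_1, \ldots, \mathsf{B}_m$ are mutually exclusive, because $\mathsf{B}_j$ requires \emph{every} output to carry $\txbranch = j$. Hence at most one $\mathsf{B}_j$ can be $true$, and for the script to evaluate to $true$ this $j$ must exist and $\scr_{\contrD[j]}$ must hold. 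From $\mathsf{B}_j$ I read that $\txT$ has exactly $n_j$ outputs, each with $\txbranch$ set to $j$; this fixes the branch index $j$ and identifies $\contrD[j]$, the $j$-th branch of $\contrC$.

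Next I would peel off the decorations of $\contrD[j]$ and analyse the terminal $\withdrawname$ or $\callname$. The authorization decorations force the witnesses $\rtxw$ to contain valid $\versigName$-checked signatures of the authorizing participants, and the ${\sf absAfter}$/${\sf relAfter}$ operators force lower bounds on $\txT.\txAfterAbs{}$ and on the first relative timelock. For the terminal there is a case split. If $\contrD[j]$ ends in $\withdrawname$, then for each output the conjuncts force exactly three arguments, the deposit script $\versig{\ctxo{\txowner}}{\rtxw.1}$ (through $\verscript{}{}$), the owner, and the value, so each output is a deposit output exactly as produced by the $\withdrawname$ case of $\compile{}$. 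If $\contrD[j]$ ends in $\callname$, then $\verrec{}$ forces each output script to equal the shared compiled script, the argument-count and $\txname$ checks fix the called clause $\cVarY[i]$, the equalities between the redeeming output's parameter arguments and the corresponding $\ctxo{\cdots}$ expressions fix the internal parameters, and the value check together with the precondition conjunct $\rtxo{p_i}{i}$ enforces the funding precondition, matching the $\callname$ case of $\compile{}$. In either case the only fields of $\txT$ left unconstrained are the arguments encoding the external parameters and the nonces, which is precisely what the compiler leaves free.

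Finally I would assemble $\confF = \confAdvN{\complD}{\vec{z}}{w}{x}{j}{h}$: take $\complD$ to be $\contrD[j]$ with its question marks instantiated by the external-parameter arguments read off $\txT$'s outputs (so $\complD \approx \contrD[j]$ by construction, and the funding-precondition conjunct guarantees validity), let $\vec{z}$ be the deposit names whose images under $\txMapC$ occur among $\txT$'s inputs (in order) and $w$ the total value of the remaining, non-image inputs ($\star$ if there are none), and pick $h$ fresh. Setting $\inCmp$ to the sequence of $\txT$'s inputs-as-outputs, $\tAbsCmp, \tRelCmp$ to $\txT$'s timelocks, and $\nonceCmp$ to the $\txnonce$ arguments, I would verify $\txT = \compile{\confF, \txMapC, \keyMapC, \inCmp, \tAbsCmp, \tRelCmp, \nonceCmp}$, checking along the way the compiler's side-conditions (first input $=\txMapC(x)$, deposits in order, timelocks above the decoration values). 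The main obstacle is not conceptual but the exhaustive bookkeeping: one must confirm, across all decoration shapes and both terminal forms, that every component the compiler emits is forced by some conjunct of the script, and conversely that reading the free external parameters back into $\complD$ yields a well-formed advertised branch with $\complD \approx \contrD[j]$.
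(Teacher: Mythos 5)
Your proposal is correct and follows essentially the same route as the paper's proof: use \Cref{prop:coher-encoding-outputs} to pin down the script of $\txMapC(x)$, read the compiler parameters and the advertisement $\confF$ (including $\vec{z}$, $w$, the branch index $j$ from the $\txbranch$ arguments, and $\complD$ from the external-parameter arguments of the outputs) off the fields of $\txT$ as forced by the script conjuncts, and then verify field by field that $\compile{\confF,\txMapC,\keyMapC,\inCmp,\tAbsCmp,\tRelCmp,\nonceCmp} = \txT$. The only cosmetic difference is that you argue mutual exclusivity of the guards $\mathsf{B}_j$ explicitly, whereas the paper relies on the if-then-else chain selecting a unique branch; both are fine.
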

\begin{proof}
  The proof is organized in two parts: first we will show how to construct the terms $\confF$, $\inCmp$, $\tAbsCmp$, $\tRelCmp$, $\nonceCmp$ starting from the fields of a transaction $\txT$ that redeems $\txMapC(x)$; then we will prove that using the constructed terms as inputs for the compiler yields exactly $\txT$.
  
  Constructing $\inCmp$ is easy: we just take $\inCmp[i]$ to be the $i$-th the input of $\txT$.
  Obviously, one of these inputs will be $\txMapC(x)$.
  By~\Cref{prop:coher-encoding-outputs} $\txMapC(x)$ is compiler generated, so we know the structure of its script.
  The first part of the script sets the condition $\inidx = 1$, which tells us that the output must be redeemed by an input in position 1. This means that $\inCmp[1] = \txMapC(x)$. The fact that this same instructions is present in every compiler generated transaction means that none of the other inputs of $\txT$ can be the image of a contract in the symbolic configuration. 
  This implies that all other inputs of $\txT$ are either outside of $\ran{\txMapC}$, or are the image of a deposit.
  We are now able to construct $\vec{z}$ and $w$: the first is constructed by taking the preimage of all inputs of $\txT$ that are in $(\ran{\txMapC} \setminus \setenum{\txMapC(x)})$, and the other is set to be the sum of the values of the inputs that are not in $\ran{\txMapC}$ (or it is set to $\star$ if there are no such inputs). 
  $\tAbsCmp$ is set to be equal to $\txT$'s absolute timelock, while every other $\tRelCmp[i]$ is set to the value of $\txT.\txAfterRel[i]{}{}$.
  We construct $\nonceCmp[k]$ by taking the first argument of the $k$-th output of $\txT$ (this is possible since, as we will show later in the proof, each output of $\txT$ has more than one argument).
  
  Choosing $\complD$ and $j$ requires a bit more work. Note that in the rest of the proof we will be referring to arguments by their name, instead of more precisely tracking their position. The paragraph \say{Constructing the outputs: arguments} of the previous appendix motivates why we are able to do so.  
  
  From Proposition \ref{prop:coher-encoding-outputs} we know that $\txMapC(x)$ encodes contract $\contrC$, which means that $\txT$ will have to satisfy $\scr_{\contrC}$. 
  This term is organized as a conditional check, so $\txT$ must satisfy one of its branches. We assume that the taken branch is the $k$-th:
  \[
  \mathsf{if}~{\mathsf{B_{k}}}~\mathsf{then}~{\scr_{\contrD[k]}},
  \]
  where $	\mathsf{B_k} $ is a shorthand for the expression
  \begin{align*}
    &\outlen{\rtx}= n_k \andE \rtxo{\txbranch}{1} = k \andE 
    \\ &\ \cdots \andE \rtxo{\txbranch}{n_k} = k.
  \end{align*}
  The value of $\txbranch$ argument (the second) will then be the same across all outputs of $\txT$. The value $j$, which represent the branch in the symbolic advertisement, will be set to be equal to the second argument of any output of $\txT$.
  
  We can now use the fact that $\txT$ must satisfy $ \scr_{\contrD[k]}$ in order to construct the last term, $\complD$. 
  We have two cases, since $\contrD[k]$ ends either in a $\withdrawname$ or in a $\callname$. Now that we know which is the branch that is being executed, we can easily inspect $\contrC$, to determine which of these two cases we are dealing with.
  If $\contrD[k]$ ends in a $\withdrawname$, then we set $\complD$ to be equal to $\contrD[k]$.
  If instead $\contrD[k]$ ends in $\ncadv{(\clauseCallX[1]{a^1}{?^1},\cdots, \clauseCallX[n]{a^n}{?^n})}$, then, in order to construct $\complD$, we need to \say{complete} it, assigning a value to the placeholders.
  The script $\scr_{\contrD[k]}$ specifies that the $i$-th output has $3 + |\vec{\alpha^i}| + |\vec{\beta^i}|$ arguments, where $\alpha^i_l$ and $\beta^i_h$ are the parameters in $\cVarX[i]$, the $i$-th called clause. 
  For this reason we are able to construct $\complD$ by filling the placeholders $\vec{?^i}$ with the last $|\vec{\beta^i}|$ arguments of the $i$-th output of $\txT$.
  
  At this point we have constructed the terms $\confF$,  $\inCmp$, $\tAbsCmp$, $\tRelCmp$, $\nonceCmp$, so we can pass them as inputs to the compiler and construct  
  \[
  \txTi = \compile{\confF,\txMapC, \keyMapC, \inCmp, \tAbsCmp, \tRelCmp, \nonceCmp}
  \] 
  It is easy to check that the conditions set by the compiler are satisfied, proving that $\txTi$ is a proper transaction and not $\bot$.
  \begin{enumerate}
  \item We already know that $\inCmp[1]$ is $\txMapC(x)$. The rest of the condition follows from the fact that $\inCmp$, $\vec{z}$ and $w$ have been constructed together, starting from $\txT$'s inputs.
  \item We know that $\tAbsCmp$ and $\tRelCmp[1]$ have been constructed from $\txT$'s timelocks. But these timelocks  must be greater than the value appearing in the $\afterName$ (and respectively $\afterRelName$) decorations of $\complD$, since the $\txMapC(x)$'s script specifies the conditions
    \begin{gather*}
      \scr_{\afterC{t}{\contrD}} = \afterAbs{\ctxo{t}}{\scr_{\contrD}} 
      \\
      \scr_{\afterRelC{\delta}{\contrD}} = \afterRel{\ctxo{\delta}}{\scr_{\contrD}}.
    \end{gather*}
  \end{enumerate}
  In order to conclude the proof, we now need to show that $\txT = \txTi$.
  \begin{enumerate}
  \item (Inputs and timelocks)
    
    The inputs and timelocks of $\txTi$ are determined by $\inCmp$, $\tAbsCmp$ and $\tRelCmp$. By constructions of the parameters, $\txTi$ must have the same inputs and timelocks of $\txT$.
  \item (Outputs - call)
    
    If  $\complD$ ends in $\ncadv{\clauseCallX[1]{a^1}{b^1}, \cdots, \clauseCallX[n]{a^n}{b^n}}$, with $\clauseCallX[i]{a^i}{b^i}\equiv \clauseAdv{v_i}{}{\contrC[i]}$, then we have the following
    \begin{enumerate}
    \item (Number of outputs)
      $\txTi$ must have $n$ outputs. The same happens for $\txT$, since there is an $\outlen{\rtx}=n$ condition specified by the script in the same $\mathsf{if}$ statement that checks the branch.
    \item (Arguments) 
      According to the compiler definition, the $i$-th output of $\txTi$ will have arguments $\txnonce= \nonceCmp[i]$, $\txname = \cVarX[i]$, $\txbranch = j$, $\txalpha[l]= a_l^i$, and $\txbeta[l]= b_l^i$.
      This coincides with the number of arguments of the $i$-th output of $\txT$, since the script $\scr_{\contrD[j]}$, which $\txT$ must satisfy,
      contains the term $\arglen{\rtxo{}{i}}= |\vec{\alpha^i}|+ |\vec{\beta^i}| + 3$.
      The value of $\nonceCmp[i]$ has been chosen to be exactly equal to the first element of the $i$-th output of $\txT$, and this is also the first argument of $\txTi$.
      The same reasoning holds for the last $|\vec{\beta}^i|$, which were used in the construction of the values $b_l^i$.
      Regarding the remaining argument we can see that the script of $\txMapC(x)$ forces each of them to have a precise value:
      the second (the branch argument) must be equal to $j$, the third (the name argument) must be equal to $\cVarX[i]$, and for all the other $m= |\vec{\alpha^i}|$ arguments we have the following constraint:
      \begin{align*} 
        & \rtxo{\txalpha[1]}{i}= \ctxo{a^i_1} \andE \cdots \andE
        \\
        & \rtxo{\txalpha[{m}]}{i}= \ctxo{a^i_{m}} 
      \end{align*}
      which appears in the last part of the script for a clause operation. Remember that the expression $\ctxo{a^i_l}$ is a shorthand for whatever combination of parameters have been used to specify the value assigned to the variable $\alpha^i_l$ in $\contrC$. However, we already know that these values must evaluate to $a_l^i$, since they are evaluated from the arguments of $\txMapC(x)$ (which we know to be compiler generated and encoding $\contrC$).
      This means that the arguments of each output of $\txT$ are the same to the one of the corresponding output of $\txTi$.
    \item (Value) The $i$-th output of $\txTi$ has value $v_i$. 
      In the script for a branch that contains a call operation, we have the following term
      \[
      \rtxo{\txval}{i} = \rtxo{ \sexp[i] }{i},
      \]
      where $\sexp[i]$ is the expression in the precondition of $\cVarX[j]$. We know that $\sexp[i]$ must evaluate to $v_i$, since we have $\clauseCallX[i]{a^i}{b^i}\equiv \clauseAdv{v_i}{}{\contrC[i]}$. So, since $\txT$ has to satisfy the script, the value of its $i$-th output must be $v_i$.
    \item (Script)
      The script of each output of $\txTi$ is the same of its first input, which is $\txMap(x)$.
      The script of $\txMap(x)$ contains the covenant $\verrec{i}$ that forces the $i$-th output of any transaction who redeems it to be equal to its own.
      From this we can conclude that the script of each output of $\txT$ coincides with the script of each output of $\txTi$ 
    \end{enumerate}
  \item (Outputs - send)
    If $\complD$ ends in a $\withdrawname$, then we can use a reasoning similar to the $\callname$ case to show that the outputs of $\txTi$ must be equal to the outputs of $\txT$. Actually, the situation is even simpler, since in this case the redeeming transaction must only have two arguments. However we will not delve into the details to avoid excessively lengthening this already long proof. 
  \end{enumerate}
  
\end{proof}

Essentially, we have just shown that any transaction that can redeem an output representing an active contract can be represented symbolically with a continuation advertisement term. This result plays a fundamental role in the proof of the computational soundness theorem.
We can take this correspondence between transactions and advertisements even further, by showing that if the computational transaction respects the timing conditions set in the coherence definition (in particular in item 3), then the corresponding symbolic advertisement is actually valid in the configuration.

\begin{thm}
  \label{thm:coher-redeeming-contract-implies-valid-adv}
  Under the same hypotheses of~\Cref{thm:coher-compiler-generated},
  let $\confF = \confAdvN{\complD}{\vec{z}}{w}{x}{j}{h} $ be the continuation advertisement constructed in the proof. Then $\confF$ is valid in $\confG[\runS]$.
\end{thm}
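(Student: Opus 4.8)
The plan is to check the five clauses $(i)$--$(v)$ defining validity of a continuation advertisement, for the advertisement $\confF = \confAdvN{\complD}{\vec{z}}{w}{x}{j}{h}$ produced in the proof of \Cref{thm:coher-compiler-generated}, leaning on the structural facts already proved about coherent runs. Clause $(ii)$ comes for free, since that proof already establishes $\complD \approx \contrD$ with $\contrD$ the $j$-th branch of $\contrC$. Write $t$ for the current time of $\confG[\runS]$ and $t_0$ for the activation time recorded in the active contract $\confContrTime[x]{t_0}{\contrC}{v}$.

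For clause $(i)$, recall that $\vec{z}$ was defined as the list of $\txMapC$-preimages of the inputs of $\txT$ lying in $\ran{\txMapC}\setminus\setenum{\txMapC(x)}$. None of these inputs can encode a contract: each compiled output carries the guard $\inidx=1$, so a contract output sitting in any position $>1$ would make the script fail, contradicting that $\txT$ redeems it. Hence, by \Cref{prop:coher-encoding-outputs}, every such preimage is a deposit $\confDep[z_j]{\pmvA[j]}{u_j}$, and since $\txT$ is a consistent update it spends only unspent outputs, so \Cref{prop:coher-output-correspondence} puts each $z_j$ and the contract $x$ in $\confG[\runS]$ with the claimed values $u_j$ and $v$. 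Finally, the inputs of $\txT$ outside $\ran{\txMapC}$ are unspent outputs not in $\ran{\txMapC}$ of total value $w$, so $w$ is at most the total value of all such outputs, which by \Cref{prop:coher-destroyed-funds} is bounded by the counter value $w'$ of $\confG[\runS]$; thus $w'\ge w$ (the case $w=\star$ being immediate, as $\star$ counts as $0$).

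Clauses $(iv)$ and $(v)$ combine blockchain consistency with the fact that $\txT$ satisfies the script of $\txMapC(x)$. By the previous paragraph the total input value of $\txT$ is $v+\sum_j u_j+w$, and consistency of $\bcB[\runC]\,\txT$ forces the sum of the output values $\sum_i v_i$ not to exceed it, which is exactly the inequality demanded by both clauses. In the call case, the script fragment $\scr_{\callname i}$ satisfied by $\txT$ checks $\rtxo{\txval}{i}=\rtxo{\vec{\sexp[i]\,\tokT[i]}}{i}$ together with the guard $\rtxo{p_i}{i}$; evaluated on the parameters $a^i,b^i$ read off the $i$-th output, these yield $v_i=\sem{\sexp[i]}$ and $\sem{p_i}=\true$, which --- with well-typedness and $v_i\ge 0$ from non-negativity of output values --- are precisely the conditions for $\clauseCallXParam[i]\equiv\clauseAdv{v_i}{}{\contrC[i]}$.

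The main obstacle is clause $(iii)$, which pits symbolic time against decoration values while $\txT$ only exposes timelocks. The bridge is the time-synchronisation baked into coherence (the delay clause of \Cref{def:coherence}): the current symbolic time $t$ equals the current computational time $\delta_{\runC}$, and the activation time $t_0$ equals the time at which the transaction carrying $\txMapC(x)$ was appended. First I would use that $\txT$ redeems $\txMapC(x)$: the script guards $\afterAbs{\ctxo{t_i}}{\cdots}$ and $\afterRel{\ctxo{\delta_i}}{\cdots}$ force $\txT.\txAfterAbs{}$ to dominate every absolute decoration $t_i$ of $\complD$ and $\txT.\txAfterRel[1]{}$ to dominate every relative decoration $\delta_i$. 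Then, since $\txT$ consistently updates $\runC$, the redeeming conditions evaluated at append time give $\delta_{\runC}\ge\txT.\txAfterAbs{}$ and $\delta_{\runC}-t_0\ge\txT.\txAfterRel[1]{}$. Chaining these through the two time identifications gives $t\ge t_i$ for every absolute decoration and $t-t_0\ge\delta_i$ for every relative one, which is exactly clause $(iii)$ and completes the verification of validity.
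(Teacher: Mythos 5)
Your proposal is correct and follows essentially the same route as the paper's proof: each validity clause is discharged by combining \Cref{prop:coher-output-correspondence} (deposits and the contract are still in $\confG[\runS]$), \Cref{prop:coher-destroyed-funds} (the bound $w' \ge w$), the consistency of the update $\txT$ (the value inequality and the expired timelocks), and the compiler-generated script (the $\mathsf{absAfter}$/$\mathsf{relAfter}$ guards and the precondition checks $\rtxo{p_i}{i}$), together with the time synchronisation built into coherence. You merely spell out some steps the paper leaves implicit, such as the blockchain consistency argument behind clauses $(iv)$--$(v)$ and the explicit chain of timelock inequalities for clause $(iii)$.
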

\begin{proof}
  We know that the deposits $z_j$ are the pre-image under $\txMapC$ of some  outputs in $\bcB[\runC]$. Moreover, these outputs are unspent so, by Proposition \ref{prop:coher-output-correspondence} they must actually appear in the configuration.
  Also, thanks to Propositon \ref{prop:coher-destroyed-funds}, and remembering how $w$ was constructed, we know that $w$ must be smaller or equal to the value stored in $\confTaint{}$.
  
  Then, we have the timing requirements: the time in the configuration must be so that all waiting decorations in $\complD$ are satisfied. In the computational setting all of $\txT$ timelocks are expired, and those same timelocks were subject to the script's constrictions, which in turn were based on the $\afterName$ and $\afterRelName$ decorations of $\complD$. Since the time increases in the same way in both models, the timing requirements are satisfied.
  
  Then, we have a condition which states that the sum of the \say{output} values of $\complD$ (meaning the funds of each clause if $\complD$ ends in $\callname$ and the values distributed to each participant if it ends in a $\withdrawname$) must be greater than 0 and lower or equal to the sum of the inputs values (the deposits $z_i$, the value $w$ and the balance of the contract $x$). Since these directly translate to inputs and outputs of $\txT$ we do not need to prove anything. 
  
  The last condition for validity applies only if $\complD$ ends in a $\callname$ operation: the proposition $p_i$ in each clause precondition must be satisfied. Again, the fact that $\txT$ must satisfy a compiler generated script is enough to prove this condition, since by including 
  \[
  \andE \rtxo{p}{1} \andE \cdots \andE \rtxo{p}{n}
  \]
  the script ensures that all clauses are satisfied.
\end{proof}

\section{Translating symbolic strategies} \label{section:translating-symbolic-strategies}
This appendix aims to construct an algorithmic map $\stratMap$ that transforms an honest symbolic strategy $\stratS{\pmvA}$ into a computational strategy $\stratC{\pmvA} = \stratMap(\stratS{\pmvA})$.
By Definition \ref{def:computational-participant-strategies} $\stratMap(\stratS{\pmvA}) $ will be an algorithm that takes as input a computational run $\runC$ and a randomness source $r_{\pmvA}$, and returns a set of computational labels $\LabC$, while attaining to some constraints.

The general idea behind our construction of $\stratMap(\stratS{\pmvA})$ is the following: the algorithm will first parse $\runC$ in order to create a symbolic run $\runS$, then it will use $\stratC{\pmvA}$ to produce a set of symbolic actions, which will lastly be translated into computational labels, concluding the process. In this way, $\stratC{\pmvA}= \stratMap(\stratS{\pmvA})$ is emulating its symbolic counterpart $\stratS{\pmvA}$.
These procedures closely resemble the definition of the coherence relation, so we will not present every detail.

\paragraph*{Parsing the computational run}
Here, we will take a consistent computational run $\runC$ and parse it, in order to construct a symbolic run $\runS$ coherent to it.
This will be a step-by-step construction, that takes a single label and finds a corresponding symbolic action. While doing that, we update the maps $\txMapC$ (between names and outputs), $\advMapC$ (between advertisements and transactions), and $\keyMapC$ (between participants and their public key): these will helps us to keep track of the symbolic terms that we created.

We begin with the initial prefix of $\runC$, which contains a transaction $\txT[0]$ followed by messages that transmit the computational participant's public keys. 
By looking at those messages, we create a set of symbolic participants, and the $\keyMapC$ that associates to each of them their public key. Then, by looking at the outputs of $\txT[0]$ we obtain a series of deposits, which, together with an empty destroyed funds counter $\confTaint{0}$, and the time $t=0$, will form the initial symbolic configuration $\confG[0]$, which will be the prefix of the symbolic run. The map $\txMapC$ is constructed to map each deposit of $\confG[0]$ to the corresponding output.

Then, we have different scenarios according to the next computational step $\labC$.
If $\labC$ is a message we ignore it, except for the following cases:
\begin{enumerate}
\item It is the encoding of a incomplete advertisement $\Theta$, in which case we perform the symbolic step $msg(\Theta)$.
\item It is the encoding of a compiler generated transaction $\txT[\confF]$, never sent before in the computational run (i.e. not belonging to $\ran{\advMapC}$). In this case $\labC$ corresponds to the advertisement of the valid term $\confF$ that has a subscript $h$ never used in the symbolic configuration (and we update $\ran{\advMapC}$).
\item It is the encoding of a transactions $\txT$ that takes at least an input in $\ran{\txMapC}$, is neither compiler-generated nor correspondent to a join divide or donate operation, and never sent before in the computational run. In this case $\labC$ corresponds to a destroy advertisement. 
\item It is a quadruple $(\txT, j,wit,i)$, where $wit$ is the signature with $\pmvA$'s key on the $j$-th output of $\txT$ (a transaction that is already present as a message in the run).
  Moreover $\advMapC^{-1} (\txT) = \confF$;
  and it is the first time $\labC$ is broadcast after a broadcast of $\txT$.
  In this case $\labC$ corresponds to a symbolic authorization step, either $auth-in(\pmvA, z, \confF)$ or $auth-act(\pmvA,\confF)$ depending on what kind of advertisement $\confF$ is, and on which input of $\txT$ is being signed.
\end{enumerate}
If $\labC$ is a transaction with at least one input in $\ran{\txMapC}$, then we can analyse its structure and find the corresponding action among the following: $init(\confF)$, $call(\confF)$, $send(\confF)$, $join(x,y)$, $divide(x,v,v')$, $donate(\pmvB,x)$, or $destroy(\confF)$.
Lastly if $\labC$ is a computational delay we directly translate it to a symbolic one.

Each step in this conversion process is uniquely determined, up to different choices for the names of participants, deposit, contracts, and $h$ subscripts, meaning that the above paragraph can be seen as proof for this proposition:
\begin{prop} 
  Given $\runC$ we can find $\runS$, $\txMapC$, $\keyMapC$, $\advMapC$ such that
  \[\coher{\runS}{\runC}{\txMapC}{\keyMapC}{\advMapC}.\]
  Moreover if  $\runSi$, $\txMapCi$, $\keyMapCi$, $\advMapCi$ are such that
  \[
  \coher{\runSi}{\runC}{\txMapCi}{\keyMapCi}{\advMapCi},
  \]
  then we can get $\runSi$ from $\runS$ by substituting each name $x$ with $\txMapCi^{-1}(\txMapC(x))$, each advertisement $\confF$ with $ \advMapCi^{-1}(\advMapC(\confF))$, and each participant name $\pmvA$ with $\keyMapCi^{-1}(\keyMapC(\pmvA))$.
\end{prop}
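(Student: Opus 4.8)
The plan is to prove both halves by induction on the length of $\runC$, reading the preceding \emph{parsing} description as the algorithm that drives the induction. For existence I would build $\runS$, $\txMapC$, $\keyMapC$, $\advMapC$ incrementally. The base case is the initial prefix of $\runC$: from the key-broadcast messages I recover $\keyMapC$ and a set of symbolic participants, from the outputs of $\txT[0]$ I recover the initial deposits and the map $\txMapC$, and I set $\advMapC$ to be empty. This is exactly the base case of \Cref{def:coherence}, and the only freedom is the naming of participants and deposits. For the inductive step I would do a case analysis on the appended label $\labC$, matching each case to one of the clauses of \Cref{def:coherence}: a delay maps to a symbolic delay; a message is either ignored (inductive case 2) or translated into a $msg(\Theta)$ step, into an $adv(\confF)$ step (initial, continuation, or destroy), or into an authorization; and a transaction is classified according to whether any of its inputs lies in $\ran{\txMapC}$.

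First I would handle the delicate transaction cases. When $\labC = \txT$ has an input redeeming the image of an active contract, \Cref{thm:coher-compiler-generated} guarantees that $\txT$ is compiler-generated and reconstructs the continuation advertisement $\confF = \confAdvN{\complD}{\vec{z}}{w}{x}{j}{h}$, while \Cref{thm:coher-redeeming-contract-implies-valid-adv} guarantees that $\confF$ is valid in $\confG[\runS]$; hence the matching symbolic $init$/$call$/$send$ step is enabled and the corresponding clause of \Cref{def:coherence} applies, extending $\txMapC$ to the new names. When instead the redeemed inputs are images of deposits, the structure of deposit outputs (\Cref{def:deposit-output}) determines whether the transaction realises a $join$, $divide$, or $donate$, and \Cref{prop:coher-output-correspondence} certifies that these outputs are unspent and of the right value. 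Transactions touching no element of $\ran{\txMapC}$ fall under inductive case 2. In every case the constructed symbolic action is enabled by the semantics and preserves coherence, which completes existence.

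For uniqueness I would run the same induction in parallel on two coherent witnesses $(\runS,\txMapC,\keyMapC,\advMapC)$ and $(\runSi,\txMapCi,\keyMapCi,\advMapCi)$. The key structural fact is that the choice of symbolic action at each step is forced by $\runC$ together with the current maps: the classification above depends only on the computational label and on which outputs lie in the common image of the name map, not on any symbolic detail. Hence both derivations take the same clause of \Cref{def:coherence} at every step and produce configurations of identical shape, differing only in the fresh names, advertisement nonces, and participant identifiers introduced along the way. Using injectivity of $\txMapC$ and $\txMapCi$ (\Cref{lem:coher-txout-injective}) — together with the observation, following \Cref{prop:coher-output-correspondence}, that both maps have the same image, namely the outputs of $\bcB[\runC]$ that encode symbolic terms — the composition $\txMapCi^{-1}\circ\txMapC$ is a well-defined bijection on names; the analogous compositions for advertisements and participants are bijections by injectivity of $\advMapC,\advMapCi$ and of $\keyMapC,\keyMapCi$. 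A final induction shows that applying these three renamings to $\runS$ yields exactly $\runSi$.

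The main obstacle I anticipate is not conceptual but the sheer bookkeeping: verifying, for each of the twenty clauses of \Cref{def:coherence}, both that the reconstructed symbolic action is enabled (existence) and that it is the unique choice up to renaming (uniqueness). The genuinely load-bearing steps are the contract-redeeming transaction cases, where enabledness rests entirely on \Cref{thm:coher-compiler-generated} and \Cref{thm:coher-redeeming-contract-implies-valid-adv}; everything else is a matter of matching the computational structure against the definition and tracking fresh names. Care is also needed to confirm that the freshness ambiguity is the \emph{only} ambiguity — i.e.\ that $\advMapC$ and $\keyMapC$ are genuinely injective — so that the stated renamings are well-defined.
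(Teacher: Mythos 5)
Your proposal is correct and takes essentially the same route as the paper, whose proof is precisely the step-by-step parsing construction you describe: the base case recovers $\keyMapC$, the initial deposits, and $\txMapC$ from the initial prefix, and the inductive step classifies each appended label against the clauses of \Cref{def:coherence}, with \Cref{thm:coher-compiler-generated} and \Cref{thm:coher-redeeming-contract-implies-valid-adv} carrying the contract-redeeming cases. You actually treat the uniqueness half more explicitly than the paper does (it merely asserts that each parsing step is uniquely determined up to fresh names), and your caveat about the injectivity of $\advMapC$ and $\keyMapC$ is a fair observation about a detail the paper leaves implicit.
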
 
\paragraph*{Randomness} Every strategy, symbolic or computational, takes as input a random seed $\nonce{\pmvA}$. In order provide a proper translation between strategies we need to make a few remarks on this randomness source. 
Every computational strategy needs to use its randomness source in order to produce the key pairs that will be broadcast at the start of the run. This action does not have any symbolic counterpart, since it is assumed that symbolic participants can give authorizations without needing to worry about the low level signature details.
It is also very important that the random bits used for key generation are never reused when choosing which action to perform in later steps, since this would cause a correlation between the keys and the later outputs of the run, potentially leaking information about the secret keys.
So, in order to avoid this problem, the symbolic strategy that we are trying to emulate must be prevented from seeing the part of the random sequence used in the keys generation process.
For this reason we will split the sequence $r_{\pmvA}$ in two, $\pi_1 (r_{\pmvA}) $ and $\pi_2 (r_{\pmvA})$ where the first part can be used in strategies and is given as input to the $\stratS{\pmvA}$, while the second is only used for the initial keys generation. 
\paragraph*{From symbolic actions to computational labels}
Once we have converted $\runC$ to $\runS$ we can compute $\LabS = \stratS{\pmvA}(\runS, \pi_1(r_{\pmvA}))$. Then, we can transform each element $\labS$ of $\LabS$ into a computational label $\labC$, by following the corresponding case inside the coherence definition.
However we must ensure that the constraints posed in Definition \ref{def:computational-participant-strategies} are respected. 
In the following paragraphs we will show how to do that. When calling the compiler $\compile{}$ we will always assume that the auxiliary functions are the ones constructed by the parsing step.
\paragraph*{Advertisements}
\begin{enumerate}
\item (Incomplete advertisement).
  If $\labS=msg(\Theta)$, then $\labC$ is simply a message encoding $\Theta$.
\item (Complete initial advertisement).
  Remember that by Definition \ref{def:symbolic-participant-strategies} $\stratS{\pmvA}$ can choose $\labS= adv(\confF)$ with $\confF$ complete only if the advertisement has $w=\star$.
  If $\confF = \confAdvInitN{\clauseCallX{a}{b}}{\vec{z}}{\star}{h}$ then we can compile it to $\txT[\confF]$ by choosing the compiler's inputs in the following way:
  $\inCmp[i] = \txMapC(z_i)$ for all $i$; $\tAbsCmp$ is the time in $\confG[\runS]$; $\tRelCmp[1]$ is the biggest delay specified in a $\afterRelName$ in $\contrD$; $\tRelCmp[j]= 0$ for all $j>1$; and $\nonceCmp$ is a list of numbers chosen so that the compiled transaction $\txT[\confF]$ is different from any previously broadcast transaction.
  The corresponding computational label in this case is $\labC= \pmvA: \txT[\confF]\rightarrow \ast $.
\item (Complete continuation advertisement).
  If $\confF= \confAdvN{\complD}{\vec{z}}{\star}{x}{j}{h}$ then, similarly to the case above, we can construct $\txT[\confF]$ by setting the appropriate compiler inputs. 
  We then have that $\labS= adv(\confF)$ corresponds to $\labC=  \pmvA: \txT[\confF]\rightarrow \ast $.
\item (Complete destroy advertisement).
  If $\labS = adv(\confF)$ with $\confF=\confAdvDesN{\vec{z}}{\star}{h}$ then $\labC=  \pmvA: \txT[\confF]\rightarrow \ast$, where $\txT[\confF]$ is a transaction with inputs given by $\txMapC(z_j)$ and an irredeemable output that has script $\txscript =\mathsf{false}$.
\end{enumerate}
\paragraph*{Authorizations} 
\begin{enumerate}
\item (Advertised actions). If $\labS = auth-act(\pmvA,\confF)$ or $auth-in(\pmvA, \confF, z)$ then $\labC= \pmvA: m\rightarrow \ast $ where $m$ is a quadruple $ (\txT[\confF], j, wit, i)$ encoding the corresponding witness.
  Notice that since $\labS$ can be sent only if $\confF$ is in the configuration, there must have already been an action $adv(\confF)$ in the symbolic run. The fact that $\runS$ is obtained by parsing $\runC$, which is consistent, means that if this step is reached $\txT[\confF]$ is already present in the run. 
\item (Deposits). If $\labS$ is an authorization for a $join$, $divide$, or $donate$ action, then we are not sure if the corresponding transaction is already present in the run (since they do not require a symbolic advertisement).
  So, if there $\txT$ is not in the blockchain then $\labC= \pmvA: \txT\rightarrow \ast$. If instead it is already present we act like in item 1, with $\labC = \pmvA: m\rightarrow \ast$, and $m$ encodes the required signature. 		
\end{enumerate}
\paragraph*{Actions}
\begin{enumerate}
\item (Advertised actions). If $\labS$ is an $init$, $call$, $send$ or $destroy$ action, consuming a term $\confF$, then the corresponding computational label is $\labC= \txT[\confF]= \advMapC(\confF)$.
  Remembering again that in $\confF$ we must have $w=\star$, we will prove that $\labC$ satisfies the constraints given to symbolic strategies.
  Indeed, for $\labS$ to be possible the advertised term $\confF$ and all the authorizations must be in the configuration $\confG[\runS]$. Since $\runS$ is constructed by parsing $\runC$ this means that $\txT[\confF]$ has been sent on the computational run (after its timelocks are exhausted), and that all the witnesses that correspond to a symbolic authorization are present. However, since $w= \star$ these are all the needed witnesses and $\txT$ may be choosen as an action by a computational strategy. 
\item (Deposit actions).If $\labS$ is a $join$, $divide$, or $donate$ action, then the  corresponding computational label is $\labC= \txT$. Again, the parsing ensures us that $\txT$ and all its witnesses are already sent in the computational run.
\end{enumerate}

\section{Security of the compiler}
\label{sec:app-computational-soundness}

In this appendix we prove the main result of this paper: the security of the \illum compiler.
We will see that if the participants choose their computational strategy by translating a symbolic strategy, then no matter what the computational adversary does, it is possible, with overwhelming probability, to simulate any of its computational action in the symbolic world, therefore maintaining coherence.

\begin{thm}[Security of the compiler]
  \label{thm:computational-soundness}
  Let $\stratSSet$ be a set of computational strategies for all honest participants, and $\stratCSet$ be a set of computational strategies consisting of $\stratC{\pmvA} = \stratMap(\stratS{\pmvA})$ for all $\pmvA \in \PartT$ and of an adversary strategy $\stratC{\Adv}$. 
  Given the security parameter $\eta$ and any $k \in \mathbb{N}$, we define
  \begin{align*}
    P(r)= \ &\forall \runC \text{ conforming to } (\stratCSet,r) \text{ with } \vert \runC \vert \leq \eta^k,
    \\ &
    \exists \runS, \txMapC, \keyMapC, \advMapC, \text{ such that }
    \\ & \qquad
    \coher{\runS}{\runC}{\txMapC}{\keyMapC}{\advMapC} \text{ holds}		
    \\ & \qquad
    \text{and } \runS \text{ conforms to } (\stratSSet,\pi_1(r)). 
  \end{align*}
  then, the set $\setenum{r \vert P(r)}$ has overwhelming probability
\end{thm}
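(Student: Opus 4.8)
The plan is to prove the statement by a step-by-step simulation of the computational run, isolating the (rare) event in which the adversary forges an honest signature. Fix the security parameter $\eta$ and the bound $\eta^k$, and for a random source $r$ let $\runC$ be any run conforming to $(\stratCSet, r)$ with $|\runC| \le \eta^k$. I would build the witnessing symbolic run $\runS$ together with the maps $\txMapC, \keyMapC, \advMapC$ incrementally, by induction on $|\runC|$, reusing verbatim the parsing procedure of \Cref{section:translating-symbolic-strategies}: each computational label $\labC$ is mapped to the unique symbolic label $\labS$ prescribed by the corresponding clause of \Cref{def:coherence}, so that $\coher{\runS}{\runC}{\txMapC}{\keyMapC}{\advMapC}$ is preserved as an invariant. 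The base case is the initial-configuration clause of the coherence definition, applied to the coinbase transaction $\txT[0]$ and the broadcast public keys.

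For the inductive step I would carry out the twenty-case analysis on $\labC$. Delays, and transactions or messages with no input in $\ran{\txMapC}$, fall into the two clauses of the second inductive block and are matched trivially. Broadcasts encoding an incomplete advertisement, a complete advertisement $\confF$, or a witness signature are matched by a $\mathit{msg}$, an $\mathit{adv}$, or an authorization action, exactly as in the parsing. The crucial cases are the appends of a transaction $\txT$ redeeming an output $\txMapC(x)$ in $\ran{\txMapC}$: if $x$ names an active contract, \Cref{thm:coher-compiler-generated} forces $\txT$ to be compiler-generated and to arise from a continuation advertisement $\confF = \confAdvN{\complD}{\vec{z}}{w}{x}{j}{h}$ with $\complD \approx \contrD$, and \Cref{thm:coher-redeeming-contract-implies-valid-adv} certifies that $\confF$ is valid in $\confG[\runS]$; hence the matching $\mathit{call}(\confF)$ or $\mathit{send}(\confF)$ step is enabled. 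Deposit-spending transactions ($\mathit{join}$, $\mathit{divide}$, $\mathit{donate}$, $\mathit{destroy}$) are handled analogously, their enabledness following from \Cref{prop:coher-output-correspondence}, \Cref{prop:coher-encoding-outputs} and \Cref{prop:coher-destroyed-funds}. Since computational consistency forces every witness and the transaction itself to have been broadcast earlier, the advertisement and authorization steps that must precede each symbolic action are already present in $\runS$ by the induction hypothesis.

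It then remains to discharge conformance. Honest authorizations and delays appearing in $\runS$ are exactly those chosen by the honest symbolic strategies: this is the defining property of $\stratCSet = \stratMap(\stratSSet)$, since by construction $\stratC{\pmvA}$ emits the signature or message realising a symbolic choice $\labS$ if and only if $\stratS{\pmvA}$ selected $\labS$, with the key-generation bits $\pi_2(r)$ separated off so that $\stratS{\pmvA}$ is fed only $\pi_1(r)$. I would then define $\stratS{\Adv}$ as the strategy that, on a symbolic prefix $\runSi$ and the honest choices, internally recompiles $\runSi$ to the corresponding computational prefix, runs $\stratC{\Adv}$ on it, and outputs the symbolic translation of the returned computational label; one checks that this $\stratS{\Adv}$ satisfies the three well-formedness constraints of a symbolic adversarial strategy and that the run it induces, together with $\stratSSet$ and $\pi_1(r)$, is precisely $\runS$. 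This establishes $P(r)$ on every $r$ for which the simulation never gets stuck.

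The only way the simulation can get stuck is the following \emph{forgery event}: some appended transaction or broadcast carries a witness that $\versigName$ accepts under an honest participant $\pmvA$'s public key, yet no symbolic authorization of $\pmvA$ justifies it, equivalently $\stratC{\pmvA}$ never produced that signature. I expect the signature-unforgeability reduction, together with the exhaustive verification that a stuck simulation entails such a forgery, to be the main obstacle. The reduction would build a PPTIME forger that simulates the entire run $\runC$ (using $\stratCSet$ and $\stratC{\Adv}$), relays all of $\pmvA$'s honest signing requests to its signing oracle, and outputs the first accepting witness-signature with no matching oracle query; because $\runC$ has length at most $\eta^k$ and all strategies are PPTIME, this forger runs in polynomial time, so the forgery event has negligible probability. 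Consequently $\setenum{r \mid P(r)}$ contains the complement of a negligible set, and therefore has overwhelming probability, as claimed.
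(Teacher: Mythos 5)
Your proposal is correct and follows essentially the same route as the paper's proof: a step-by-step matching of computational labels to symbolic actions guided by the twenty coherence cases, with \Cref{thm:coher-compiler-generated} and \Cref{thm:coher-redeeming-contract-implies-valid-adv} discharging the contract-redeeming case and signature forgery isolated as the only failure mode of negligible probability. The paper phrases the argument as a maximal-coherent-prefix contradiction rather than a direct induction, and leaves the unforgeability reduction and the construction of $\stratS{\Adv}$ implicit where you spell them out, but these are presentational differences, not a different proof.
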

\begin{proof}
  Consider any given $r$, and take $\runC$ that satisfies the conformance hypothesis and the length requirement. Assume also that there is no corresponding symbolic run $\runS$. We will show that this happens with negligible probability.
  
  Take $\runCi$, the longest prefix of $\runC$ such that there exist a corresponding run $\runSi$ and maps $\txMapC$, $\keyMapC$, and $\advMapC$ for which $\coher{\runSi}{\runCi}{\txMapC}{\keyMapC}{\advMapC}$ holds.
  This $\runCi$ is not empty, since the initial prefix of $\runC$ (consisting of $\txT[0]$ and the broadcast of public keys) can always be transformed into a corresponding initial symbolic run (and the conformance with strategies trivially holds for initial runs). 
  We will now proceed by cases on all the possible labels $\labC$ that can extend $\runCi$ to show that either it's possible extending $\runSi$ to a run coherent with $\runCi \labC$ (reaching a contradiction), or that the adversary has managed to produce a signature forgery (which only happens with negligible probability). 
  \begin{enumerate}
  \item $\labC = \pmvB \rightarrow \ast : m$. Looking at the coherence definition we can see that we need to consider four distinct cases for the message: $(i)$ $m$ encodes an incomplete advertisement $\confF$; $(ii)$ $m$ encodes a transaction $\txT[\confF]$, where $\confF$ is a valid advertisement term; $(iii)$ $m$ is quadruple encoding a witness for an input (with a symbolic counterpart) of some $\txT$ that was already broadcast in the run after its timelocks have expired; $(iv)$ $m$ is any other message.
    The first two cases are handled with a $msg(\confF)$ and an $adv(\confF)$ symbolic action respectively, and in the fourth case the symbolic run ignores the message $m$. 
    This leaves us with case $(iii)$ where the adversarial strategy has been able to produce a witness: this  means either that it has forged a signature (and this happens with negligible probability), or that some honest $\pmvA$ chose to provide it. However, if that's true, then the symbolic strategy of $\pmvA$ must have enabled the authorization at some point, since $\stratC{\pmvA} = \stratMap(\stratS{\pmvA})$.  This means that $\stratS{\Adv}$ can choose it as next action $\alpha$, meaning that $\runSi \xrightarrow{\alpha} \confG$ is still coherent with $\runC \labC$.
    
  \item  $\labC = \txT$. Again, we have multiple cases.
    \begin{enumerate}
    \item If $\txT$ does not have any inputs in $\ran{\txMapC}$ then coherence is achieved without adding any additional step to $\runSi$.
    \item If $\txT$ has some inputs in $\ran{\txMapC}$, and one of them is the image of an active contract, then, by Theorem \ref{thm:coher-compiler-generated} we have that $\txT$ must be a compiler-generated transaction $\txT[\confF]$.
      Since $\labC$ has been chosen by $\stratC{\Adv}$, we know it must follow the rules for strategies, so $\txT$ has been broadcast earlier (and not before its timelock are over), and all its witnesses have been broadcast too.
      This means that there has been a corresponding symbolic advertisement, (since $\runSi$ is coherent to $\runCi$), so $\confF$ has been included in the configuration. Theorem \ref{thm:coher-compiler-generated} also tells us that $\confF$ is in the form  $\confAdvN{\complD}{\vec{z}}{w}{x}{j}{h}$ and Theorem \ref{thm:coher-redeeming-contract-implies-valid-adv} proves that $\confF$ is valid in $\confG[\runSi]$. Notice also that, thanks to the conditions on strategies, we know that $\txT$'s witness have been broadcast in some previous step of the run, and, by coherence, this means that all the required symbolic authorizations are present in $\confG[\runS]$.
      The validity of $\confF$ and the presence of the deposit's authorization ensure that the continuation action corresponding to $\txT$ (either $call(\confF)$ or a $send(\confF)$) can be performed in $\runSi$, meaning that we can extend $\runSi$ and still achieving coherence. 
    \item If $\txT$ has some inputs in $\ran{\txMapC}$, but none of them is the image of any active contract, we have 3 possible situations:  $\txT= \txT[\confF]$ is compiler-generated starting from an initial advertisement $\confF$; $\txT$ is a transaction associated to a deposit action $join$, $divide$, or $donate$; or $\txT$ is some other transaction.
      In the first case we can carry out the same reasoning of step (b) to conclude that $\labS= init(\confF)$ is a continuation that achieves coherence.
      In the second case we can achieve coherence by letting $\labS$ be the corresponding symbolic deposit operation. In this case too all deposits and authorizations must be present in the run.
      In the third case we will choose $\labS$ to be a destroy operation. Again, we notice that $\txT$ must have been broadcast at some point, and since it is not a compiler generated transaction, nor it does correspond to a deposit action, the broadcast message falls into the case described by item 4 of the first inductive case of Definition \ref{def:coherence}; and the broadcast is thus mirrored in the symbolic run by the advertisement $adv(\confF)$, where $\confF= \confAdvDesN{\vec{z}}{w}{h}$. In this case too all of $\txT$ witnesses must have been advertised, meaning that all authorization for deposits $z_j$ are present in $\confG[\runSi]$. This means that in this case too we can achieve coherence by extending $\runSi$ with $\labS =destroy(\confF)$. 
    \end{enumerate}
    
  \item $\labC = \delta$. Here we can extend $\runSi$ with $delay(\delta)$. This trivially keep coherence between runs. Moreover the resulting symbolic run still conforms to the strategies: in the computational case all honest participant had to agree on the delay, which, by the definition of $\stratMap$ implies that it is also the case for the symbolic strategies. 
  \end{enumerate}  
  In each of these cases we manage, with overwhelming probability, to extend $\runSi$ to something that is coherent with $\runCi \labC$ (against the maximality of the prefix $\runCi$), and this concludes our proof.	
\end{proof}

\section{Compiling \hellum into \illum}
\label{sec:hllc}
\label{sec:app-high-level}

We describe in this section how to compile high-level contracts written in \hellum to the intermediate-level language \illum, as sketched in~\Cref{sec:hellum}.

We start by providing more details about \hellum, referring to \url{\github} for its concrete syntax and typing rules.
A contract has a set of variables that define its state,
and a set of functions with an imperative, loop-free body that can modify the contract state and transfer tokens.
Base types comprise \hllinline{bool}, \hllinline{int}, \hllinline{uint}, \hllinline{string}, and \hllinline{address}. Variables can also be \hllinline{mapping}s from base types to base types.  
A function can have \emph{modifiers} that must be satisfied before it can be called (as in Solidity), and \emph{continuations} that specify which functions can be called after it.
The general form of contracts is in~\Cref{fig:hll:general-form}.

\begin{figure}
\centering
\begin{tcolorbox}[left=2pt,right=2pt,top=0pt,bottom=0pt,colback=gray!3,boxrule=1pt]
\begin{lstlisting}[language=hellum,morekeywords={f,g,constructor,Foo}]
contract Foo {
  int x;            // integer
  uint u;         // unsigned integer	    
  address a;    // address (externally-owned)    
  mapping (address => uint) m;
  ...                  // other state variables
    
  constructor(...) { ... }	// Entry point    

  function f(int x, ..., address b, ...)
   input(e:T)  // Receive tokens
   auth(c)       // Authorizations
   after(t)      // Time constraint
   ...               // other modifiers...
  {
    int y;         // local variables
    ...	// function body
  } next(g1,...,gn) // Possible continuations

  ...   // other functions

  function g(...) view { // pure function
    ... // return expression
  }
}
\end{lstlisting}
\end{tcolorbox}
\negcaptionspace
\caption{General form of \hellum contracts.}
\label{fig:hll:general-form}
\end{figure}

\hellum functions have four possible modifiers:
\begin{itemize}

\item \hllinline{after(t)}
  requires that the function is called only after (absolute) time \code{t}.
  Here, we abstract from the granularity of time:
  it could be \eg a block number (as in Solidity) or a timestamp;
  
\item \hllinline{auth(a)}
  requires that the function call is authorized by address \code{a} (through \code{a}'s private key);

\item \hllinline{input(e:T)}
  requires that \code{e} tokens of type \code{T}
  are sent to the contract alongside with the function call, by any address;

\item \hllinline{next(g1 ... gn)} 
    specifies the functions that can be called after the current function has been executed. When the \hllinline{next} modifier is omitted, any continuation (except the constructor) is possible.
\end{itemize}

Note that the expressions appearing within the \hllinline{after} modifier may only depend on the contract variables, while the expressions within \hllinline{input} and \hllinline{auth} may also depend on the function parameters.
A function can use multiple instances of the same modifiers, except for \hllinline{next}.

Function bodies are as in Solidity, but for the absence of loops and  contract calls: they comprise assignments (to variables and mappings), sequences of commands, conditionals, and \hllinline{require(e)} statements, which make the function fail when the expression \code{e} evaluates to false.
The command \hllinline{a.transfer(e:T)} transfers \code{e} units of token \code{T} to address \code{a}. 
Local variables, not contributing to the contract state, can be declared and used. 
Expressions are standard, and follow the Solidity syntax. The special expression \hllinline{balance(T)} gives the number of units of token \hllinline{T} currently available in the contract.
Expressions can contain calls to pure functions (tagged as \hllinline{view} in the contract).
The \hellum compiler includes a semantic analyzer that performs type checking and other checks to ensure the well-formedness of contracts.   
\paragraph*{Compilation: normal form}

The first phase of the \hellum compiler is a series of code transformations to bring contracts in the normal form described in~\Cref{sec:hellum}.
This phase is split into several steps:
\begin{enumerate}

\item macro-expand calls to pure functions into the corresponding expressions;

\item rewrite each function as a chain of conditional statements, and merge the \hllinline{require} statements in a single \hllinline{require} at the top of the function;

\item rewrite the body of each conditional branch in static-single-assignment (SSA) form~\cite{Rosen88popl}, where each variable is written exactly once. Besides the contract variables, in this step we also add auxiliary variables keeping track of the varying contract token balances;

\item rewrite the body of each conditional branch so that the token transfers occur before all the assignments;

\item rewrite the body of each conditional branch so that all the assignments are folded into a single, simultaneous assignment of all the contract variables.

\end{enumerate}

Below, we illustrate the code transformations 2 to 5 through a series of examples, referring to the repository \url{\github} for the full details and for the transformation from  normal form contracts to \illum, as sketched in~\Cref{sec:hellum}.  

For step (2) of the normal form construction, we match patterns of the function body, and rewrite them to pull \hllinline{require} statements out of conditional blocks, and push \hllinline{transfer} and assignment commands within conditional blocks. These transformations modify the guards conditionals and other expressions preserving the semantics.
We illustrate the patterns through code snippets, showing how the left part is transformed into the right part.

Payments and assignments before a conditional are pushed within the conditional, adapting the guards to match the state updates. 
For instance:
\begin{tcolorbox}[left=2pt,right=2pt,top=0pt,bottom=0pt,colback=gray!3,boxrule=1pt]
\begin{minipage}{0.45\columnwidth}
\begin{lstlisting}[language=hellum]
a.transfer(y:T);
if (balance(T)<7) {
    // c1
}
else {
    // c2
}
\end{lstlisting}
\end{minipage}
\hfill
\vline
\hfill
\begin{minipage}{0.45\columnwidth}
\begin{lstlisting}[language=hellum]
if (balance(T)-y<7) {
    a.transfer(y:T);
    // c1
}
else {
    a.transfer(y:T);
    // c2
}
\end{lstlisting}
\end{minipage}
\end{tcolorbox}    

\begin{tcolorbox}[left=2pt,right=2pt,top=0pt,bottom=0pt,colback=gray!3,boxrule=1pt]
\begin{minipage}{0.45\columnwidth}
\begin{lstlisting}[language=hellum]
x=y-5;
if (x<3) {
    // c1
}
else {
    // c2
}
\end{lstlisting}
\end{minipage}
\hfill
\vline
\hfill
\begin{minipage}{0.45\columnwidth}
\begin{lstlisting}[language=hellum]
if (y-5<3) {
    x=y-5;
    // c1
}
else {
    x=y-5;
    // c2
}
\end{lstlisting}
\end{minipage}
\end{tcolorbox}    

The commands (of any kind) after a conditional are pushed within all the conditional branches. For instance: 
\begin{tcolorbox}[left=2pt,right=2pt,top=0pt,bottom=0pt,colback=gray!3,boxrule=1pt]
\begin{minipage}{0.45\columnwidth}
\begin{lstlisting}[language=hellum]
if (x<3) {
    // c1
}
else {
    // c2
}
x=x+1;
\end{lstlisting}
\end{minipage}
\hfill
\vline
\hfill
\begin{minipage}{0.45\columnwidth}
\begin{lstlisting}[language=hellum]
if (x<3) {
    // c1
    x=x+1;
}
else {
    // c2
    x=x+1;
}
\end{lstlisting}
\end{minipage}
\end{tcolorbox}

Nested conditional statements are flattened:
\begin{tcolorbox}[left=2pt,right=2pt,top=0pt,bottom=0pt,colback=gray!3,boxrule=1pt]
\begin{minipage}{0.45\columnwidth}
\begin{lstlisting}[language=hellum]
if (x<=9) {
    if (x>5) {
        // c1
    }
    else {
        // c2
    }
}
else {
    // c3
}
\end{lstlisting}
\end{minipage}
\hfill
\vline
\hfill
\begin{minipage}{0.45\columnwidth}
\begin{lstlisting}[language=hellum]
if (x<=9 && x>5) {
    // c1;
} 
else if (x<=9) {
    // c2
} 
else {
    // c3
}
\end{lstlisting}
\end{minipage}
\end{tcolorbox}

Each \hllinline{require} command is moved to the top of the function by swapping it with the previous command, and updating the guard accordingly. For instance: 
\begin{tcolorbox}[left=1pt,right=1pt,top=0pt,bottom=0pt,colback=gray!3,boxrule=1pt]
\begin{minipage}{0.495\columnwidth}
\begin{lstlisting}[language=hellum]
x=x+y;
require x<500;
\end{lstlisting}
\end{minipage}
\hfill
\vline
\hfill
\begin{minipage}{0.495\columnwidth}
\begin{lstlisting}[language=hellum]
require x+y<500;
x=x+y;
\end{lstlisting}
\end{minipage}
\end{tcolorbox}

\begin{tcolorbox}[left=2pt,right=2pt,top=0pt,bottom=0pt,colback=gray!3,boxrule=1pt]
\begin{minipage}{0.495\columnwidth}
\begin{lstlisting}[language=hellum]
a.transfer(x:T);
require balance(T)>5;
\end{lstlisting}
\end{minipage}
\hfill
\vline
\hfill
\begin{minipage}{0.495\columnwidth}
\begin{lstlisting}[language=hellum]
require balance(T)-x>5;
a.transfer(x+y:T);
\end{lstlisting}
\end{minipage}
\end{tcolorbox}

The most complex case is when a a \hllinline{require} occurs in each branch of a conditional statement. In this case, we pull all the \hllinline{require} out of the branches, and we combine the guards in the \hllinline{require} commands with the guards of the conditional, obtaining a single \hllinline{require}. For instance: 
\begin{tcolorbox}[left=2pt,right=2pt,top=0pt,bottom=0pt,colback=gray!3,boxrule=1pt]
\begin{minipage}{0.40\columnwidth}
\begin{lstlisting}[language=hellum]
if (x<2) {
    require x>0;
    // c1
} else if (x<4) {
    require x>2;
    // c2
} else {
    require x<8;
    // c3
}
\end{lstlisting}
\end{minipage}
\hfill
\vline
\hfill
\begin{minipage}{0.61\columnwidth}
\begin{lstlisting}[language=hellum]
require ((x<2 && x>0) || 
    (x>=2 && (x<4 && x>2))) || 
    ((x>=2 && x>=4) && x<8);
if (x<2) {
    // c1
}
else if (x<4) {
    // c2
}
else { 
    // c3
}
\end{lstlisting}
\end{minipage}
\end{tcolorbox}


For step (3) of the normal form construction, we rewrite every conditional branch in SSA form. We do so by introducing an expression \hllinline{balance_pre(T)} (which returns the amount of token \hllinline{T} stored in the contract before the function invocation) as well as local variables when they are needed.
For illustration, we consider a single branch and assume that  \hllinline{a}, \hllinline{x},  \hllinline{y} are global variables of the contract, while \hllinline{z} is a function parameter.

\begin{tcolorbox}[left=2pt,right=2pt,top=0pt,bottom=0pt,colback=gray!3,boxrule=1pt]
\begin{lstlisting}[language=hellum, morekeywords={C,f}]
y = x + balance(T);
x = z;
a.transfer(x:T);
a.transfer(y:T);
y = balance(T) + z;	 
\end{lstlisting}
\end{tcolorbox}

The transformation introduces new local variables at each step, to keep track of the values of \hllinline{a,x,y,z} and of the contract balance.
For instance, the variables \hllinline{x_i} are introduced at every assignment of \hllinline{x}. 
A new variable \hllinline{bal_T_i} is introduced upon each \hllinline{transfer} to keep track of the balance of token \code{T}. 
Initially, we let \hllinline{bal_T_i} to be
\hllinline{balance_pre(T)} (plus eventual function inputs).
Our branch ends up rewritten as:
\begin{tcolorbox}[left=2pt,right=2pt,top=0pt,bottom=0pt,colback=gray!3,boxrule=1pt]
\begin{lstlisting}[language=hellum, morekeywords={C,f}]
x_0,y_0,a_0,z_0,bal_T_0 = 
x,   y,   a,   z,   balance_pre(T);
y_1 = x_0+bal_T_0;
x_1 = z_0;
a_0.transfer(x_1:T);
bal_T_1 = bal_T_0-x_1;
a_0.transfer(y_1:T);
bal_T_2 = bal_T_1-y_1;
y_2 = bal_T_2+z_0;
x,y,a,bal_T_fin = x_1,y_2,a_0,bal_T_2;
\end{lstlisting}
\end{tcolorbox}

For step (4), we now move the two \lstinline[language=hellum]|transfer()| statements to the top by exchanging them with the assignments. 
To do this, we replace the variables appearing in \lstinline[language=hellum]|transfer()| with the expression on the right hand side of the assignment. In our example, we get:
\begin{tcolorbox}[left=2pt,right=2pt,top=0pt,bottom=0pt,colback=gray!3,boxrule=1pt]
\begin{lstlisting}[language=hellum, morekeywords={C,f}]
a.transfer(z:T);
a.transfer(x+balance_pre(T):T);
x_0,y_0,a_0,z_0,bal_T_0 = 
x,   y,   a,   z,   balance_pre(T);
y_1 = x_0+bal_T_0;
x_1 = z_0;
bal_T_1 = bal_T_0-x_1;
bal_T_2 = bal_T_1-y_1;
y_2 = bal_T_2+z_0;
x,y,a,bal_T_fin = x_1,y_2,a_0,bal_T_2;
\end{lstlisting}
\end{tcolorbox}

For step (5), we collapse all the assignments into a single simultaneous one, that assigns the new values to the contract variables. 
In our example:
\begin{tcolorbox}[left=2pt,right=2pt,top=0pt,bottom=0pt,colback=gray!3,boxrule=1pt]
\begin{lstlisting}[language=hellum, morekeywords={C,f}]
a.transfer(z:T);
a.transfer(x+balance_pre(T):T);
x,y,a,bal_T_fin = z,((balance_pre(T)-z)-(x+balance_pre(T)))+z,a,(balance_pre(T)-z)-(x+balance_pre(T));
\end{lstlisting}
\end{tcolorbox}

From this last normal form, we can generate the \illum function clauses \illinline{f_run} and \illinline{f_next} as discussed in~\Cref{sec:hellum}.

\paragraph{On loops in \hellum}
The \hellum language does not feature loops. On the one hand, this makes the compilation to \illum easier, but on the other hand this reduces the expressivity of \hellum. 
Allowing loops in \hellum could be done in three ways.
The simplest option is to extend the language with specific iterators on key-value maps (\eg, map, filter, fold). These operators could then be compiled in corresponding operators in a suitably extended \illum.
More specifically, this would only require extending the \illum and UTXO script expressions with suitable operators.
Since such loops would be bounded, this option does not strictly require a gas mechanism to prevent divergent behaviours.
A second option would be to allow arbitrary (unbounded) loops in \hellum
and suitably extend the \illum expressions
with operators that can simulate such arbitrary \hellum loops (\eg, a fixed point operator). Note that such an extension would make the evaluation of \illum expressions potentially divergent, hence it would require a gas mechanism or some other means to bound the computation.
For example, the Cardano scripting language (Plutus Core) 
is an untyped lambda calculus, thus allowing for unbounded computation,
but the Cardano platform limits the execution of scripts to a given amount of computation steps.
A last option would be to allow arbitrary \hellum loops but compile them to a \emph{chain} of recursive \illum clauses. Intuitively, calling such a recursive clause would only perform a part of the loop (say, the first iteration), and then call itself with the updated state. The recursion then stops whenever the loop is over, and proceeds to call another clause. While this mechanism effectively makes \illum Turing-complete, it requires the users to perform a potentially large number of calls, hence to append a large number of transactions on the blockchain, paying the fees for all of them. Further, this could lead to Denial of Service attacks. A malicious participant could call a \hellum function which performs a long loop, pay the fees for the first few iterations and then stop interacting. In this way, the other participants are prevented to call other methods until they first complete the long loop by paying all the fees themselves.
Worse, there is nothing stopping a malicious participant from calling the method again after its completion, blocking honest users from accessing the contract and forcing them to pay the fees once again. 
Therefore, this last option for handling loops would require more complex protocols to counter attacks like the ones described above.

\end{document}